\documentclass{article}
\usepackage{graphicx} 
\usepackage{amsmath, amssymb, placeins}
\usepackage{amsthm}
\usepackage{bbm}
\usepackage{xcolor}
\usepackage{prodint}
\usepackage{cite}
\usepackage[a4paper, total={7in, 10in}]{geometry}

\title{Covariance Correction for Permutation Statistics in Multiple Testing Problems}
\newtheorem{example}{Example}
\newtheorem{ass}{Assumption}
\newtheorem{rem}{Remark}
\newtheorem{theorem}{Theorem}
\newtheorem{defi}{Definition}

\newcommand{\E}{\mathrm{E}}
\newcommand{\Cov}{\mathrm{Cov}}

\begin{document}
\author{Merle Munko$^{1}$ and Paavo Sattler$^{2,3}$}
\date{ }
\maketitle
\begin{center}\noindent${}^{1}$ {Department of Mathematics, Otto von Guericke University Magdeburg, Universitätsplatz 2, 39106 Magdeburg, Germany\\\mbox{ }\hspace{1 ex}email: merle.munko@ovgu.de} \\
 \noindent${}^{2}$ {Department of Statistics, TU Dortmund University, Joseph-von-Fraunhofer-Straße 2-4, 44227 Dortmund, Germany\\\mbox{ }\hspace{1 ex}email: paavo.sattler@tu-dortmund.de }\\
  \noindent${}^{3}$ {Institute of Statistics, RWTH Aachen University, 
Kreuzherrenstraße 2,
52062 Aachen, Germany}

\vspace*{8.25ex}
 \end{center}


\begin{abstract}
In qualitative statistics, permutation tests are very popular, mainly because of their finite-sample exactness under exchangeability. 
However, in non-exchangeable settings, the covariance structure of permuted statistics typically differs from that of the original statistic. A common solution is studentization, which restores asymptotic correctness for general hypotheses while preserving exactness under exchangeability.
In multiple testing settings, however, standard studentization fails to provide the correct joint limiting distribution.
Existing solutions such as prepivoting address this issue but are computationally expensive and therefore rarely used in practice.
We propose a general, computationally more efficient methodology that overcomes this fundamental limitation. 
By appropriately correcting the covariance matrix of multiple permutation statistics, our approach restores the correct joint asymptotic dependence structure, enabling asymptotically valid permutation tests in broad multiple testing frameworks.
The proposed method is highly flexible: it accommodates singular covariance structures and is not tied to specific parameters, test statistics, or permutation schemes. This generality makes it applicable across a wide range of problems. Extensive simulation studies demonstrate that our approach results in reliable inference and outperforms existing methods across diverse settings.
\end{abstract}

\section{Introduction}
Permutation tests are a central tool in nonparametric statistical inference and are widely applied across a broad range of settings and test statistics. 
A key advantage of permutation tests is that they avoid parametric distributional assumptions. Moreover, under exchangeability, they are finitely exact \cite{hemerik2018exact}. Consequently, a broad literature work has developed permutation tests under the exchangeability assumption; see, for example, the textbooks by \cite{manly2018randomization, pesarin2001multivariate, westfall1993resampling}.

In practice, however, exchangeability is often violated. In such settings, applying unstudentized permutation tests can lead to inflated type-I error rates \cite{Huang2006}. To address this issue, \cite{chung2013, janssen1997studentized, janssen2005resampling, neuhaus1993conditional} introduced studentized permutation tests that remain asymptotically valid even under non-exchangeable data.
Since then, studentization has become a key principle for permutation tests in non-exchangeable settings and the idea was taken up by several authors across diverse statistical problems; see, for example, \cite{ditzhaus2021inferring, ditzhaus2021qanova, ditzhaus2020more, ditzhaus2023casanova, ditzhaus2020bootstrap, ditzhausVC, ditzhaus2023studentized, dobler2024nonparametric, dobler2024two, dobler2018bootstrap, friedrich2015permuting, konietschke2012studentized, munko2024rmst, munko2025multiple, munko2025effect, neubert2007studentized, pauly2016permutation, pauly2015asymptotic, wu2021randomization}.

In multiple testing problems, however, studentization alone is typically insufficient to resolve discrepancies in joint limiting distributions. While it corrects the variance for marginal distributions, it does not account for the dependence structure between permutation statistics. As a consequence, the joint behavior of studentized permutation statistics can still be misspecified. This issue arises even in simple settings; for instance, Tukey-type contrasts \cite{tukey} for all pairwise comparisons of 
$k>2$ means generally induce incorrect dependence structures of the permutation statistics.
As a result, when an accurate approximation of the joint limit distribution is required, authors often resort to alternative resampling methods, such as different kinds of bootstrapping; see, e.g., \cite{ditzhausVC, munko2024rmst, munko2025multiple, Munko2025AStA, munko2025, romano2005exact}. However, unlike permutation tests, bootstrap methods do not possess the appealing finite-sample properties.
Therefore, 
the prepivoting idea, introduced by \cite{beran1988balanced,beran1988prepivoting}, was employed by \cite{chung2016multivariate} to overcome the problem of differing joint limit distributions.
However, it comes at a substantial computational cost: its implementation involves bootstrap distributions within each permutation sample, resulting in nested resampling loops. This computational burden makes the method impractical in many applications and likely explains its limited use in practice.

The contribution of this paper is the development of a novel permutation procedure that overcomes these limitations.
Our approach avoids nested resampling and is therefore computationally more efficient than the method in \cite{chung2016multivariate}, while restoring the correct joint limit distribution.

The proposed procedure is highly general and applies to a wide range of settings and parameters, including:
\begin{itemize}
    \item location parameters such as the mean \cite{Baumeister2025}, the Mann-Whitney effect \cite{thiel2024,rubarth2022}, and quantiles \cite{baumeister2024, baumeister2025early},
    \item parameters from survival analysis such as the restricted mean survival time \cite{munko2024rmst,munko2025effect} and the restricted mean time lost \cite{munko2025multiple},
    \item  dispersion parameters such as vectorized covariance matrices \cite{sattler2022} or correlation matrices \cite{sattler2023},
\end{itemize}
and many others.
Moreover, the method is not restricted to specific test statistics. 
Beyond univariate linear statistics, it also allows for quadratic form statistics and one-sided or asymmetric formulations.
We exemplify the flexibility of the proposed method by considering different multiple testing problems, including Dunnett-, grand-mean-, and Tukey-type contrasts of group means as well as contrasts of restricted mean survival times in a one-way layout using quadratic-form–based test statistics.

The remainder of this paper is organized as follows:
In Section~\ref{sec:methods}, we present a very general setup for multiple hypotheses, introduce some standard assumptions like asymptotic normality, and propose possible corrections for a potentially wrong covariance matrix of the permutation test statistics under different assumptions.
The general methodology is exemplified in several examples throughout Section~\ref{sec:methods}.
Furthermore, specific examples are presented in detail in Section~\ref{sec:examples}. This includes applications for uni- and multivariate mean comparisons as well as an application from survival analysis.
In addition, we present results of extensive simulation studies for the three different examples.
Finally, the methodology is discussed in Section~\ref{sec:discussion}.
The appendix includes a section on finite-sample properties of the proposed methods (Appendix~\ref{sec:finite}), the proof of the stated theorems (Appendix~\ref{sec:proofs}), and further simulation results (Appendix~\ref{sec:simu}).

\section{Methodology}\label{sec:methods}
In Section~\ref{ssec:notation}, we firstly introduce the notation used in this work.
This includes the multiple testing problem, the (permutation) quantities, and the assumptions.
Under the stated assumption, the naive permutation test statistics will generally be not consistent.
Thus, we continue by considering three different sets of assumptions under which we will derive consistent multiple permutation tests in Sections~\ref{ssec:fullrank}--\ref{ssec:convrate}.

\subsection{Notation}\label{ssec:notation}

\paragraph{Testing problem}
We start by formulating the multiple testing problem.
Let $\boldsymbol{\theta} = (\boldsymbol{\theta}_1',...,\boldsymbol{\theta}_L')'\in\mathbb R^r$ be the parameter of interest for $\boldsymbol{\theta}_\ell \in \mathbb R^{r_\ell}, \ell\in\{1,...,L\},$ where $L\in\mathbb N$ is the number of hypotheses, $r_1,...,r_L \in\mathbb N$, and $r := \sum_{\ell=1}^L r_\ell$.
Then, we consider multiple tests for the null hypotheses
\begin{align}
    \label{eq:testprob}
    \mathcal H_{0,\ell}: \boldsymbol{\theta}_\ell = \boldsymbol{\theta}_{0,\ell}, \quad \ell\in\{1,...,L\},
\end{align}
 and set $\boldsymbol{\theta}_0 = (\boldsymbol{\theta}_{0,1}',...,\boldsymbol{\theta}_{0,L}')' \in\mathbb R^r$ for some $L \in\mathbb N$.
Of course, this also covers global testing problems with $L=1$ hypothesis as a special case, but is even more general.

Suppose we have an estimator $\widehat{\boldsymbol{\theta}} = (\widehat{\boldsymbol{\theta}}_1',...,\widehat{\boldsymbol{\theta}}_L')'$ for $\boldsymbol{\theta}$ fulfilling a limit theorem under $\bigcap_{\ell\in\mathcal T}\mathcal{H}_{0,\ell}$, that is
\begin{align}\label{eq:asnormal}
    \sqrt{n}(\widehat{\boldsymbol{\theta}}_\ell  - {\boldsymbol{\theta}}_{0,\ell})_{\ell\in\mathcal T} \xrightarrow{d} (\mathbf Z_\ell)_{\ell\in\mathcal T}
\end{align}
as $n\to\infty$ for all $\mathcal T \subset \{1,...,L\}$, where $n$ is the number of data points and $\mathbf Z =(\mathbf Z_1',...,\mathbf Z_L')' \sim \mathcal N_r(\mathbf 0, \boldsymbol{\Sigma})$ is a centered $r$-variate normally distributed random variable.
Here and throughout, $\mathbf{0}=(0,\ldots,0)'$ denotes the zero vector.
Furthermore, let $\widehat{\boldsymbol{\Gamma}}$ be a consistent estimator for some constant ${\boldsymbol{\Gamma}}$ in a metric space $\mathbb G$.
Here, ${\boldsymbol{\Gamma}}$ can be interpreted as nuisance parameters that can be estimated consistently from the data as, e.g., (co-)variances etc.

We consider test statistics of the following (very general) form
\begin{align*}
   W_\ell = w_\ell(\sqrt{n}(\widehat{\boldsymbol{\theta}}_\ell  - {\boldsymbol{\theta}}_{0,\ell}), \widehat{\boldsymbol{\Gamma}}), 
\end{align*}
where $w_\ell: \mathbb R^{r_\ell} \times \mathbb G \to \mathbb R$ is a function that is continuous in $(\mathbf{x}, \boldsymbol{\Gamma})$ for all $\mathbf x \in \mathrm{Im}(\Cov(\mathbf{Z}_\ell)) \subset \mathbb R^{r_\ell}$ with $\mathrm{Im}(\Cov(\mathbf{Z}_\ell))$ denoting the image of $\Cov(\mathbf{Z}_\ell)$.
It should be noted that the continuity is required for several first arguments $\mathbf x$ whereas it only needs to hold in the constant $\boldsymbol{\Gamma}$
for the second argument. This asymmetry results from the different convergence statements: $\sqrt{n}(\widehat{\boldsymbol{\theta}}_\ell  - {\boldsymbol{\theta}}_{0,\ell})$ is asymptotically normal, i.e., it converges to a distribution living on the space $\mathrm{Im}(\Cov(\mathbf{Z}_\ell))$. In contrast,  $\widehat{\boldsymbol{\Gamma}}$ converges to the constant $\boldsymbol{\Gamma}$ in probability.
Quadratic form-based test statistics and multiple contrast tests (MCTs) are an important special case:

\begin{example}\label{QFexample}
In order to illustrate the rather abstract notation, let us consider $k$ samples of i.i.d.~$d$-variate data points $$\mathbf{X}_{i1},...,\mathbf{X}_{in_i}, \quad i\in\{1,...,k\},$$ with means $\boldsymbol{\mu}_i := \E(\mathbf{X}_{i1})$ and existing covariance matrices $\Cov(\mathbf{X}_{i1})$. 
Moreover, we assume that $n_i/n \to \kappa_i \in (0,1)$ as $n := \sum_{i=1}^k n_i\to\infty$.

Now, we might be interested in comparing the mean vectors $\boldsymbol{\mu}_i, i\in\{1,...,k\}$, of the different groups. This can be done with different parameters of interest:
If there is a reference group, e.g.\ group 1, we would choose the differences $\boldsymbol{\mu}_{\ell+1} - \boldsymbol{\mu}_1, \ell\in\{1,...,k-1\}, $ as parameters $\boldsymbol{\theta}_\ell$.
Otherwise, we can either consider the differences $\boldsymbol{\theta}_\ell = \boldsymbol{\mu}_\ell - \overline{\boldsymbol{\mu}}, \ell\in\{1,...,k\},$ to the average mean $\overline{\boldsymbol{\mu}} = \frac{1}{k}\sum_{i=1}^k \boldsymbol{\mu}_i$ or all pairwise differences $\boldsymbol{\mu}_{\ell_1} - {\boldsymbol{\mu}}_{\ell_2}, \ell_1,\ell_2\in\{1,...,k\}, \ell_1 > \ell_2$.

We can generalize this by writing $\boldsymbol \theta_\ell := \mathbf H_\ell \boldsymbol\mu$ for matrices $\mathbf H_\ell \in \mathbb R^{r_\ell\times kd}$ with $\boldsymbol{\mu} := (\boldsymbol{\mu}_1',...,\boldsymbol{\mu}_k')'$. 
Suitable estimators are $\widehat{\boldsymbol \theta}_\ell :=  \mathbf H_\ell \widehat{\boldsymbol\mu}$, where $\widehat{\boldsymbol\mu}:= (\widehat{\boldsymbol{\mu}}_1',...,\widehat{\boldsymbol{\mu}}_k')'$ is a vector containing all empirical means $$\widehat{\boldsymbol{\mu}}_i := \frac{1}{n_i}\sum_{j=1}^{n_i} \mathbf{X}_{ij}, \quad i\in\{1,...,k\}.$$ The central limit theorem ensures the asymptotic normality \eqref{eq:asnormal} with $\boldsymbol{\Sigma} := \mathbf H  \boldsymbol{\Gamma}   \mathbf H'$, where $\mathbf H := ( \mathbf{H}_1', ... \mathbf{H}_L')'$ and $\boldsymbol{{\Gamma}} := \oplus_{i=1}^k \kappa_i^{-1}  \Cov(\mathbf X_{i1}) \in \mathbb G$ and $\mathbb G$ denotes the set of all symmetric and positive semi-definite $kd \times kd$ matrices. 

Suitable test statistics for multivariate quantities are so-called \emph{quadratic form-based test statistics} \cite{sattler2025quadraticformbasedmultiple} $$W_\ell = n(\widehat{\boldsymbol{\theta}}_\ell - \boldsymbol{\theta}_{0,\ell})' \cdot  {\mathbf{M}}_\ell(\widehat{\boldsymbol{\Gamma}})  \cdot(\widehat{\boldsymbol{\theta}}_\ell - \boldsymbol{\theta}_{0,\ell}), \quad \ell\in\{1,...,L\},$$ for functions $$\mathbf{M}_\ell: \mathbb G \to \{\mathbf M\in \mathbb R^{r_\ell\times r_\ell} \mid \mathbf M \text{ is symmetric and positive semi-definite}\}$$ that are continuous in $\boldsymbol{\Gamma}$.
Hence, let $w_\ell(\mathbf{x}, \mathbf{G}) = \mathbf{x}' \mathbf{M}_\ell(\mathbf{G}) \mathbf{x}$.
Then, $W_\ell$ are classical quadratic form-based test statistics.
More concrete, one can choose 
\begin{itemize}
    \item $\mathbf{M}_\ell(\mathbf{G}) := (\mathbf H_\ell \mathbf{G} \mathbf H_\ell')^+$ for the \emph{Wald-type test statistic}, where here and throughout $\mathbf A^+$ denotes the Moore-Penrose inverse of a matrix $\mathbf A$. To ensure the continuity of $\mathbf{M}_\ell$ in $\boldsymbol{\Gamma}$, we further assume that $\Cov(\mathbf{X}_{11}),...,\Cov(\mathbf{X}_{k1})$ are of full rank.
    \item $\mathbf{M}_\ell(\mathbf{G}) := \mathrm{tr}(\mathbf H_\ell \mathbf{G} \mathbf H_\ell')^{-1} \mathbf I$ for the \emph{ANOVA-type test statistic}, where here and throughout $\mathrm{tr}(\mathbf{A})$ denotes the trace of a quadratic matrix $\mathbf{A}$ and $\mathbf{I}$ the identity matrix. To ensure the continuity of $\mathbf{M}_\ell$ in $\boldsymbol{\Gamma}$, we further assume that $\mathrm{tr}(\mathbf H_1 \boldsymbol{\Gamma} \mathbf H_1'),...,\mathrm{tr}(\mathbf H_L \boldsymbol{\Gamma} \mathbf H_L') > 0$.
\end{itemize}
In the case of univariate parameters of interest, i.e., $r_1 = ... = r_L =1$, classical MCTs are covered together with the test statistic
$ W_\ell =
    \sqrt{n} (\widehat{\boldsymbol{\theta}}_\ell - \boldsymbol{\theta}_{0,\ell})  \cdot {\mathbf{M}}_\ell(\widehat{\boldsymbol{\Gamma}}), \ell\in\{1,...,L\},$
    or $ W_\ell =
    \sqrt{n} |\widehat{\boldsymbol{\theta}}_\ell - \boldsymbol{\theta}_{0,\ell}| \cdot  {\mathbf{M}}_\ell(\widehat{\boldsymbol{\Gamma}}), \ell\in\{1,...,L\},$
where $$\mathbf{M}_\ell: \mathbb G \to [0,\infty),\quad \mathbf{M}_\ell(\mathbf{G}) := \begin{cases}
   1/ \sqrt{\mathbf H_\ell \mathbf{G} \mathbf H_\ell'} & \text{ if } \mathbf H_\ell \mathbf{G} \mathbf H_\ell' > 0\\
   0 & \text{else.}
\end{cases}$$
\end{example}

\paragraph{Permutation}
Instead of working with the limit distribution of $\mathbf W$ for testing the multiple hypotheses, using resampling procedures to approximate this distribution is often beneficial in terms of small sample performance.
One popular resampling procedure is the random permutation approach, which even provides finite sample guarantees under specific conditions.
However, for the permutation version of $\widehat{\boldsymbol{\theta}}$, we usually have the problem that the limit distribution differs from the distribution of $\mathbf Z$.
Let us assume that, for a sequence $\widehat{\boldsymbol{\theta}}^\pi$ of resampling test statistics, it holds
\begin{align}\label{eq:permconv}
    \sqrt{n}\, \widehat{\boldsymbol{\theta}}^\pi  \xrightarrow{d^*} \mathbf{Z}^\pi \sim \mathcal{N}_r(\mathbf{0}, \boldsymbol{\Sigma}^\pi),
\end{align}
where here and throughout $\xrightarrow{d^*}$ denotes conditional weak convergence in outer probability, see \cite{vaartWellner2023} for details.
We explicitly allow $\boldsymbol{\Sigma}^\pi \neq \boldsymbol{\Sigma}$, as this will result for many statistics by permutation limit theorems, see, e.g., \cite{ditzhausVC,munko2024rmst,munko2025multiple} for some applications. 
A very general way to prove \eqref{eq:permconv} in an empirical process setup is given in Corollary~1 of \cite{munko2024conditionaldeltamethodresamplingempirical}. Here, it is shown that uniform Hadamard differentiable functionals of permutation empirical processes converge weakly in outer probability to centered Gaussian limits, which generally have a different covariance structure than the limits of the functional of the original empirical processes, see \cite{munko2024conditionaldeltamethodresamplingempirical} for details. 
Similar observations can be made for other resampling approaches, e.g., the randomization approach by algebraic groups, see Remark~2 in \cite{dobler2023randomized}, and the pooled bootstrap approach, see Corollary~2 in \cite{munko2024conditionaldeltamethodresamplingempirical}.

Moreover, we assume that there is a consistent permutation estimator of the nuisance parameters, that is $\widehat{\boldsymbol{\Gamma}}^\pi \xrightarrow{P} \boldsymbol{\Gamma}$ as $n\to\infty$.
Of course, we can always choose $\widehat{\boldsymbol{\Gamma}}^\pi = \widehat{\boldsymbol{\Gamma}}$, but we do not restrict to this case.

\addtocounter{example}{-1}
\begin{example}[continued]
In Example~\ref{QFexample}, we define the permuted data points $$(\mathbf{X}_{11}^\pi,...,\mathbf{X}_{1n_1}^\pi,\mathbf{X}_{21}^\pi,..., \mathbf{X}_{kn_k}^\pi) := (\mathbf{Y}_{\pi_1},...,\mathbf{Y}_{\pi_n}),$$ where $(\mathbf{Y}_1,...,\mathbf{Y}_n) := (\mathbf{X}_{11},..., \mathbf{X}_{kn_k})$ denotes the pooled sample and $(\pi_1,...,\pi_n)$ denotes a random permutation of $\{1,...,n\}$ independent of the data.
Next, we define the permutation counterpart of the mean vector by $\widehat{\boldsymbol{\mu}}^\pi := (\widehat{\boldsymbol{\mu}}_1^{\pi\prime},...,\widehat{\boldsymbol{\mu}}^{\pi\prime}_k)^\prime $ with $\widehat{\boldsymbol{\mu}}^{\pi}_i := \frac{1}{n_i}\sum_{j=1}^{n_i} \mathbf{X}_{ij}^\pi $ and set $\widehat{\boldsymbol{\theta}}^\pi := \mathbf H \widehat{\boldsymbol{\mu}}^\pi$.
Furthermore, we assume that $\mathbf H$ fulfils the contrast property
$\mathbf H (\mathbf 1 \otimes \mathbf I) = \mathbf{0}_{r \times d}$, where here and throughout $\mathbf 1 = (1,...,1)'\in\mathbb R^k$ denotes the vector of ones, $\otimes$ denotes the Kronecker product and $\mathbf{0}_{r \times d} \in \mathbb R^{r\times d}$ the zero matrix.
Then,
Theorem~1 in \cite{munko2024conditionaldeltamethodresamplingempirical} ensures that  
$\sqrt{n}\, \widehat{\boldsymbol{\theta}}^\pi$ converges weakly in outer probability to a centered normal variable. The covariance matrix of the limit can be calculated as
$\boldsymbol{\Sigma}^\pi := \mathbf H  \boldsymbol{\Gamma}^\pi   \mathbf H' $ with $$\boldsymbol{\Gamma}^\pi  := \bigoplus_{i=1}^k \left(\kappa_i^{-1}\sum_{j=1}^k \kappa_j \Cov(\mathbf{X}_{j1})\right).$$
Thus, we observe that the covariance matrices are generally unequal $\boldsymbol{\Sigma}^\pi \neq \boldsymbol{\Sigma}$, as mentioned above.
\end{example}

Actually, we do not really need that $\widehat{\boldsymbol{\theta}}^\pi$ results from a permutation approach, as we will only use \eqref{eq:permconv} in our proofs.
Hence, the methodology of this section has a very broad range of applications and can also be applied to statistics resulting from other resampling procedures, such as, e.g., pooled bootstrap approaches.
However, other resampling procedures often yield straightforward covariance corrections.
For instance, in Example~\ref{QFexample}, the covariance matrix of the limit of 
$\sqrt{n}(\widehat{\boldsymbol{\mu}}^\pi - \widehat{\boldsymbol{\mu}}_\bullet)$ is always singular since $\left((n_1/n,...,n_k/n)\otimes \mathbf{e}'_j\right)\cdot \sqrt{n}(\widehat{\boldsymbol{\mu}}^\pi - \widehat{\boldsymbol{\mu}}_\bullet) = 0$ for all $j\in\{1,...,d\}$, where $$\widehat{\boldsymbol{\mu}}_\bullet = \mathbf{1}\otimes\left( \sum_{i=1}^k \frac{n_i}{n} \widehat{\boldsymbol{\mu}}_i \right)$$  denotes the pooled mean vector and $\mathbf{e}_j \in \mathbb{R}^d$ denotes the $j$th unit vector.
In contrast, the covariance matrix of the limit of
a standardized pooled bootstrap counterpart 
would be invertible under rather weak assumptions and, thus, can be corrected in a similar way to \cite{ditzhausVC}.
That is why we focus on permutation approaches in this work.
Additionally, permutation approaches often have the advantage of finite-sample properties, which we will study for our setting in more detail in Appendix~\ref{sec:finite}. 

\paragraph{Assumptions and consequences}
Another requirement that we need is that we have consistent estimators for both covariance matrices $\boldsymbol{\Sigma}$ and $\boldsymbol{\Sigma}^\pi$.
So far, we can summarize the stated assumptions as follows:
\begin{ass}\label{the_ass}
We assume 
\begin{enumerate}
    \item \textbf{Joint Asymptotic Normality of the Estimator}: As $n\to\infty$,
$ \sqrt{n}(\widehat{\boldsymbol{\theta}}_\ell  - {\boldsymbol{\theta}}_{0,\ell})_{\ell\in\mathcal T} \xrightarrow{d} (\mathbf Z_\ell)_{\ell\in\mathcal T} $
 under $\bigcap_{\ell\in\mathcal T}\mathcal{H}_{0,\ell}$ for all $\mathcal T \subset \{1,...,L\}$, where $\mathbf Z \sim \mathcal N_r(\mathbf 0, \boldsymbol{\Sigma})$.
\item \textbf{Joint Asymptotic Normality of the Permutation Estimator}: As $n\to\infty$,
$  \sqrt{n}\, \widehat{\boldsymbol{\theta}}^\pi  \xrightarrow{d^*} \mathbf{Z}^\pi \sim \mathcal{N}_r(\mathbf{0}, \boldsymbol{\Sigma}^\pi). $
\item \textbf{Consistent Estimators for the Covariance Matrices}: We have estimators $\widehat{\boldsymbol{\Sigma}}, \widehat{\boldsymbol{\Sigma}}^\pi$ taking values in the space of all positive semi-definite matrices and fulfilling $\widehat{\boldsymbol{\Sigma}} \xrightarrow{P} \boldsymbol{\Sigma}, \widehat{\boldsymbol{\Sigma}}^\pi \xrightarrow{P} {\boldsymbol{\Sigma}}^\pi$ as $n\to\infty$.
\item \textbf{Consistent Estimators for $\boldsymbol{\Gamma}$}: We have estimators $\widehat{\boldsymbol{\Gamma}}, \widehat{\boldsymbol{\Gamma}}^\pi$ taking values in a metric space $\mathbb G$ and fulfilling $\widehat{\boldsymbol{\Gamma}} \xrightarrow{P} \boldsymbol{\Gamma}, \widehat{\boldsymbol{\Gamma}}^\pi \xrightarrow{P} {\boldsymbol{\Gamma}}$ as $n\to\infty$ for some constant $\boldsymbol{\Gamma} \in\mathbb G$.
\item \textbf{General Form of the Test Statistics}: The test statistics can be written as $ W_\ell = w_\ell(\sqrt{n}(\widehat{\boldsymbol{\theta}}_\ell  - {\boldsymbol{\theta}}_{0,\ell}), \widehat{\boldsymbol{\Gamma}})$, $\ell\in\{1,...,L\}, $
where $w_\ell: \mathbb R^{r_\ell}\times \mathbb G \to \mathbb R$ is a function that is continuous in $(\mathbf{x}, \boldsymbol{\Gamma})$ for all $\mathbf x\in\mathbb {\emph{Im}}(\Cov(\mathbf{Z}_\ell))$. Moreover, we denote $w: \mathbb R^r \times \mathbb G \to \mathbb R^L, w((\mathbf x_\ell)_{\ell\in\{1,...,L\}}, \mathbf{G}) := \left( w_\ell(\mathbf x_\ell, \mathbf{G}) \right)_{\ell\in\{1,...,L\}}$ and $$\mathbf{W} := w(\sqrt{n}(\widehat{\boldsymbol{\theta}}  - {\boldsymbol{\theta}}_{0}), \widehat{\boldsymbol{\Gamma}}) = (W_\ell)_{\ell \in\{1,...,L\}} .$$
\end{enumerate}
\end{ass}

Under the stated assumptions and under the global null hypothesis $\bigcap_{\ell=1}^L \mathcal H_{0,\ell}$, it holds 
\begin{align}\label{eq:Wconv}
    \mathbf{W} \xrightarrow{d} w(\mathbf{Z}, {\boldsymbol{\Gamma}})
\end{align}
 as $n\to\infty$ by the continuous mapping theorem.
Of course, we can not guarantee that the naive permutation statistics $w(\sqrt{n}\, \widehat{\boldsymbol{\theta}}^\pi, \widehat{\boldsymbol{\Gamma}}^\pi)$ are converging jointly to the same limit $w(\mathbf{Z},\boldsymbol{\Gamma})$ due to potentially different covariance matrices $\boldsymbol{\Sigma}^\pi \neq \boldsymbol{\Sigma}$.
In fact, the naive permutation is only consistent whenever $w(\mathbf{Z},\boldsymbol{\Gamma}) \overset{d}{=} w(\mathbf{Z}^\pi,\boldsymbol{\Gamma})$.
Then, one can use, e.g., the methodology of Section~2.3 in \cite{munko2025inference} or of Section~4.1 in \cite{romano2005exact} for family-wise error rate (FWER) controlling testing procedures, simultaneous confidence regions, and adjusted p-values for the null hypotheses \eqref{eq:testprob}.
However, if $w(\mathbf{Z},\boldsymbol{\Gamma}) \overset{d}{=} w(\mathbf{Z}^\pi,\boldsymbol{\Gamma})$ does not hold as, e.g., in Example~\ref{QFexample}, inference procedures that rely on the consistency of the permutation approach are not applicable any more.
In the following, we aim to correct the possibly wrong limit distribution of the permutation statistics $w(\mathbf{Z}^\pi,\boldsymbol{\Gamma})$.

Due to the previous assumptions, the problem boils down to finding a transformation of the random variable $\sqrt{n}\, \widehat{\boldsymbol{\theta}}^\pi $ 
such that the conditional weak limit is $\mathcal N_r (\mathbf 0,\boldsymbol{\Sigma})$ instead of $\mathcal N_r (\mathbf 0,\boldsymbol{\Sigma}^\pi)$.
Since we assumed both limit distributions to be Gaussian, it remains to correct the covariance structure of the permutation test statistics, see e.g. \cite{ditzhausVC} for a similar approach for the pooled bootstrap. 
If $\boldsymbol{\Sigma}^{\pi}$ is of full rank, we can simply correct the different covariance structure of $\mathbf{Z}^\pi$ by multiplying the term $\boldsymbol{\Sigma}^{1/2} (\boldsymbol{\Sigma}^{\pi})^{-1/2}$ in front.
In practice, where we do not know the true covariance matrices, we would use $\widehat{\boldsymbol{\Sigma}}^{1/2} ((\widehat{\boldsymbol{\Sigma}}^{\pi})^{1/2})^+$ instead. This case is considered in more detail in Section~\ref{ssec:fullrank}.

However, in many applications, it turns out that $\boldsymbol{\Sigma}^{\pi}$ can not be of full rank due to the nature of dependencies arising from permutation approaches.
Then, the correction is not that straightforward.
We aim to use the eigenvalue decompositions of the covariance estimators to obtain a valid correction of the covariance structure.
Here, we need to be extra careful as the map from a symmetric, positive semi-definite matrix to the orthogonal matrix of its eigenvalue decomposition is generally not globally continuous.
To cover this issue, we consider two different approaches:
The first one in Section~\ref{ssec:distinct} requires the additional assumption that all non-zero eigenvalues are distinct. Under this assumption, the map to the orthogonal projectors onto the eigenspaces is continuous, and we can prove the asymptotic validity of our approach by the continuous mapping theorem if we interpose a diagonal matrix that randomly switches the signs of the eigenvectors.
The second approach in Section~\ref{ssec:convrate} requires the additional assumption that $\widehat{\boldsymbol{\Sigma}}$ converges with a known rate $r_n$ to $\boldsymbol{\Sigma}$. Then, all eigenvectors corresponding to eigenvalues that are 'too close' in terms of this convergence rate are additionally multiplied by a random orthogonal matrix to overcome possible confounding dependencies.

Before delving into the different approaches in more detail, we first want to remark that an alternative perspective on the hypotheses and parameters can sometimes suffice, such that at least one set of assumptions is satisfied.

\begin{rem}[Different hypotheses and parameters]\label{Extension}
 The hypotheses do not necessarily need to be chosen as
$\mathcal H_{0,\ell}: \boldsymbol{\theta}_\ell = \boldsymbol{\theta}_{0,\ell}, \ \ell\in\{1,\dots,L\}$.
In contrast, for a parameter $\boldsymbol{\eta}\in\mathbb R^q$, we can also consider
$\mathcal H_{0,\ell}: \boldsymbol{\eta} \in \mathbf H_{0,\ell}, \ \ell\in\{1,\dots,L\},$
for subsets $\mathbf H_{0,1},\dots,\mathbf H_{0,L} \subset \mathbb R^{q}$ and $q\in\mathbb N$.
We assume that there is an asymptotically normal estimator fulfilling
\[
\sqrt{n}(\widehat{\boldsymbol{\eta}}  - {\boldsymbol{\eta}})
\xrightarrow{d}
\mathcal{N}_q(\mathbf 0, \boldsymbol{\Xi})
\quad \text{as } n\to\infty
\] as well as an asymptotically normal permutation estimator
\[
\sqrt{n}\widehat{\boldsymbol{\eta}}^\pi
\xrightarrow{d^*}
\mathcal{N}_q(\mathbf 0, \boldsymbol{\Xi}^\pi)
\quad \text{as } n\to\infty.
\]
Under the previously used notation, we can consider test statistics that can be written as
\[
W_\ell = \widetilde{w}_\ell\!\left(\sqrt{n}(\widehat{\boldsymbol{\eta}}  - {\boldsymbol{\eta}}), \widehat{\boldsymbol{\Gamma}}\right)
\]
whenever $\mathcal H_{0,\ell}: \boldsymbol{\eta} \in \mathbf H_{0,\ell}$ holds, where,
$\widetilde{w}_\ell: \mathbb R^q \times \mathbb G \to \mathbb R$
is a function that is continuous in $(\mathbf x,\boldsymbol{\Gamma})$
for all $\mathbf x\in\mathrm{Im}(\boldsymbol{\Xi})$ such that
$\widetilde{w}_\ell(\sqrt{n}(\widehat{\boldsymbol{\eta}}  - {\boldsymbol{\eta}}), \widehat{\boldsymbol{\Gamma}})$
can be calculated under $\mathcal{H}_{0,\ell}$ even if $\boldsymbol{\eta}$ is unknown.
In some cases, where none of the assumption sets below (Assumptions~\ref{fullrank_ass}--\ref{conv_ass}) is plausible for
$\mathcal H_{0,\ell}: \boldsymbol{\theta}_\ell = \boldsymbol{\theta}_{0,\ell}, \ \ell\in\{1,\dots,L\}$, such a change of parameters can result in covariance matrices
$\boldsymbol{\Xi}$ and $\boldsymbol{\Xi}^\pi$ fulfilling at least one of the assumption sets.
\end{rem}

\subsection{Case 1: Full rank of $\boldsymbol{\Sigma}^{\pi}$}\label{ssec:fullrank}
Throughout this section, we suppose the following:
\begin{ass}\label{fullrank_ass}  \textbf{Full Rank of $\boldsymbol{\Sigma}^{\pi}$}: It holds $\mathrm{rank}(\boldsymbol{\Sigma}^{\pi}) = r$.
\end{ass}
To repair the incorrect covariance structure of the permutation, let us define the covariance-corrected permutation  test statistics by
$$ \mathbf W^\pi := w\left( \sqrt{n}  \widehat{\boldsymbol{\Sigma}}^{1/2}  ((\widehat{\boldsymbol{\Sigma}}^\pi)^{1/2})^+ \widehat{\boldsymbol{\theta}}^\pi ,\widehat{\boldsymbol{\Gamma}}^\pi\right) .$$
Under Assumptions~\ref{the_ass}.3 and \ref{fullrank_ass}, we have that $\widehat{\boldsymbol{\Sigma}}^{1/2} \xrightarrow{P} {\boldsymbol{\Sigma}}^{1/2}$ and $((\widehat{\boldsymbol{\Sigma}}^\pi)^{1/2})^+ \xrightarrow{P} ({\boldsymbol{\Sigma}}^\pi)^{-1/2}$ as $n\to\infty$.
Hence, the continuous mapping theorem implies the permutation consistency, i.e.,
$$ \mathbf W^\pi \xrightarrow{d^*} w\left(  {\boldsymbol{\Sigma}}^{1/2}  ({\boldsymbol{\Sigma}}^\pi)^{-1/2} \mathbf{Z}^\pi ,{\boldsymbol{\Gamma}}\right) \overset{d}{=} w\left(   \mathbf{Z} ,{\boldsymbol{\Gamma}}\right)$$ as $n\to\infty$. Note that the limit distribution corresponds to the limit distribution in \eqref{eq:Wconv}.

Further, it should be noted that this approach clearly uses the full rank assumption (Assumption~\ref{fullrank_ass}).
Firstly, it is needed to show that $((\widehat{\boldsymbol{\Sigma}}^\pi)^{1/2})^+ \xrightarrow{P} (({\boldsymbol{\Sigma}}^\pi)^{1/2})^+$ holds as $n\to\infty$ since the Moore-Penrose inverse is generally not continuous.
Secondly, we use the full rank structure for ${\boldsymbol{\Sigma}}^{1/2}  (({\boldsymbol{\Sigma}}^\pi)^{1/2})^+ \mathbf{Z}^\pi \overset{d}{=} \mathbf{Z}$. If ${\boldsymbol{\Sigma}}^\pi$ would not have full rank, the covariance matrix of ${\boldsymbol{\Sigma}}^{1/2}  (({\boldsymbol{\Sigma}}^\pi)^{1/2})^+ \mathbf{Z}^\pi$ would be ${\boldsymbol{\Sigma}}^{1/2}  (({\boldsymbol{\Sigma}}^\pi)^{1/2})^+ {\boldsymbol{\Sigma}}^\pi (({\boldsymbol{\Sigma}}^\pi)^{1/2})^+ {\boldsymbol{\Sigma}}^{1/2}$, which is generally not equal to $\boldsymbol{\Sigma}$.

In some applications, Assumption~\ref{fullrank_ass} turns out to be rather weak, as the following example shows.
\begin{example}[Many-to-one comparisons of means]
    Let us consider many-to-one comparisons in Example~\ref{QFexample}.
    They can be realized by choosing the contrast matrices 
    $\mathbf{H}_\ell = (\mathbf{e}_{\ell+1} - \mathbf{e}_1)\otimes \mathbf{I}$ for $\ell\in\{1,...,k-1\}, L = k-1$.
    If $\sum_{i=1}^k \kappa_i \Cov(\mathbf{X}_{i1})$ is of full rank, $\boldsymbol{\Gamma}^\pi$ is of full rank and, thus, 
    $\mathrm{rank}(\boldsymbol{\Sigma}^{\pi}) = \mathrm{rank}(\mathbf H \boldsymbol{\Gamma}^\pi \mathbf H') = \mathrm{rank}(\mathbf{H}) = (k-1)d $. Hence, if $\sum_{i=1}^k \kappa_i \Cov(\mathbf{X}_{i1})$ is of full rank, $\boldsymbol{\Sigma}^{\pi} \in \mathbb{R}^{(k-1)d \times (k-1)d}$ fulfils Assumption~\ref{fullrank_ass}.
\end{example}

However, in other applications, Assumption~\ref{fullrank_ass} is impossible as shown in the following example.
\begin{example}[All-pairs comparisons of means]\label{manytomany}
    Let us consider all-pairs comparisons in Example~\ref{QFexample}.
    They can be realized by choosing the contrast matrices 
    $\mathbf{H}_{\ell_1\ell_2} = (\mathbf{e}_{\ell_2} - \mathbf{e}_{\ell_1})\otimes \mathbf{I}$ for $\ell_1,\ell_2\in\{1,...,k\}, \ell_1 > \ell_2, L = k(k-1)/2.$
    Then,
    $\mathrm{rank}(\boldsymbol{\Sigma}^{\pi}) = \mathrm{rank}(\mathbf H \boldsymbol{\Gamma}^\pi \mathbf H') \leq \mathrm{rank}(\mathbf{H}) = (k-1)d $. Hence, $\boldsymbol{\Sigma}^{\pi} \in \mathbb{R}^{k(k-1)d/2 \times k(k-1)d/2}$ can never fulfil Assumption~\ref{fullrank_ass} if $k > 2$.
\end{example}

\subsection{Case 2: Distinct Eigenvalues of $\boldsymbol{\Sigma}^\pi$}\label{ssec:distinct}
To correct the permutation approach in situations where Assumption~\ref{fullrank_ass} does not hold, as, e.g., in Example~\ref{manytomany}, we use a different correction by using the eigenvalue decompositions of the covariance matrices.
For symmetric $\widehat{\boldsymbol{\Sigma}}$ and $\widehat{\boldsymbol{\Sigma}}^\pi$, let 
\begin{align*}
    \boldsymbol{\Sigma} &= \mathbf{U}\mathbf{D}\mathbf{U}', \qquad \boldsymbol{\Sigma}^\pi = \mathbf{U}^\pi\mathbf{D}^\pi\mathbf{U}^{\pi\prime}\\
    \widehat{\boldsymbol{\Sigma}} &= \widehat{\mathbf{U}}\widehat{\mathbf{D}}\widehat{\mathbf{U}}', \qquad \widehat{\boldsymbol{\Sigma}}^\pi = \widehat{\mathbf{U}}^\pi\widehat{\mathbf{D}}^\pi\widehat{\mathbf{U}}^{\pi\prime}
\end{align*}
denote the eigen value decompositions, i.e., $\mathbf{U}, \mathbf{U}^\pi$ are orthogonal matrices,
$\mathbf{D}, \mathbf{D}^\pi$ are diagonal matrices with decreasing diagonal elements, and
$\widehat{\mathbf{U}}, \widehat{\mathbf{U}}^\pi, \widehat{\mathbf{D}}, \widehat{\mathbf{D}}^\pi$ take values in the space of all orthogonal matrices and the space of all diagonal matrices with decreasing diagonal elements, respectively.

Furthermore, we suppose the following assumptions:
\begin{ass}\label{distinctass}We assume\phantom{.}
    \begin{enumerate}
    \item \textbf{Ranks of Covariance Matrices}: $\mathrm{rank}(\boldsymbol{\Sigma}) \leq \mathrm{rank}(\boldsymbol{\Sigma}^\pi) =: \widetilde r \leq r$.
    \item \textbf{Symmetry and an Assumption on the Ranks of the Covariance Matrix Estimators}: The covariance matrix estimators fulfil $\widehat{\boldsymbol{\Sigma}} = \widehat{\boldsymbol{\Sigma}}', \widehat{\boldsymbol{\Sigma}}^\pi = \widehat{\boldsymbol{\Sigma}}^{\pi\prime}$ and the inner probability that $\min\{\mathrm{rank}(\widehat{\boldsymbol{\Sigma}}), \mathrm{rank}(\widehat{\boldsymbol{\Sigma}}^\pi)\} \leq \widetilde r$ holds tends to 1 as $n\to\infty$.
        \item \textbf{Distinct Eigenvalues of $\boldsymbol{\Sigma}^\pi$}: All positive eigenvalues of $\boldsymbol{\Sigma}^\pi$ are distinct, i.e., $\mathbf{D}_{11}^\pi > ... > \mathbf{D}_{\widetilde{r}\widetilde{r}}^\pi$, where $\mathbf{D}_{ii}^\pi$ is the $i$th diagonal element of $\mathbf{D}^\pi$.
        \item \textbf{Measurability of $\widehat{\mathbf{U}}, \widehat{\mathbf{U}}^\pi$}: $\widehat{\mathbf{U}}, \widehat{\mathbf{U}}^\pi$ 
        are asymptotically measurable, see Definition~1.3.7 in \cite{vaartWellner2023} for details.
    \end{enumerate}
\end{ass}

The measurability of $\widehat{\mathbf{U}}, \widehat{\mathbf{U}}^\pi$ can always be achieved in the following way:
First, by \cite{azoff1974}, we can always find a Borel measurable function $\mathbf{V}: \{\mathbf{A} \in \mathbb R^{r\times r} \mid \mathbf{A} = \mathbf{A}'\} \to \{\mathbf W\in \mathbb R^{r\times r} \mid \mathbf{W} \text{ is orthogonal} \}$ that maps a symmetric matrix $\mathbf{A}$ to an orthogonal matrix $\mathbf{V}(\mathbf{A})$ fulfilling that $\mathbf{V}'(\mathbf{A}) \mathbf{A} \mathbf{V}(\mathbf{A})$ is diagonal with decreasing diagonal elements.
Second, $\widehat{\boldsymbol{\Sigma}}$ and $\widehat{\boldsymbol{\Sigma}}^\pi$ converge in outer probability by Assumption~\ref{the_ass}.3 and, thus, are asymptotically measurable.
Hence, one can show that $\mathbf{V}(\widehat{\boldsymbol{\Sigma}}), \mathbf{V}(\widehat{\boldsymbol{\Sigma}}^\pi)$ are asymptotically measurable. 

In order to switch the covariance matrix $\boldsymbol{\Sigma}^\pi$ to $\boldsymbol{\Sigma}$, a very simple idea is to multiply $\widehat{\mathbf U} \widehat{\mathbf D}^{1/2} ((\widehat{\mathbf D}^{\pi})^{1/2})^+ \widehat{\mathbf U}^{\pi\prime}$ from the left side to $\sqrt{n}\widehat{\boldsymbol{\theta}^\pi}$ to correct for the wrong covariance matrix in the limit distribution. 
By Assumptions~\ref{the_ass}.3 and~\ref{distinctass}.2, we obtain that $\widehat{\mathbf{D}}$ and $ \widehat{\mathbf{D}}^\pi$ converge in outer probability to $\mathbf{D}$ and $\mathbf{D}^\pi$, respectively.
However, we have to be careful with the orthogonal matrices since the map to the orthogonal matrix of an eigenvalue decomposition is generally not continuous.
Although there exists a map with the same properties that is continuous in $\boldsymbol{\Sigma}^\pi$ due to Assumption~\ref{distinctass}.3, we can not specify it in general since $\boldsymbol{\Sigma}^\pi$ is usually unknown.
Furthermore, we only assumed the eigenvalues of $\boldsymbol{\Sigma}^\pi$ to be distinct.
The following example illustrates that the stated assumptions are not enough to guarantee 
$\sqrt{n}\widehat{\mathbf{U}} \widehat{\mathbf D}^{1/2} ((\widehat{\mathbf D}^\pi)^{1/2})^+  \widehat{\mathbf U}^{\pi\prime}\widehat{\boldsymbol{\theta}}^\pi \xrightarrow{d^*} \mathcal{N}_r(\mathbf{0}, \boldsymbol{\Sigma})$ as $n\to\infty$:
\begin{example}
    For $r=1$, let $\widehat{\boldsymbol{\theta}}^\pi := {Z}/\sqrt{n}, \boldsymbol{\Sigma} = \boldsymbol{\Sigma}^\pi = \mathbf{D} = \mathbf{D}^\pi = \mathbf{U} = \mathbf{U}^\pi = \widehat{\mathbf{D}} = \widehat{\mathbf{D}}^\pi = \widehat{\mathbf{U}}= 1$,
    $\widehat{\mathbf{U}}^\pi = 1$ if $Z \geq 0$ and $\widehat{\mathbf{U}}^\pi = -1$ if $Z < 0$
    for $Z \sim \mathcal{N}(0,1)$.
    Then, Assumptions~\ref{the_ass} and~\ref{distinctass} are satisfied but $\sqrt{n}\widehat{\mathbf{U}} \widehat{\mathbf D}^{1/2} ((\widehat{\mathbf D}^\pi)^{1/2})^+  \widehat{\mathbf U}^{\pi\prime}\widehat{\boldsymbol{\theta}}^\pi = |Z|$ is not normally distributed.
\end{example}

In the proof of the following theorem, we see that, technically, we do not need the consistency of $\widehat{\mathbf{U}}$ and $\widehat{\mathbf{U}}^\pi$ as long as we interpose a random matrix $\mathbf R$ that randomly switches the sign of the eigenvectors in $\widehat{\mathbf{U}}^\pi$ independently of everything else (i.e., independent of the data and the random permutation).

\begin{theorem}\label{distinct_thm}
   We define $\mathbf{R} := \mathrm{diag}(R_1,...,R_r)$, where $R_1,...,R_r$ are i.i.d. Rademacher random variables (i.e., $P(R_1 = 1) = P(R_1 = -1) = 0.5$) that are independent of the data and the random permutation, and
   $$\widetilde{\boldsymbol{\theta}}^\pi := \widehat{\mathbf{U}} \widehat{\mathbf D}^{1/2} ((\widehat{\mathbf D}^\pi)^{1/2})^+  \mathbf{R}\widehat{\mathbf U}^{\pi\prime}\widehat{\boldsymbol{\theta}}^\pi.$$ Under Assumptions~\ref{the_ass} and~\ref{distinctass}, we have $\sqrt{n}\, \widetilde{\boldsymbol{\theta}}^\pi \xrightarrow{d^*} \mathcal{N}_r(\mathbf{0}, \boldsymbol{\Sigma})$ as $n\to\infty$.
\end{theorem}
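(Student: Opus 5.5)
The plan is to reduce everything to the convergence of the rank-one eigenprojections of $\widehat{\boldsymbol{\Sigma}}^\pi$. Signs and the non-consistent matrix $\widehat{\mathbf U}$ are then handled through the following observation. Conditionally on the data, the limit is Gaussian with covariance $\widehat{\mathbf U}\widehat{\mathbf D}\widehat{\mathbf U}' = \widehat{\boldsymbol{\Sigma}}$, and this covariance converges to $\boldsymbol{\Sigma}$ even though $\widehat{\mathbf U}$ itself does not converge. Throughout, I work with the bounded Lipschitz metric $d_{BL}$ between the conditional law of $\sqrt n\,\widetilde{\boldsymbol\theta}^\pi$ given the data and $\mathcal N_r(\mathbf 0,\boldsymbol\Sigma)$. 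The goal is to show that this distance tends to zero in outer probability. I use Slutsky-type replacements, valid for $\xrightarrow{d^*}$, to discard $o_P(1)$ terms.

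\emph{Step 1 (truncation to the first $\widetilde r$ coordinates).} By Weyl's inequality, $\widehat{\mathbf D}\to\mathbf D$ and $\widehat{\mathbf D}^\pi\to\mathbf D^\pi$ in outer probability. Hence $\widehat{\mathbf D}^\pi_{ii}\to\mathbf D^\pi_{ii}>0$ for $i\le\widetilde r$. For $i>\widetilde r$, Assumption~\ref{distinctass}.2 gives, with inner probability tending to one, that $\widehat{\mathbf D}_{ii}=0$ or $\widehat{\mathbf D}^\pi_{ii}=0$. Either way, $\widehat{\mathbf D}^{1/2}((\widehat{\mathbf D}^\pi)^{1/2})^+$ vanishes in the diagonal entries $i>\widetilde r$. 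Therefore, with probability tending to one,
$$\sqrt n\,\widetilde{\boldsymbol\theta}^\pi=\widehat{\mathbf U}\,\widehat{\mathbf D}^{1/2}\mathbf E\,(\widehat{\mathbf D}^\pi_{\widetilde r})^{-1/2}\,\mathbf R_{\widetilde r}\,\widehat{\mathbf U}^{\pi\prime}_{\widetilde r}\sqrt n\,\widehat{\boldsymbol\theta}^\pi .$$
Here $\mathbf E=(\mathbf I_{\widetilde r},\mathbf 0)'$, and the subscript $\widetilde r$ denotes the leading $\widetilde r$ columns or entries. Moreover $\widehat{\mathbf D}^{1/2}\mathbf E=\mathbf E\,\widehat{\mathbf D}_{\widetilde r}^{1/2}$ with $\widehat{\mathbf D}_{\widetilde r}\to \mathbf D_{\widetilde r}$.

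\emph{Step 2 (sign alignment, the main obstacle).} By Assumption~\ref{distinctass}.3, the map $\boldsymbol A\mapsto \mathbf u_i(\boldsymbol A)\mathbf u_i(\boldsymbol A)'$ onto the $i$th eigenspace is continuous at $\boldsymbol\Sigma^\pi$ for $i\le\widetilde r$. Hence $\widehat{\mathbf u}^\pi_i\widehat{\mathbf u}^{\pi\prime}_i\to \mathbf u^\pi_i\mathbf u^{\pi\prime}_i$. Define the data-dependent signs $S_i:=\mathrm{sign}(\widehat{\mathbf u}^{\pi\prime}_i\mathbf u^\pi_i)$, with value $1$ if the inner product is $0$, and set $\mathbf S=\mathrm{diag}(S_1,\dots,S_{\widetilde r})$. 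Then $\widehat{\mathbf U}^{\pi}_{\widetilde r}\mathbf S-\mathbf U^\pi_{\widetilde r}\to\mathbf 0$, and $\mathbf S$ is (asymptotically) measurable by Assumption~\ref{distinctass}.4. Since $\sqrt n\,\widehat{\boldsymbol\theta}^\pi=O_P(1)$ conditionally, I can replace $\mathbf R_{\widetilde r}\widehat{\mathbf U}^{\pi\prime}_{\widetilde r}$ by $\mathbf R_{\widetilde r}\mathbf S\,\mathbf U^{\pi\prime}_{\widetilde r}$ at a cost of $o_P(1)$. The key observation is that, given the data and the permutation, $\widetilde{\mathbf R}:=\mathbf R_{\widetilde r}\mathbf S$ is again a vector of i.i.d.\ Rademacher signs, independent of everything else. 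This is exactly the step where the auxiliary $\mathbf R$ is used: it neutralises the possibly pathological, even $\boldsymbol\theta^\pi$-dependent, sign choices illustrated in the preceding example. I expect the rigorous bookkeeping here to be the hardest part. The delicate points are outer measurability and the fact that $\mathbf S$ may depend on the permutation if $\widehat{\boldsymbol\Sigma}^\pi$ does. If needed, I would work along almost surely convergent subsequences to make the conditioning argument clean.

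\emph{Step 3 (limit of the standardized part).} By Assumption~\ref{the_ass}.2 and the continuous mapping theorem, $\mathbf Y_n:=(\mathbf D^\pi_{\widetilde r})^{-1/2}\mathbf U^{\pi\prime}_{\widetilde r}\sqrt n\,\widehat{\boldsymbol\theta}^\pi\xrightarrow{d^*}\mathcal N_{\widetilde r}(\mathbf 0,\mathbf I)$. Since $\widetilde{\mathbf R}$ is independent of $\mathbf Y_n$ and $\mathcal N(\mathbf 0,\mathbf I)$ is invariant under sign flips, $\widetilde{\mathbf R}\mathbf Y_n\xrightarrow{d^*}\mathcal N_{\widetilde r}(\mathbf 0,\mathbf I)$ as well. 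Combining this with Steps 1--2 and $\widehat{\mathbf D}^\pi_{\widetilde r}\to\mathbf D^\pi_{\widetilde r}$, I obtain $\sqrt n\,\widetilde{\boldsymbol\theta}^\pi=\widehat{\mathbf U}\mathbf E\widehat{\mathbf D}^{1/2}_{\widetilde r}\mathbf G_n+o_P(1)$, where the conditional law of $\mathbf G_n$ is $d_{BL}$-close to $\mathcal N_{\widetilde r}(\mathbf 0,\mathbf I)$. Because $\widehat{\mathbf U}$ is orthogonal, and hence bounded, the map $\mathbf g\mapsto \widehat{\mathbf U}\mathbf E\widehat{\mathbf D}^{1/2}_{\widetilde r}\mathbf g$ is uniformly Lipschitz. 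Consequently the conditional law of $\sqrt n\,\widetilde{\boldsymbol\theta}^\pi$ is close to $\mathcal N_r(\mathbf 0,\widehat{\mathbf U}\mathbf E\widehat{\mathbf D}_{\widetilde r}\mathbf E'\widehat{\mathbf U}')$.

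\emph{Step 4 (covariance identification).} Since $\mathrm{rank}(\boldsymbol\Sigma)\le\widetilde r$, we have $\mathbf D_{ii}=0$ for $i>\widetilde r$. Hence $\widehat{\mathbf D}_{ii}\to0$ for these indices, and
$$\widehat{\mathbf U}\mathbf E\widehat{\mathbf D}_{\widetilde r}\mathbf E'\widehat{\mathbf U}'=\widehat{\boldsymbol\Sigma}+o_P(1)\to\boldsymbol\Sigma .$$
This uses only consistency of $\widehat{\boldsymbol\Sigma}$, not of $\widehat{\mathbf U}$. Finally, the Gaussian laws $\mathcal N_r(\mathbf 0,\cdot)$ depend continuously on the covariance in $d_{BL}$, which yields $\sqrt n\,\widetilde{\boldsymbol\theta}^\pi\xrightarrow{d^*}\mathcal N_r(\mathbf 0,\boldsymbol\Sigma)$.
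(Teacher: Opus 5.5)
Your proposal is correct and follows essentially the same route as the paper: truncation to the first $\widetilde r$ coordinates via Assumption~\ref{distinctass}.2, and sign alignment through the data-dependent matrix $\mathrm{diag}(\mathrm{sgn}(\widehat{\mathbf u}_i^{\pi\prime}\mathbf u_i^\pi))$, where the paper gets $\|\mathrm{sgn}(\widehat{\mathbf u}_i^{\pi\prime}\mathbf u_i^\pi)\widehat{\mathbf u}_i^{\pi}-\mathbf u_i^\pi\|\to 0$ from the Davis--Kahan bound of Yu et al.\ rather than from continuity of eigenprojections; it then absorbs these signs into $\mathbf R$ by Fubini, makes a bounded-Lipschitz comparison with $\mathbf Y\sim\mathcal N_r(\mathbf 0,\boldsymbol\Sigma^\pi)$, and identifies the covariance via $\widehat{\mathbf U}\mathbf D\widehat{\mathbf U}'\to\boldsymbol\Sigma$ without needing consistency of $\widehat{\mathbf U}$. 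The only piece you leave out is the asymptotic-measurability half of the definition of $\xrightarrow{d^*}$, i.e.\ \eqref{eq:precise2}, which is where the paper actually uses Assumption~\ref{distinctass}.4 (your conditioning step only needs $\mathbf S$ to be measurable in the permutation for fixed data, not asymptotically measurable).
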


Following the previous theorem, we define
$$ \mathbf W^\pi := w\left( \sqrt{n}\, \widetilde{\boldsymbol{\theta}}^\pi,\widehat{\boldsymbol{\Gamma}}^\pi\right) .$$
Then, we have that 
$$\mathbf W^\pi\xrightarrow{d^*}  w(\mathbf{Z},\boldsymbol{\Gamma} )$$
as $n\to\infty$ by the continuous mapping theorem.
Hence, $\mathbf W^\pi$ approximates the limit distribution of $\mathbf W$ under the global null hypothesis $\bigcap_{\ell=1}^L\mathcal{H}_{0,\ell}$, i.e., the distribution of  $w(\mathbf{Z},\boldsymbol{\Gamma})$, which is what we need to use the permutation method for inference.

\begin{rem}[Avoiding numerical issues]
    Note that even if the inner probability of $\min\{\mathrm{rank}(\widehat{\boldsymbol{\Sigma}}), \mathrm{rank}(\widehat{\boldsymbol{\Sigma}}^\pi)\} \leq \widetilde r$ tends to 1, we have to be careful with the numerical implementation of this method.
    This is because eigenvalue decompositions are typically computed using numerical approximations. Thus, even if $\min\{\mathrm{rank}(\widehat{\boldsymbol{\Sigma}}), \mathrm{rank}(\widehat{\boldsymbol{\Sigma}}^\pi)\} \leq \widetilde r$ holds theoretically, practical implementations may yield non-zero and even negative eigenvalues $\widehat{\mathbf{D}}_{ii} , i > \mathrm{rank}(\widehat{\boldsymbol{\Sigma}}),$ and $\widehat{\mathbf{D}}^\pi_{ii} := 0,i > \mathrm{rank}(\widehat{\boldsymbol{\Sigma}}^\pi)$.
    To avoid resulting numerical issues, we recommend artificially setting $\widehat{\mathbf{D}}_{ii} := 0$ for $i > \mathrm{rank}(\widehat{\boldsymbol{\Sigma}})$ and $\widehat{\mathbf{D}}^\pi_{ii} := 0$ for $i > \mathrm{rank}(\widehat{\boldsymbol{\Sigma}}^\pi)$ after numerically calculating the eigenvalue decompositions.
\end{rem}

The following example shows that for the Wald-type test statistic in the global testing problem ($L=1$), the procedure from Theorem~\ref{distinct_thm} yields the naive permutation statistic of the Wald-type test statistic.

\begin{example}[Permuted WTS for $L=1$]\label{WTSexample}
Under the notation of Example~\ref{QFexample}, let us consider only one hypothesis, i.e., $L=1$, $\boldsymbol \theta = \mathbf H \boldsymbol\mu$ with a contrast matrix $\mathbf H \in \mathbb R^{r\times kd}$, and let
$w(\mathbf{x}, \mathbf{G}) = \mathbf{x}' (\mathbf H \mathbf{G} \mathbf H')^+ \mathbf{x}$ be the 'WTS function'.
Furthermore, let $\widehat{\boldsymbol{\Sigma}} = \mathbf H \widehat{\boldsymbol{\Gamma}} \mathbf H'$ and $\widehat{\boldsymbol{\Gamma}}^\pi = \widehat{\boldsymbol{\Gamma}}$.
Then, $\mathbf W = n (\mathbf H \widehat{\boldsymbol\mu} - \boldsymbol{\theta}_0)'\widehat{\boldsymbol{\Sigma}}^+(\mathbf H \widehat{\boldsymbol\mu} - \boldsymbol{\theta}_0)$
    is the classical WTS.
    For the permutation version, many terms cancel, and we obtain 
    $\mathbf W^\pi = n (\mathbf H \widehat{\boldsymbol\mu}^\pi)'(\widehat{\boldsymbol{\Sigma}}^\pi)^+(\mathbf H \widehat{\boldsymbol\mu}^\pi)$ whenever $\mathrm{rank}(\widehat{\boldsymbol{\Sigma}}) \geq \mathrm{rank}(\widehat{\boldsymbol{\Sigma}}^\pi)$.
    Hence, $\mathbf W^\pi$ is in fact just the naive permutation statistic of $\mathbf W$.
\end{example}

For the ANOVA-type test statistic in the global testing problem ($L=1$), we obtain the following result for the permutation statistic of the ANOVA-type test statistic.

\begin{example}[Permuted ATS for $L=1$]\label{ATSexample}
    In the same setup as in the previous example, we now consider the 'ATS function' $w(\mathbf{x}, \mathbf{G}) = \mathbf{x}' \mathbf{x}/\mathrm{tr}(\mathbf{H}\mathbf{G}\mathbf{H}')$.
    Then, $\mathbf W = n (\mathbf H \widehat{\boldsymbol\mu} - \boldsymbol{\theta}_0)'(\mathbf H \widehat{\boldsymbol\mu} - \boldsymbol{\theta}_0)/\mathrm{tr}(\mathbf{H}  \widehat{\boldsymbol{\Gamma}}\mathbf{H}')$
    is the classical standardized ATS.
    For the permutation version, many terms cancel, and we obtain
    $\mathbf W^\pi = n (\mathbf H \widehat{\boldsymbol\mu}^\pi)'\widehat{\mathbf U}^\pi \widehat{\mathbf D} (\widehat{\mathbf D}^\pi)^+ \widehat{\mathbf U}^{\pi\prime}(\mathbf H \widehat{\boldsymbol\mu}^\pi)/\mathrm{tr}(\mathbf{H}  \widehat{\boldsymbol{\Gamma}}^\pi\mathbf{H}')$.

    Obviously, $\mathbf W^\pi$ differs from the naive permutation statistic $w(\sqrt{n}\widehat{\boldsymbol{\theta}}^\pi, \widehat{\boldsymbol{\Gamma}}^\pi)$ of $\mathbf W$. This is because the naive permutation generally does not approximate the same limit distribution as the ANOVA-type test statistics.  In contrast, our corrected permutation statistic $\mathbf W^\pi$ can mimic the correct limit distribution.
\end{example}

\subsection{Case 3: Convergence Rate of $\widehat{\boldsymbol{\Sigma}}^\pi$}\label{ssec:convrate}

For satisfying Assumption~\ref{distinctass}.2, we need to know that at least one of the ranks of the covariance matrix estimators is asymptotically less than or equal to the rank of $\boldsymbol{\Sigma}^\pi$ and that all positive eigenvalues of $\boldsymbol{\Sigma}^\pi$ are distinct.
In some applications, however, these two assumptions are not satisfied.
Therefore, we propose one further set of assumptions excluding the mentioned assumptions and, instead, requiring a convergence rate for the convergence of $\widehat{\boldsymbol{\Sigma}}^\pi$. 
This ensures that we can detect which of the eigenvalues are 'close' (with respect to the convergence rate) and, thus, likely to be equal.
Then, we adopt the idea of case~2 and randomly rotate the respective eigenvector estimations that we detected to correspond to equal eigenvalues.

\begin{ass}\phantom{.}\label{conv_ass}
    \begin{enumerate}
    \item \textbf{Ranks of Covariance Matrices}: $\mathrm{rank}(\boldsymbol{\Sigma}) \leq \mathrm{rank}(\boldsymbol{\Sigma}^\pi) =: \widetilde r \leq r$.
    \item \textbf{Symmetry of the Covariance Matrix Estimators}: The covariance matrix estimators fulfil $\widehat{\boldsymbol{\Sigma}} = \widehat{\boldsymbol{\Sigma}}', \widehat{\boldsymbol{\Sigma}}^\pi = \widehat{\boldsymbol{\Sigma}}^{\pi\prime}$.
        \item \textbf{Convergence Rate of $\widehat{\boldsymbol{\Sigma}}^\pi$}: For some known sequence $(r_n)_{n\in\mathbb N}$ with $r_n \to\infty$, we have that $r_n || \widehat{\boldsymbol{\Sigma}}^\pi - \boldsymbol{\Sigma}^\pi || \xrightarrow{P} 0 $. 
        \item \textbf{Measurability of $\widehat{\mathbf{U}}, \widehat{\mathbf{U}}^\pi$}: $\widehat{\mathbf{U}}, \widehat{\mathbf{U}}^\pi$ 
        are asymptotically measurable.
    \end{enumerate}
\end{ass}

The idea is, as in case~2, that we add additional randomness to break possible dependencies of the estimation of the orthogonal matrix $\widehat{\mathbf{U}}^\pi$ and $\widehat{\boldsymbol{\theta}}^\pi$.
Therefore, we first check which eigenvalues seem to be plausible to be equal and to be zero regarding the convergence rate $r_n$.
In the next step, we randomly rotate the corresponding eigenvectors in $\widehat{\mathbf{U}}^\pi$ on their column space and set all small eigenvalues to zero.

To formulate this mathematically, fix $\varepsilon > 0$ and define the index set $I := \{i\in\{2,...,r\} \mid r_n (\widehat{\mathbf D}^\pi_{(i-1)(i-1)} - \widehat{\mathbf D}^\pi_{ii}) > \varepsilon\} $, denote the $J:=|I|$ elements of the set $I$ by $i_1 < ... < i_{J}$ and set $i_0 := 1, i_{J + 1} := r+1$.
Then, we aim to randomly rotate the eigenvalues corresponding to the 'close' eigenvalues with indices $i_j,...,i_{j+1}-1$ for all $j\in\{0,...,J\}$.
Moreover, we set all eigenvalue estimators to $0$ whenever they are 'too small'. If $r_n\widehat{\mathbf D}^\pi_{rr} \leq \varepsilon$, these are the eigenvalues with indices $i_J,...,i_{J+1}-1$. Otherwise, all eigenvalues are large enough.
We obtain the same result as in Theorem~\ref{distinct_thm}:

\begin{theorem}\label{conv_thm}
For all $i\in\{1,...,r\}$, let $\mathbf R_{i;0},..., \mathbf R_{i; r-1}$ be i.i.d.\ by the Haar measure on the set $O(i)$ of all orthogonal $\mathbb R^{i\times i}$ matrices that are mutually independent and independent of the data and the random permutation.
For $\varepsilon > 0$ and $J, i_0,...,i_{J+1}$ defined as above,
   we define the $r\times r$ block diagonal matrix
   \begin{align*}
       \mathbf R_{\varepsilon} := \begin{cases}
           \bigoplus_{j=0}^{J} \mathbf R_{i_{j+1}-i_j;j} & \text{if } r_n \widehat{\mathbf D}^\pi_{rr} > \varepsilon\\ \left(\bigoplus_{j=0}^{J-1} \mathbf R_{i_{j+1}-i_j;j}\right) \oplus \mathbf{0}_{\hat{i}\times \hat{i}} &\text{if } r_n \widehat{\mathbf D}^\pi_{rr} \leq \varepsilon,
       \end{cases}
   \end{align*}
    where 
    $\hat{i} := (r+1 - i_J)$.
   Moreover, let
   $$\widetilde{\boldsymbol{\theta}}^\pi := \widehat{\mathbf{U}} \widehat{\mathbf D}^{1/2} ((\widehat{\mathbf D}^\pi)^{1/2})^+  \mathbf{R}_{\varepsilon}\widehat{\mathbf U}^{\pi\prime}\widehat{\boldsymbol{\theta}}^\pi.$$ Under Assumptions~\ref{the_ass} and~\ref{conv_ass}, we have $\sqrt{n}\,\widetilde{\boldsymbol{\theta}}^\pi \xrightarrow{d^*} \mathcal{N}_r(\mathbf{0}, \boldsymbol{\Sigma})$ as $n\to\infty$.
\end{theorem}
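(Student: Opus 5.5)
The plan is to reduce everything to the behaviour of the random rotated coordinates $\mathbf{R}_\varepsilon \widehat{\mathbf U}^{\pi\prime}\sqrt{n}\,\widehat{\boldsymbol{\theta}}^\pi$, exactly as in the proof of Theorem~\ref{distinct_thm}, with Haar rotations on eigenvalue blocks replacing Rademacher sign flips.

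\textbf{Step 1: the block structure is recovered with probability tending to one.} Let $\lambda_1>\dots>\lambda_m>0$ be the distinct positive eigenvalues of $\boldsymbol{\Sigma}^\pi$, with multiplicities $d_1,\dots,d_m$ (so $\sum_k d_k=\widetilde r$), and let $\lambda_{m+1}=0$ with multiplicity $r-\widetilde r$. By Weyl's inequality, $|\widehat{\mathbf D}^\pi_{ii}-\mathbf D^\pi_{ii}|\le \|\widehat{\boldsymbol{\Sigma}}^\pi-\boldsymbol{\Sigma}^\pi\|$. Assumption~\ref{conv_ass}.3 then gives $r_n|\widehat{\mathbf D}^\pi_{ii}-\mathbf D^\pi_{ii}|\to 0$ in probability. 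Two consequences follow.
\begin{itemize}
\item Inside a true eigenvalue group, consecutive gaps satisfy $r_n(\widehat{\mathbf D}^\pi_{(i-1)(i-1)}-\widehat{\mathbf D}^\pi_{ii})\le 2r_n\|\widehat{\boldsymbol{\Sigma}}^\pi-\boldsymbol{\Sigma}^\pi\|<\varepsilon$ with probability tending to one.
\item Between groups, the gaps are of order $r_n(\lambda_k-\lambda_{k+1})\to\infty$.
\end{itemize}
Likewise, $r_n\widehat{\mathbf D}^\pi_{rr}\le\varepsilon$ with probability tending to one iff $\widetilde r<r$. Hence, with inner probability tending to one, the index set $I$ equals the set of true group boundaries, and $\mathbf R_\varepsilon=\bigoplus_{k}\mathbf R_{d_k;k-1}$ on the positive groups, padded with a zero block on the null eigenspace when $\widetilde r<r$. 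On this event, $\widehat{\mathbf D}$ and $((\widehat{\mathbf D}^\pi)^{1/2})^+$ converge in probability to $\mathbf D$ and $((\mathbf D^\pi)^{1/2})^+$, because the eigenvalues are continuous and the pseudo-inverse is continuous once the rank is fixed. This is where the convergence rate replaces Assumption~\ref{distinctass}.2, 3. One caveat: zero-padding is used in place of the literal pseudo-inverse of small estimated eigenvalues. That is, the pseudo-inverse is understood to act only on the non-zeroed block, so that tiny estimated eigenvalues do not blow up.

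\textbf{Step 2: subspace consistency and invariance.} Partition $\widehat{\mathbf U}^\pi=(\widehat{\mathbf U}^\pi_1,\dots,\widehat{\mathbf U}^\pi_{m+1})$ according to the groups. By the Davis--Kahan theorem, the projectors $\widehat{\mathbf U}^\pi_k\widehat{\mathbf U}^{\pi\prime}_k$ converge in probability to the true eigenprojectors $\mathbf P_k=\mathbf U^\pi_k\mathbf U^{\pi\prime}_k$. Equivalently, there are random orthogonal $\mathbf O_k\in O(d_k)$ with $\widehat{\mathbf U}^\pi_k-\mathbf U^\pi_k\mathbf O_k\to0$ in probability. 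Now I use the Haar property: since $\mathbf R_{d_k;k-1}$ is Haar distributed and independent of everything, so is $\mathbf R_{d_k;k-1}\mathbf O_k'$ conditionally on the data and the permutation, because the $\mathbf O_k$ are asymptotically measurable by Assumption~\ref{conv_ass}.4. So, conditionally, $\mathbf R_\varepsilon\widehat{\mathbf U}^{\pi\prime}\sqrt n\widehat{\boldsymbol{\theta}}^\pi$ has the same law as $\widetilde{\mathbf R}\,\mathbf U^{\pi\prime}\sqrt n\widehat{\boldsymbol{\theta}}^\pi+o_P(1)$. Here $\widetilde{\mathbf R}$ is a block Haar matrix with the zero block, independent of everything; $\sqrt n\widehat{\boldsymbol{\theta}}^\pi$ is conditionally tight by Assumption~\ref{the_ass}.2.

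\textbf{Step 3: identifying the limit.} By Assumption~\ref{the_ass}.2, the conditional continuous mapping theorem, and the independence of $\widetilde{\mathbf R}$, we get
\[
\widetilde{\mathbf R}\,\mathbf U^{\pi\prime}\sqrt n\widehat{\boldsymbol{\theta}}^\pi \xrightarrow{d^*} \widetilde{\mathbf R}\,\mathbf U^{\pi\prime}\mathbf Z^\pi .
\]
The vector $\mathbf U^{\pi\prime}\mathbf Z^\pi$ is $\mathcal N_r(\mathbf 0,\mathbf D^\pi)$, which is isotropic within each positive group and zero on the null block. Applying an independent orthogonal matrix to an isotropic Gaussian block leaves it isotropic. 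Hence the limit is again $\mathcal N_r(\mathbf 0,\mathbf D^\pi)$, and the zero block kills nothing random. Finally, multiply by $\widehat{\mathbf U}\widehat{\mathbf D}^{1/2}((\widehat{\mathbf D}^\pi)^{1/2})^+$. For $\widehat{\mathbf U}$ consistency is not needed: it suffices to use Step 2's trick once more with the grouping of $\boldsymbol{\Sigma}$. The limit is
\[
\mathbf U\mathbf D^{1/2}((\mathbf D^\pi)^{1/2})^+\mathcal N(\mathbf 0,\mathbf D^\pi)=\mathcal N\bigl(\mathbf 0,\ \mathbf U\mathbf D\,\mathbf 1_{\{\mathbf D^\pi>0\}}\mathbf U'\bigr)=\mathcal N_r(\mathbf 0,\boldsymbol{\Sigma}),
\]
since $\mathrm{rank}(\boldsymbol{\Sigma})\le\widetilde r$.

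\textbf{Main obstacle.} I am not certain the last equality goes through. The product $\widehat{\mathbf U}\widehat{\mathbf D}^{1/2}(\cdot)$ depends on $\widehat{\mathbf U}$ only through $\widehat{\mathbf U}\widehat{\mathbf D}^{1/2}$ applied to an isotropic vector. Repeated eigenvalues of $\boldsymbol{\Sigma}$ are harmless only if the middle vector is isotropic across the corresponding index sets. However, $\mathbf D^{1/2}((\mathbf D^\pi)^{1/2})^+$ mixes index positions, not subspaces, so this requires an argument rather than a citation. I would first try to show that the law is asymptotically invariant under right multiplication of $\widehat{\mathbf U}$ by block-orthogonals that are constant on the groups of $\mathbf D$. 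Alongside this, the key technical work is making the conditioning and outer-probability measurability in Step 2 rigorous, by passing through almost surely convergent subsequences.
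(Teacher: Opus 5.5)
Your Steps 1 and 2 follow the paper's proof. You use Weyl's inequality and the rate $r_n$ to recover the eigenvalue blocks of $\boldsymbol{\Sigma}^\pi$, so that $\mathbf R_\varepsilon \xrightarrow{P} \widetilde{\mathbf I}\mathbf R$ for a block Haar matrix $\mathbf R$ aligned with the true blocks. You then apply Davis--Kahan on each block and use Haar invariance to absorb the alignment rotations.

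The gap is the final step, which you leave open. You write the limit as $\mathbf U\mathbf D^{1/2}((\mathbf D^\pi)^{1/2})^+\mathcal N_r(\mathbf 0,\mathbf D^\pi)$, i.e.\ you replace $\widehat{\mathbf U}$ by $\mathbf U$. Nothing in Assumptions~\ref{the_ass} and~\ref{conv_ass} makes $\widehat{\mathbf U}$ converge to $\mathbf U$: it is determined only up to signs, $\boldsymbol{\Sigma}$ may have repeated eigenvalues, and no rate is assumed for $\widehat{\boldsymbol{\Sigma}}$. So neither this substitution nor a conditional continuous mapping argument through the non-convergent factor $\widehat{\mathbf U}$ is justified. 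Your suggested repair, ``Step 2's trick once more with the grouping of $\boldsymbol{\Sigma}$'', is not available as stated, because no independent Haar rotation acts on the $\widehat{\mathbf U}$ side.

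(Minor: Step 1 needs no reinterpretation of the pseudo-inverse. $((\widehat{\mathbf D}^\pi)^{1/2})^+$ need not converge when $\widetilde r<r$, but the zero block of $\mathbf R_\varepsilon$ literally annihilates those coordinates. The paper inserts $\widetilde{\mathbf I}$ using $P(\hat i\neq r-\widetilde r)\to 0$.)

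The missing idea is that you never need a limit for $\widehat{\mathbf U}$. It depends on the data only, so it is fixed conditionally, and the vector it multiplies is exactly Gaussian with diagonal covariance $\mathbf D$. The paper proceeds in three moves:
\begin{enumerate}
\item It replaces $\widehat{\mathbf D}^{1/2}((\widehat{\mathbf D}^\pi)^{1/2})^+\mathbf R_\varepsilon\widehat{\mathbf U}^{\pi\prime}$ by $\mathbf D^{1/2}((\mathbf D^\pi)^{1/2})^+\mathbf R\,\widehat{\mathbf R}^\pi\mathbf U^{\pi\prime}$, where $\widehat{\mathbf R}^\pi$ collects the Davis--Kahan rotations.
\item It removes $\widehat{\mathbf R}^\pi$ by Haar invariance.
\item It compares $\sqrt{n}\,\widehat{\boldsymbol{\theta}}^\pi$ with $\mathbf Y\sim\mathcal N_r(\mathbf 0,\boldsymbol{\Sigma}^\pi)$ uniformly over bounded Lipschitz $h$. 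This works because $\mathbf x\mapsto h(\widehat{\mathbf U}\mathbf D^{1/2}((\mathbf D^\pi)^{1/2})^+\mathbf x)$ is bounded Lipschitz with a deterministic constant; this is the rigorous way to ``condition on'' $\widehat{\mathbf U}$.
\end{enumerate}
For any block-orthogonal $\mathbf S$ aligned with the eigenvalue groups of $\mathbf D^\pi$ one has $\mathbf S\mathbf D^\pi\mathbf S'=\mathbf D^\pi$. Since $\mathrm{rank}(\boldsymbol{\Sigma})\le\widetilde r$, also $\mathbf D^{1/2}((\mathbf D^\pi)^{1/2})^+\mathbf D^\pi((\mathbf D^\pi)^{1/2})^+\mathbf D^{1/2}=\mathbf D$. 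Hence, given the data, $\widehat{\mathbf U}\mathbf D^{1/2}((\mathbf D^\pi)^{1/2})^+\mathbf S\mathbf U^{\pi\prime}\mathbf Y\sim\mathcal N_r(\mathbf 0,\widehat{\mathbf U}\mathbf D\widehat{\mathbf U}')$. Because $\widehat{\mathbf U}\widehat{\mathbf D}\widehat{\mathbf U}'=\widehat{\boldsymbol{\Sigma}}$,
\[
\|\widehat{\mathbf U}\mathbf D\widehat{\mathbf U}'-\boldsymbol{\Sigma}\|\le\|\mathbf D-\widehat{\mathbf D}\|+\|\widehat{\boldsymbol{\Sigma}}-\boldsymbol{\Sigma}\|\xrightarrow{P}0 .
\]

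Your worry that $\mathbf D^{1/2}((\mathbf D^\pi)^{1/2})^+$ ``mixes index positions'' is therefore moot. You had already computed that the middle vector is $\mathcal N_r(\mathbf 0,\mathbf D)$, which is isotropic within every eigenvalue group of $\mathbf D$. Your planned invariance argument under block-orthogonals constant on the groups of $\mathbf D$ could be completed, using Davis--Kahan for $\widehat{\boldsymbol{\Sigma}}$ at the fixed gaps of $\boldsymbol{\Sigma}$ with the invariance supplied by the Gaussian law rather than an added rotation. But it is an unnecessary detour around this one-line covariance identity.
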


Again, for $$ \mathbf W^\pi := w\left( \sqrt{n} \widetilde{\boldsymbol{\theta}}^\pi,\widehat{\boldsymbol{\Gamma}}^\pi\right) ,$$ it follows
$$\mathbf W^\pi\xrightarrow{d^*}  w(\mathbf{Z},\boldsymbol{\Gamma} )$$
as $n\to\infty$ by the continuous mapping theorem.

One main issue of this approach is that we need to choose $\varepsilon > 0$ and, even worse, a sequence $r_n$ fulfilling $r_n || \widehat{\boldsymbol{\Sigma}}^\pi - \boldsymbol{\Sigma}^\pi || \xrightarrow{P} 0 $. It should be noted that if a sequence $(r_n)_n$ fulfils this convergence, then all sequences $(r_n^p)_n$ with $p\in (0,1]$ do.
If $p$ is chosen small, we put higher weight on detecting all equal eigenvalues, whereas choosing $p$ larger spends more weight on detecting all distinct eigenvalues. 
Hence, if $s_n || \widehat{\boldsymbol{\Sigma}}^\pi - \boldsymbol{\Sigma}^\pi ||$ converges in distribution to some non-degenerate limit distribution for some sequence $(s_n)_n$, we propose to choose $r_n := \sqrt{s_n}$ to ensure a fair trade-off between detecting equal and distinct eigenvalues alike.

\section{Examples and Simulations}\label{sec:examples}

In order to illustrate the broad applicability of the method, we present three different problems from statistics:
For $k$ independent groups, we present multiple tests for the comparison of univariate means, also known as (univariate) ANOVA,
for the comparison of multivariate means across $k$ groups, also known as multivariate ANOVA,
and for the comparison of restricted mean survival times (RMSTs), a popular estimand from survival analysis.

Additionally, we will evaluate the small-sample performance of our methodology in simulations using \texttt{R version 4.5.0} \cite{R}.
For all simulation settings, $N_{sim} = 2000$ simulation repetitions with $B=1999$ permutation iterations were performed and the global level of significance was set to $\alpha = 5\%$. 
The simulation studies aim to compare the family-wise error rates (FWERs) as well as the power of the proposed permutation tests to asymptotic-based and Bonferroni-adjusted testing procedures. 
Therefore, under the global null hypothesis, we calculate the \textit{empirical FWERs} as the rate of at least one rejected hypothesis (although all hypotheses are true).
Regarding power, we compute two different types under the alternative hypothesis: the \emph{empirical family-wise power}, which gives the rate that all false hypotheses are rejected, and the \emph{empirical global power}, which describes the rate that at least one hypothesis is rejected.

\subsection{Univariate ANOVA}\label{ssec:ANOVA}
In this section, we apply the methodology of Section~\ref{sec:methods} to the multiple univariate mean contrast testing problem (MCTP) \cite{konietschke2013multiple}.

\subsubsection{Notation}\label{sssec:NotANOVA} As a special case of Example~\ref{QFexample} with $d=1$, we assume that we have i.i.d.\ random variables
$X_{i1},...,X_{in_i}$ with mean $\mu_i$ and strictly positive variance $\lambda_i^2 > 0$, for all $i\in\{1,...,k\}$.
Furthermore, we assume $n_i/n \to \kappa_i \in (0,1)$.
Then,
let $\boldsymbol{\theta} =  \mathbf{H} \boldsymbol{\mu}$, where $\boldsymbol{\mu} = (\mu_1,...,\mu_k)'\in\mathbb R^k$ denotes the vector of means and $\mathbf{H} = [\mathbf{h}_1',...,\mathbf{h}_r']' \in\mathbb R^{r\times k}$ denotes a contrast matrix.

For the estimation of $\boldsymbol{\mu}$, we use the empirical means $\widehat{\boldsymbol{\mu}} = (\widehat{\mu}_1,...,\widehat{\mu}_k)'$ and, then, set $\widehat{\boldsymbol{\theta} } := \mathbf{H}  \widehat{\boldsymbol{\mu}}$.
Moreover, we consider $$\widehat{\boldsymbol{\Gamma}} = \widehat{\boldsymbol{\Gamma}}^\pi = \mathrm{diag}( n/n_1 \widehat{\lambda}^2_1,..., n/n_k \widehat{\lambda}^2_k), \qquad \widehat{\boldsymbol{\Sigma}} := \mathbf{H} \widehat{\boldsymbol{\Gamma}} \mathbf{H}',$$ and $\widehat{\boldsymbol{\Sigma}}^\pi$ being the permutation counterpart of $\widehat{\boldsymbol{\Sigma}}$, where $\widehat{\lambda}^2_1,...,\widehat{\lambda}^2_k$ denote the empirical variance estimators of the $k$ groups.

For checking Assumption~\ref{the_ass}, firstly note that the central limit theorem ensures the asymptotic normality of $\sqrt{n}(\widehat{\boldsymbol{\theta}} - \boldsymbol{\theta})$. Moreover, Theorem~1 in \cite{munko2024conditionaldeltamethodresamplingempirical} guarantees the asymptotic normality of the permutation version with $\boldsymbol{\Sigma}^\pi = {\lambda}^{\pi2}\mathbf{H} \mathrm{diag}(  \kappa_1^{-1},..., \kappa_k^{-1}) \mathbf{H}'$ for some ${\lambda}^{\pi2} > 0$. The weak law of large numbers ensures the covariance matrix estimator consistencies.

In the following, we consider the $L=r$ multiple standardized test statistics\footnote{If $\widehat{\boldsymbol{\Sigma}}_{\ell\ell} = 0$, we would set $W_\ell =0$. However, this happens with probability $0$ in the considered settings in Section~\ref{sssec:ANOVAsett}.} constructed from the $r$ rows of the contrast matrix
$$ W_\ell = n(\mathbf{h}_\ell \widehat{\boldsymbol{\mu}})^2 / \widehat{\boldsymbol{\Sigma}}_{\ell\ell}, \quad \ell\in\{1,...,r\}, $$
to test the local hypotheses $\mathcal H_{0,\ell}: \mathbf{h}_\ell \boldsymbol{\mu} = 0$, $\ell\in\{1,...,r\}$.
They can be obtained from $(\sqrt{n}\mathbf{h}_\ell \widehat{\boldsymbol{\mu}}, \widehat{\boldsymbol{\Gamma}})$ through the functions $w_\ell : \mathbb R \times \mathbb G \to \mathbb R, w_\ell (x, \mathbf{G}) := x^2 / (\mathbf{h}_\ell \mathbf{G} \mathbf{h}_\ell'),$ $\ell \in\{1,...,r\}$, and, thus, are the univariate versions of the Wald- and ANOVA-type test statistic from Example~\ref{QFexample}.
For each $\ell\in\{1,...,L\}$, we reject the $\ell$th null hypothesis $\mathcal{H}_{0,\ell}$ whenever the corresponding test statistic $W_\ell$ exceeds the balanced critical value (cf. Section~2.3 in \cite{munko2025inference}).

\subsubsection{Simulation settings}\label{sssec:ANOVAsett}
Now, we aim to analyze the small-sample performance of the proposed methods in a simulation study.
Our simulation scenarios are motivated by the simulation study in Section~4.1 in \cite{baumeister2025early} and based on the model 
$$ X_{ij} = \sigma_i (\eta_{ij} - m) + \mu_i, \quad i\in\{1,...,k\}, j\in\{1,...,n_i\}$$ with $k=6$ groups.
For the sample sizes, $\mathbf{n}_1 = (n_1,...,n_6) = K\cdot (14,...,14)$ leads to balanced sample sizes and $\mathbf{n}_2 = K\cdot(4,8,12,16,20,24)$ leads to unbalanced sample sizes, where $K\in\{1,2,4\}$ generates small, medium, and large samples, respectively. 
The scaling $\boldsymbol{\sigma}_1 = (\sigma_1,...,\sigma_6) = (1,...,1)$ results in homoscedastic variances and $\boldsymbol{\sigma}_2 = (1,1.25,...,2.25)$, $\boldsymbol{\sigma}_3 = (2.25,2,...,1)$ result in heteroscedastic variance scenarios with positive and negative pairing for unbalanced sample sizes. 
Additionally, $\eta_{11},...,\eta_{kn_k}$ are i.i.d.\ random variables, where we consider the distributions of the simulation in \cite{baumeister2025early} with existing variances: the standard normal distribution $\mathcal{N}(0,1)$, the standard lognormal distribution $\mathcal{LN}(0,1)$, the chi-squared distribution with 3 degrees of freedom $\chi^2_3$, and the $t$-distribution with 3 degrees of freedom $t_3$. 
The constant $m$ represents the mean of the corresponding distribution.
As hypothesis matrix $\mathbf H$, we consider the 
{Dunnett-type} contrast matrix \cite{dunnett_1955}
\begin{align*}
    \mathbf{H} =  \begin{bmatrix}
-1 & 1 & 0 & \cdots & 0 \\
-1 & 0 & 1 & \cdots & 0 \\
\vdots  & \vdots  & \ddots & \vdots  \\
-1 & 0 & 0 & \cdots  & 1
\end{bmatrix} \in \mathbb R^{(k-1) \times k},
\end{align*} 
the centering matrix \cite{djira_hothorn_2009} 
\begin{align*}
    \mathbf{H} =  \begin{bmatrix}
1-1/k & -1/k & -1/k & \cdots & -1/k \\
-1/k & 1-1/k & -1/k & \cdots & -1/k \\
\vdots  & \vdots  & \ddots & \ddots & \vdots \\
-1/k & -1/k & -1/k & \cdots  & 1-1/k
\end{bmatrix} \in \mathbb R^{k \times k},
\end{align*} 
and the {Tukey-type} contrast matrix \cite{tukey}
\begin{align}\label{eq:Tukey}
    \mathbf{H} = \begin{bmatrix}
-1 & 1 & 0 & 0 & \cdots & \cdots & 0 \\
-1 & 0 & 1 & 0 &\cdots & \cdots & 0 \\
\vdots  & \vdots &\vdots & \vdots & \ddots & \vdots & \vdots  \\
-1 & 0 & 0 & 0& \cdots & \cdots & 1\\
0 & -1 & 1 & 0& \cdots & \cdots & 0 \\
0 & -1 & 0 & 1& \cdots & \cdots & 0 \\
\vdots  & \vdots & \vdots  & \vdots & \ddots & \vdots & \vdots \\
0 & 0 & 0 & 0 & \cdots & -1 & 1
\end{bmatrix} \in \mathbb R^{k(k-1)/2 \times k}
\end{align} 
for $k=6$.
Note that all resulting global null hypotheses $\bigcap_{\ell=1}^L \mathcal{H}_{0,\ell}$ equal the hypothesis that the means of all groups are equal ($\mu_1 = ... = \mu_6$). 
However, they provide different local hypotheses as outlined in Example~\ref{QFexample}.
For simulating under the global null hypothesis, we set $\mu_1 = ... = \mu_6 = 0$; for power simulations, we set $\mu_6 = 1.5$ and, if the distribution is non-symmetric (i.e., for $\mathcal{LN}(0,1)$ and $\chi^2_3$), we also consider $\mu_6 = -1.5$.

Now, we check which of the three cases from Sections~\ref{ssec:fullrank}--\ref{ssec:convrate} is applicable in which scenario:
\begin{itemize}
    \item \textbf{Case 1}: We have $\mathrm{rank}(\boldsymbol{\Sigma}^\pi)  = \mathrm{rank}({\lambda}^{\pi2}\mathbf{H} \mathrm{diag}(  \kappa_1^{-1},...,\kappa_6^{-1}) \mathbf{H}') = \mathrm{rank}(\mathbf{H})$. Hence, Assumption~\ref{fullrank_ass} is fulfilled if $\mathrm{rank}(\mathbf{H}) = r$. For the three contrast matrices considered in the simulation (Dunnett-, centring, and Tukey-type), this is only fulfilled for the \textit{Dunnett-type contrast matrix}.
    \item \textbf{Case 2}: We have $\mathrm{rank}(\boldsymbol{\Sigma}^\pi)  = \mathrm{rank}(\mathbf{H}) = \mathrm{rank}(\boldsymbol{\Sigma}) =: \widetilde{r}$. Moreover, $\mathrm{rank}(\widehat{\boldsymbol{\Sigma}}) = \mathrm{rank}(\mathbf{H} \widehat{\boldsymbol{\Gamma}} \mathbf{H}') \leq \mathrm{rank}(\mathbf{H}) = \widetilde{r}$ and, analogously, $\mathrm{rank}(\widehat{\boldsymbol{\Sigma}}^\pi) \leq \widetilde{r}$.
    Hence, Assumption~\ref{distinctass}  is fulfilled if the positive eigenvalues of $\mathbf{H} \mathrm{diag}(  \kappa_1^{-1},...,\kappa_6^{-1}) \mathbf{H}'$ are distinct.
    For the considered sample size scenarios in the simulation (balanced and unbalanced), this is only fulfilled for \textit{unbalanced sample sizes} ($\mathbf{n}_2$).
    \item \textbf{Case 3}: As in case~2, we have $\mathrm{rank}(\boldsymbol{\Sigma}^\pi)  = \mathrm{rank}(\mathbf{H}) = \mathrm{rank}(\boldsymbol{\Sigma}) =: \widetilde{r}$. 
    Hence, Assumption~\ref{conv_ass}  is fulfilled if $r_n \| \widehat{\boldsymbol{\Sigma}}^\pi - \boldsymbol{\Sigma}^\pi\| \xrightarrow{P} 0$.
        In our simulation, we have $\kappa_i = n_i/n$ and, thus, it holds $\| \widehat{\boldsymbol{\Sigma}}^\pi - \boldsymbol{\Sigma}^\pi\| = \mathcal{O}_P(n^{-1/2})$, which,  e.g., follows from Theorem 2.1 in \cite{sattler2024a}.
    Hence, we choose $r_n = \min\{n_1,...,n_6\}^{1/4}$ and $\varepsilon = 0.1$.
\end{itemize}

To evaluate the performance of our methods, we additionally simulated the following competitors:
\begin{itemize}
    \item \textbf{permutation\_bonf}: We were applying the naive permutation for each local hypothesis, resulting in the naive (uncorrected) permutation statistics
    $$  n(\mathbf{h}_\ell \widehat{\boldsymbol{\mu}}^\pi)^2 / (\widehat{\boldsymbol{\Sigma}}^\pi)_{\ell\ell}, \quad \ell\in\{1,...,L\}, $$
    and adjust the resulting $L$ permutation p-values with the Bonferroni correction.
    \item \textbf{asymptotic\_bonf}: Implements the multiple asymptotic tests with Bonferroni correction, where each test statistic is compared to the Bonferroni adjusted asymptotic critical value, that is the $(1-\alpha/L)$-quantile of the $\chi^2_1$ distribution.
    \item \textbf{asymptotic}: Implements the multiple asymptotic tests \cite{dickhaus2021multivariate}, where equicoordinate normal quantiles of $$\mathcal{N}_r(\mathbf 0, \mathrm{diag}(\widehat{\boldsymbol{\Sigma}}_{11},...,\widehat{\boldsymbol{\Sigma}}_{rr})^{-1/2}  \widehat{\boldsymbol{\Sigma}} \mathrm{diag}(\widehat{\boldsymbol{\Sigma}}_{11},...,\widehat{\boldsymbol{\Sigma}}_{rr})^{-1/2}) $$ are squared to determine the critical values.
\end{itemize}

\subsubsection{Simulation results}\label{sssec:ResANOVA}
The results of the simulation under the null and alternative hypothesis for the Dunnett-type contrast matrix can be found in Figures~\ref{fig:ANOVADunn} and~\ref{fig:ANOVADunnpower} as well as in Figure~\ref{fig:ANOVADunnpower2} in the appendix.
The results for the centering and Tukey-type contrast matrix are depicted in Figures~\ref{fig:ANOVAGM}--\ref{fig:ANOVATukpower2} in the appendix.
It should be noted that not always all methods are presented as some cases are not applicable in certain settings, see the explanations above.
\begin{figure}[tbh]
    \centering
    \includegraphics[width=\linewidth]{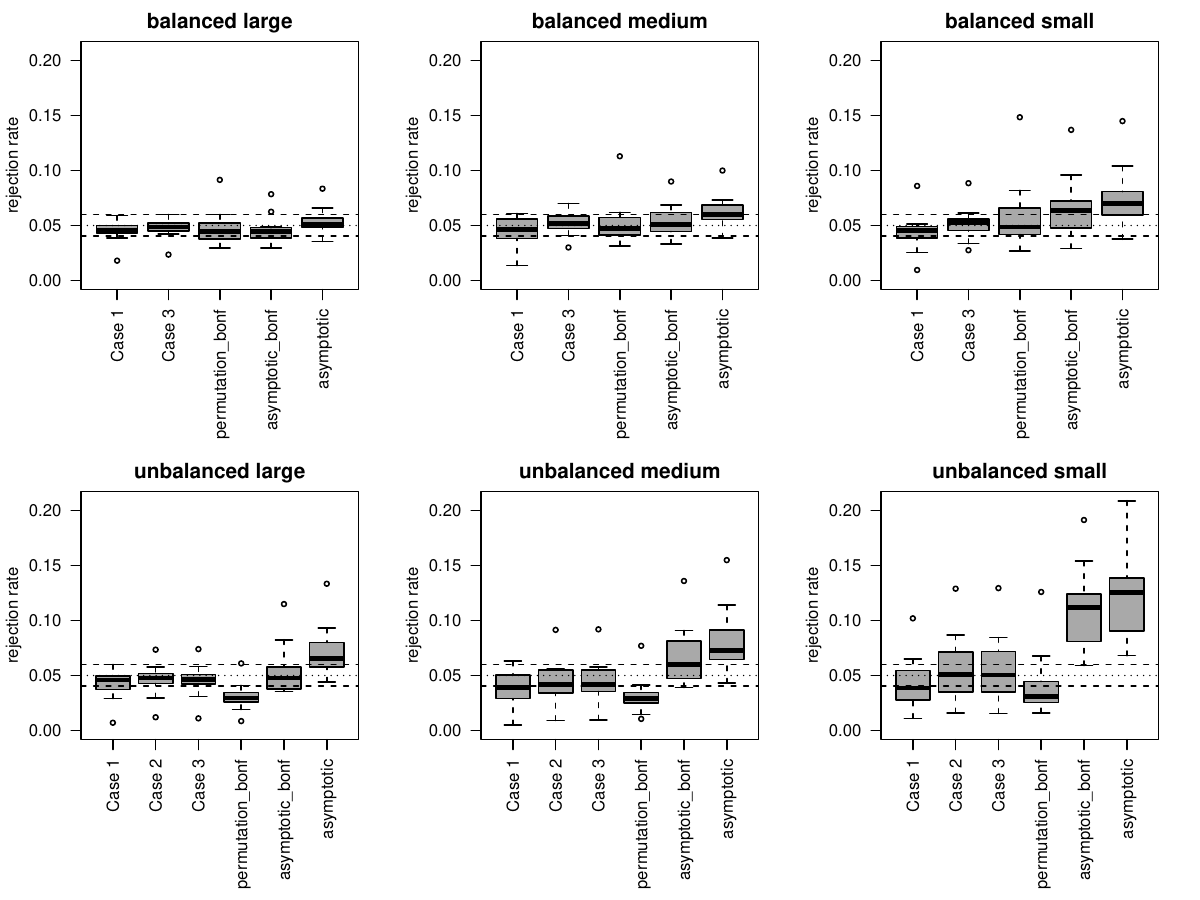}
    \caption{Empirical FWERs for the Dunnett-type contrast matrix. The dotted line represents the desired FWER of 5\% and the dashed lines represent the borders of the 95\% binomial confidence interval [0.0405, 0.06].}
    \label{fig:ANOVADunn}
\end{figure}
\begin{figure}[h]
    \centering
    \includegraphics[width=\linewidth]{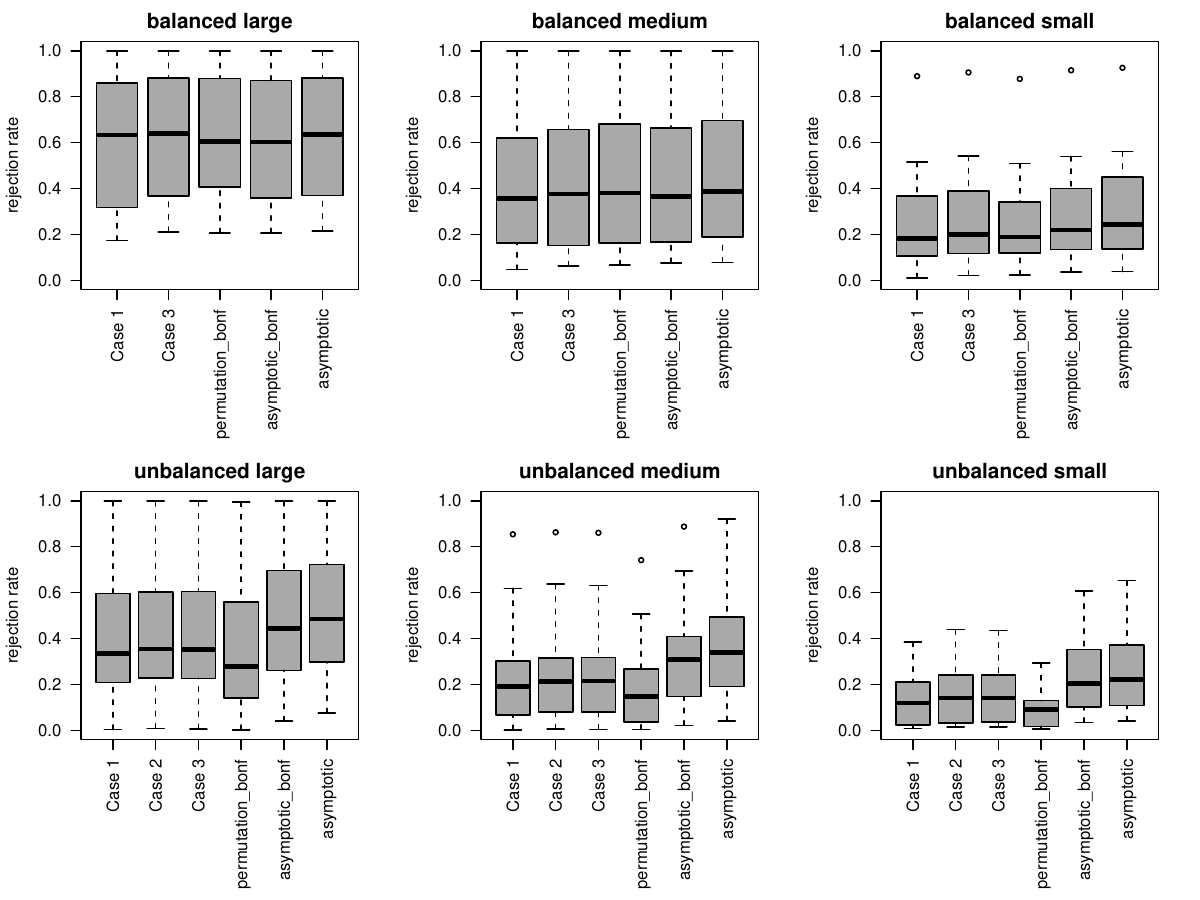}
    \caption{Empirical family-wise power for the Dunnett-type contrast matrix}
    \label{fig:ANOVADunnpower}
\end{figure}

\paragraph{FWER control for the Dunnett-type contrast matrix}
In Figure~\ref{fig:ANOVADunn}, we observe that all methods seem to keep the desired FWER of 5\% quite well in most settings with balanced sample sizes (first row) for the Dunnett-type contrast matrix. Even for smaller sample sizes with only 14 individuals per group, this can mainly be observed, although the multiple asymptotic tests have a slight tendency for liberal test decisions.
Interestingly, there are a few settings where all methods seem to struggle to control the FWER.
For the balanced scenarios, this is the setting with lognormally distributed random variables with the heteroscedastic variance scenario $\boldsymbol{\sigma}_2$, visible as upper outlier point in most of the boxplots in the first row of Figure~\ref{fig:ANOVADunn}.
For the sample sizes $n_1 = ... = n_6 = 14$, this scenario leads to empirical FWERs of 13.7--14.9\% for the Bonferroni adjusted naive permutation tests and both asymptotic approaches as well as an empirical FWER of approximately 9\% for case~1 and~3.
For $n_1 = ... = n_6 = 28$, the empirical FWERs decrease to 9.0--11.3\% for the Bonferroni adjusted naive permutation approach and the asymptotic approaches, while our methods seem to gain control over the FWER much quicker with only 5.9--7.0\% rejection rates. 
Even for $n_1 = ... = n_6 = 56$, the Bonferroni adjusted naive permutation tests and the asymptotic approaches still reach rejection rates over 8\%.
On the other side, the lowest outlier points for case~1 and~3 in the balanced scenarios (first row of Figure~\ref{fig:ANOVADunn}) correspond to lognormally distributed random variables as well but using the homoscedastic variances $\boldsymbol{\sigma}_1$. 

For the unbalanced scenarios (second row of Figure~\ref{fig:ANOVADunn}), the setting with lognormally distributed random variables with the negative pairing variance scenario $\boldsymbol{\sigma}_3$ leads to the upper outlier points and with the positive pairing variance scenario $\boldsymbol{\sigma}_2$ to the lower outlier points, reaching rejection rates even under 1\%.

Furthermore, it is apparent that the multiple asymptotic tests and the Bonferroni corrected asymptotic tests are performing much more liberal with unbalanced sample sizes, especially for small to medium sample sizes.
Additionally, the Bonferroni adjusted naive permutation tests are generally too conservative and the simulation suggests that the limit of the rejection rates seems to be notably below the desired global level of 5\%.

\paragraph{Power for the Dunnett-type contrast matrix}
For the balanced scenarios, all methods have relatively similar empirical family-wise and global power values (first row in Figures~\ref{fig:ANOVADunnpower} and~\ref{fig:ANOVADunnpower2}).

For the unbalanced scenarios, the multiple asymptotic tests and the Bonferroni corrected asymptotic tests reach the highest family-wise and global power (second row in Figures~\ref{fig:ANOVADunnpower} and~\ref{fig:ANOVADunnpower2}). However, this comes along with an increased FWER as noted previously and, thus, these methods should not be preferred for small to medium sample sizes.
On the other hand, the conservativeness of the Bonferroni adjusted naive permutation tests is reflected by a notable family-wise and global power loss of the Bonferroni adjusted naive permutation tests for the unbalanced sample sizes (second row in Figures~\ref{fig:ANOVADunnpower} and~\ref{fig:ANOVADunnpower2}) compared to our proposed methods (cases 1--3), which all yield relatively similar results regarding FWER control, family-wise power, and global power.

\paragraph{Centering and Tukey-type contrast matrix} The results with the centering and Tukey-type contrast matrix in Figures~\ref{fig:ANOVAGM}--\ref{fig:ANOVATukpower2} in the appendix look mainly similar to those of the Dunnett-type contrast matrix, which is why we will only highlight the most interesting observations.

One interesting point is that, while the asymptotic methods performed liberal primarily in unbalanced scenarios for the Dunnett-type contrast matrix,
for the centering matrix they exceed the desired FWER of 5\% in almost all scenarios (Figure~\ref{fig:ANOVAGM} in the appendix). Moreover, the FWERs seem to converge pretty slow as the rejection rates are still remarkably larger than 5\% for large sample sizes ($n=336$). 

Similarly to the Dunnett-type contrast matrix scenarios, the large outlying rejection rates for cases 2 and 3 in the unbalanced scenarios (second row of Figures~\ref{fig:ANOVAGM} and~\ref{fig:ANOVATuk} in the appendix) are reached for the skewed distributions (lognormal and chi-squared) with negative pairing.

Additionally, it is observable that the Bonferroni adjusted naive permutation tests control the FWER best in most scenarios for the centering and Tukey-type contrast matrix (Figures~\ref{fig:ANOVAGM} and~\ref{fig:ANOVATuk} in the appendix).
However, the family-wise power of the Bonferroni corrected tests is notably below those of the other tests for the Tukey-type contrast matrix (Figure~\ref{fig:ANOVATukpower} in the appendix).

For the centering matrix, all empirical family-wise powers equal zero, which is not really surprising:
The considered alternative yields the parameters $\boldsymbol{\theta} = 
(-1,-1,-1,-1,-1,5)'/4$; 
hence, most of the parameters have rather small deviations from $\theta_{0,\ell} = 0$.
Although the deviations are rather small, all local hypotheses are false, which means that all hypotheses need to be rejected to be counted for the family-wise power. This combination results in no empirical family-wise power.

\FloatBarrier
\subsection{Multivariate ANOVA}\FloatBarrier
In this section, we apply the methodology of Section~\ref{sec:methods} to the  quadratic form based multiple contrast testing problem~\cite{sattler2025quadraticformbasedmultiple}.
\subsubsection{Notation}
We reconsider Example~\ref{QFexample}, i.e., we have i.i.d.\ $d$-variate random variables
$\mathbf X_{i1},...,\mathbf X_{in_i}$ with mean $\boldsymbol{\mu}_i$ and strictly positive definite covariance matrix $\boldsymbol{\Gamma}_i$, for all $i\in\{1,...,k\}$.
Moreover, $\widehat{\boldsymbol{\Gamma}} = \widehat{\boldsymbol{\Gamma}}^\pi = \bigoplus_{i=1}^k n/n_i \widehat{\boldsymbol{\Gamma}}_i$, $\widehat{\boldsymbol{\Sigma}} := \mathbf{H} \widehat{\boldsymbol{\Gamma}} \mathbf{H}'$ and $\widehat{\boldsymbol{\Sigma}}^\pi$ being the permutation counterpart of $\widehat{\boldsymbol{\Sigma}}$, where $\widehat{\boldsymbol{\Gamma}}_1,...,\widehat{\boldsymbol{\Gamma}}_k$ denote the empirical covariance matrix estimators of the $k$ groups.
All further quantities are defined as in Example~\ref{QFexample}.

Similarly to the univariate case, Assumption~\ref{the_ass} follows by the multivariate central limit theorem, Theorem~1 in \cite{munko2024conditionaldeltamethodresamplingempirical}, and the weak law of large numbers.

In the following, we will focus on testing all pairwise equalities $\mathcal{H}_{0,\ell_1 \ell_2}: \boldsymbol{\mu}_{\ell_1} = \boldsymbol{\mu}_{\ell_2},$ $\ell_1 < \ell_2$.
Therefore, we can consider the $L= k(k-1)/2$ multiple Wald-type test statistics
$$ W_\ell = n(\mathbf{H}_\ell \widehat{\boldsymbol{\mu}})'(\mathbf{H}_\ell \widehat{\boldsymbol{\Gamma}} \mathbf{H}_\ell')^+(\mathbf{H}_\ell \widehat{\boldsymbol{\mu}}), \quad \ell\in\{1,...,L\} .$$
Here, we set $\mathbf{H}_\ell := \mathbf{h}_\ell \otimes \mathbf{I} \in \mathbb R^{d\times kd}, \ell\in\{1,...,L\},$ for $\mathbf{h}_\ell$ being the $\ell$th row of the Tukey-type contrast matrix \eqref{eq:Tukey}.
Alternatively, we can consider the multiple ANOVA-type test statistics
$$ W_\ell = n(\mathbf{H}_\ell \widehat{\boldsymbol{\mu}})'(\mathbf{H}_\ell \widehat{\boldsymbol{\mu}})/\mathrm{tr}(\mathbf{H}_\ell \widehat{\boldsymbol{\Gamma}} \mathbf{H}_\ell'), \quad \ell\in\{1,...,L\} .$$
The Wald- and ANOVA-type test statistics can be obtained from $(\sqrt{n}\mathbf{H}_\ell \widehat{\boldsymbol{\mu}}, \widehat{\boldsymbol{\Gamma}})$ through the functions presented in Example~\ref{QFexample}.
For each $\ell\in\{1,...,L\}$, we reject the $\ell$th null hypothesis iff $W_\ell$ exceeds the balanced critical value (cf. Section~2.3 in \cite{munko2025inference}).

\subsubsection{Simulation settings}
The simulation scenarios in this section are motivated by the simulation study in \cite{sattler2025quadraticformbasedmultiple}.
In detail, we consider $k=4$ groups with unbalanced sample sizes $(n_1,n_2,n_3,n_4) = K\cdot (10,10,5,5)$, $K \in\{1,2,4 \}$, and $d=4$ dimensions. The random vectors are generated independently from the model  $$\mathbf X_{ij} =  \boldsymbol{\Gamma}_i^{1/2} \mathbf Z_{ij}+ \boldsymbol{\mu}_i,\quad i\in\{1,...,k\}, j\in\{1,...,n_i\},$$ where $\mathbf Z_{ij} = (Z_{ij1},...,Z_{ij1})'$ consists of independently generated $Z_{ijl}$. For the distributions, we choose
\begin{itemize}
\item the centered exponential distribution with rate 1; $Exp(1) - 1,$
\item the standard normal distribution; $\mathcal {N}(0,1),$
\item the standardized student-t distribution with 9 degrees of freedom; $\sqrt{7/9}\cdot t_9$.
\end{itemize} 
For the covariance matrices, we consider heterogeneous equicorrelation matrices 
$$ \boldsymbol{\Gamma}_1=\boldsymbol{\Gamma}_2 = \begin{bmatrix}
3 & 1 & 1& 1\\
1 & 4 & 1 & 1\\
1 & 1 & 5 & 1\\
1 & 1 & 1 & 6
\end{bmatrix}  $$
and heterogenous autoregressive covariance matrices $ \boldsymbol{\Gamma}_3, \boldsymbol{\Gamma}_4$
with entries given by
$ (\boldsymbol{\Gamma}_3)_{ab}=(\boldsymbol{\Gamma}_4)_{ab} =  0.65^{|a-b|}\sqrt{(a+1)(b+1)}  $ for all $a,b\in\{1,...,4\}$.
Under the null hypothesis, we choose $\boldsymbol{\mu}_1 = ... = \boldsymbol{\mu}_4 = \mathbf 0$.
Under the alternative, we set $\boldsymbol{\mu}_4 = (\delta,\delta,\delta,\delta)'$ for $\delta = 1.5$. For the exponential distribution, we also consider $\delta = -1.5$.

Similarly as in Section~\ref{ssec:ANOVA}, case~1 is not applicable due to the singularity of $\boldsymbol{\Sigma}^\pi$, but the assumptions of cases~2 and 3 are satisfied.
Furthermore, note that all settings are non-exchangeable due to the different covariance matrices across the groups.
Hence, none of the methods guarantees FWER control for finite sample sizes.

In addition to our methods, we simulated the following competitors:
\begin{itemize}
    \item \textbf{permutation Bonferroni}: We were applying the naive permutation for each local hypothesis
    by using the (uncorrected) permutation counterparts of the test statistics
    and adjust the resulting $L$ permutation p-values with the Bonferroni correction.
    \item \textbf{asymptotic Bonferroni}: Only for the Wald-type test statistic, this method implements the multiple asymptotic tests with Bonferroni correction, where each test statistic is compared to the Bonferroni adjusted asymptotic critical value, that is the $(1-\alpha/L)$-quantile of the $\chi^2_d$ distribution. For the ANOVA-type test statistic, this method is not applied since the limit distributions of the ANOVA-type test statistics are generally not $\chi^2_d$ distributions.
    \item \textbf{asymptotic multiple}: Implements the multiple asymptotic tests, where the balanced critical values \cite{beran1988balanced,munko2024rmst} are obtained from 1,999 simulated values of $$\left((\mathbf{H}_\ell \mathbf Y)'(\mathbf{H}_\ell \widehat{\boldsymbol{\Gamma}} \mathbf{H}_\ell')^+(\mathbf{H}_\ell \mathbf Y)\right)_{\ell\in\{1,...,L\}} \quad\text{and}\quad \left((\mathbf{H}_\ell \mathbf Y)'(\mathbf{H}_\ell \mathbf Y)/\mathrm{tr}(\mathbf{H}_\ell \widehat{\boldsymbol{\Gamma}} \mathbf{H}_\ell')\right)_{\ell\in\{1,...,L\}},$$ respectively, with $\mathbf{Y} \sim\mathcal{N}_{16}(\mathbf{0},  \widehat{\boldsymbol{\Gamma}})$.
\end{itemize}

\subsubsection{Simulation results}
The results for the Wald-type test statistic are presented in Tables~\ref{tab:MANOVAFWER}--\ref{tab:MANOVApower}. The results for the ANOVA-type test statistic are moved to Tables~\ref{tab:MANOVAFWERATS}--\ref{tab:MANOVApowerATS} in the appendix for the sake of clarity.
\begin{table}[ht]
\centering
\begin{tabular}{|ll||rr|r|rr|}
  \hline
\multicolumn{2}{|c||}{setting} & \multicolumn{2}{c|}{multiple permutation} & \multicolumn{1}{c|}{permutation} & \multicolumn{2}{c|}{asymptotic}\\
 sample size & distribution& Case 2 & Case 3 & 
 Bonferroni & 
 Bonferroni & multiple \\ 
  \hline
  large & $Exp(1)$ & 3.45 & 3.45 & 
  3.65 & 
  10.15 & 11.45 \\ 
medium & $Exp(1)$ & 3.00 & 2.80 & 
  3.85 & 
  16.75 & 19.30 \\ 
small & $Exp(1)$ & 3.40 & 3.35 & 
  3.65 & 
  42.60 & 45.10 \\ 
large & $\mathcal N(0,1)$ & \textbf{4.55} & \textbf{4.45} & 
  \textbf{4.40} & 
  10.35 & 11.40 \\ 
medium & $\mathcal N(0,1)$ & \textbf{4.10} & \textbf{4.20} & 
  3.75 & 
  18.20 & 20.10 \\ 
small & $\mathcal N(0,1)$ & 2.70 & 2.95 & 
  3.25 & 
  46.15 & 48.60 \\ 
large & $t_9$ & 3.40 & 3.35 & 
  3.00 & 
  7.95 & 9.55 \\ 
medium & $t_9$ & 3.60 & 3.40 & 
  3.25 & 
  18.50 & 20.75 \\ 
small & $t_9$ & 2.65 & 2.75 & 
  2.95 & 
  44.95 & 47.80 \\ 
   \hline
\end{tabular}
\caption{Empirical FWERs in \% for the Wald-type test statistics. The values in the 95\% binomial confidence interval [4.05, 6] are printed in bold type.}
\label{tab:MANOVAFWER}
\end{table}
\paragraph{FWER control for the Wald-type test statistic}
In Table~\ref{tab:MANOVAFWER}, we can see that our multiple permutation tests as well as the Bonferroni corrected naive permutation tests based on the Wald-type test statistics seem to control the FWER for all distribution and sample size scenarios. However, most of the tests are rather conservative, especially for small sample sizes, as many empirical FWERs are clearly below the desired global significance level of 5\%.
In contrast, the multiple asymptotic tests and the Bonferroni corrected asymptotic tests perform too liberal with rejection rates up to 49\%.
For large sample sizes, all permutation based tests are still slightly conservative while the asymptotic based tests are still too liberal. Therefrom, we can deduce that the asymptotics did not really kick in for the considered sample sizes yet. However, it should be noted that the 'large' sample sizes considered here, $(n_1,n_2,n_3,n_4) = (40,40,20,20)$, are not particularly large, which may explain the results.

\begin{table}[ht]
\centering
\begin{tabular}{|rll||rr|r|rr|}
  \hline
\multicolumn{3}{|c||}{setting}  & \multicolumn{2}{c|}{multiple permutation} & \multicolumn{1}{c|}{permutation} & \multicolumn{2}{c|}{asymptotic}\\
 $\delta$ &sample size & distribution& Case 2 & Case 3 & Bonferroni &  Bonferroni & multiple \\ 
  \hline
   -1.5 & large & $Exp(1)$ & 43.60 & 43.50 & \textbf{51.95} & 65.95 & 67.85 \\ 
-1.5 & medium & $Exp(1)$ & 9.50 & 9.10 & \textbf{14.15} & 37.80 & 39.40 \\ 
-1.5 & small & $Exp(1)$ & 0.45 & 0.35 & \textbf{0.70} & 27.25 & 28.80 \\ 
1.5 & large & $Exp(1)$ & 35.55 & 36.05 & \textbf{36.50} & 54.55 & 57.40 \\ 
1.5 & medium & $Exp(1)$ & 5.75 & \textbf{5.85} & 5.10 & 21.55 & 22.95 \\ 
1.5 & small & $Exp(1)$ & 0.75 & \textbf{0.85} & 0.20 & 14.30 & 15.70 \\ 
1.5 & large & $\mathcal N(0,1)$ & 38.05 & \textbf{38.50} & 35.05 & 54.05 & 56.65 \\ 
1.5 & medium & $\mathcal N(0,1)$ & 4.70 & \textbf{4.90} & 4.10 & 21.55 & 22.55 \\ 
1.5 & small & $\mathcal N(0,1)$ & 0.40 & \textbf{0.45} & 0.25 & 15.30 & 16.80 \\ 
1.5 & large & $t_9$ & 39.20 & \textbf{39.75} & 36.90 & 55.25 & 57.35 \\ 
1.5 & medium & $t_9$ & \textbf{5.55} & 5.50 & 4.75 & 20.60 & 22.05 \\ 
1.5 & small & $t_9$ & \textbf{0.55} & \textbf{0.55} & 0.30 & 17.25 & 18.75 \\ 
   \hline
\end{tabular}
\caption{Empirical family-wise power in \% for the Wald-type test statistics. The largest values per setting from the methods that control the FWER regarding Table~\ref{tab:MANOVAFWER} are printed in bold type.}
\label{tab:MANOVAallpower}
\end{table}

\begin{table}[ht]
\centering
\begin{tabular}{|rll||rr|r|rr|}
  \hline
\multicolumn{3}{|c||}{setting} & \multicolumn{2}{c|}{multiple permutation} & \multicolumn{1}{c|}{permutation} & \multicolumn{2}{c|}{asymptotic}\\
 $\delta$ & sample size & distribution& Case 2 & Case 3 &  Bonferroni &  Bonferroni & multiple \\ 
  \hline
-1.5 & large & $Exp(1)$  & 84.30 & \textbf{84.45} & 
83.15 & 
91.95 & 92.65 \\ 
-1.5 & medium & $Exp(1)$  & \textbf{50.45} & 50.05 & 
49.90 & 
76.90 & 78.60 \\ 
-1.5 & small & $Exp(1)$  & \textbf{21.70} & 21.05 & 
19.05 & 
77.75 & 79.50 \\ 
1.5 & large & $Exp(1)$  & 89.85 & 89.65 & 
\textbf{91.75} & 
97.05 & 97.60 \\ 
1.5 & medium & $Exp(1)$  & 43.30 & 43.30 & 
\textbf{49.25} & 
83.15 & 84.90 \\ 
1.5 & small & $Exp(1)$  & 15.30 & 15.60 & 
\textbf{18.00} & 
79.30 & 81.50 \\ 
1.5 & large & $\mathcal N(0,1)$  & 84.50 & \textbf{84.80} & 
82.50 & 
92.40 & 93.30 \\ 
1.5 & medium & $\mathcal N(0,1)$  & 37.85 & \textbf{38.20} & 
33.80 & 
72.45 & 74.95 \\ 
1.5 & small & $\mathcal N(0,1)$  & \textbf{13.20} & 12.80 & 
11.05 & 
72.90 & 74.70 \\ 
1.5 & large & $t_9$  & 86.85 & \textbf{87.30} & 
85.30 & 
93.60 & 94.45 \\ 
1.5 & medium & $t_9$  & \textbf{40.60} & 40.55 & 
37.25 & 
72.45 & 74.80 \\ 
1.5 & small & $t_9$  & \textbf{14.20} & \textbf{14.20} & 
11.65 & 
73.35 & 75.05 \\ 
   \hline
\end{tabular}
\caption{Empirical global power in \% for the Wald-type test statistics. The largest values per setting from the methods that control the FWER regarding Table~\ref{tab:MANOVAFWER} are printed in bold type.}
\label{tab:MANOVApower}
\end{table}
\paragraph{Power for the Wald-type test statistic}
Regarding the power, Tables~\ref{tab:MANOVAallpower} and~\ref{tab:MANOVApower} show that the permutation based tests have a rather comparable family-wise and global power across all settings.
The most powerful tests from the permutation-based methods seem to depend on the direction of shift for the skewed distribution and on the type of power:
While the Bonferroni corrected naive permutation tests could most often detect all false hypotheses that resulted from a negative shift $(\delta = -1.5)$ (Table~\ref{tab:MANOVAallpower}), the multiple permutation approach yields slightly higher global powers, i.e., reject at least one hypothesis.
Interestingly, this effect is vice versa for a positive shift $(\delta = 1.5)$. 
For the symmetric distributions, the multiple permutation approach can outperform the Bonferroni corrected naive permutation tests for both power types.
By looking at the high power values of the asymptotic based tests, we should remind that these tests come along with a dramatically increased FWER and, thus, are not recommended to use for smaller sample sizes.

\paragraph{ANOVA-type test statistic}
The ANOVA-type test statistic provides much more power under the chosen alternative than the Wald-type test statistic, as shown in Tables~\ref{tab:MANOVAallpowerATS} and~\ref{tab:MANOVApowerATS} in the appendix.
However, under the null hypothesis, we observe that all methods in Table~\ref{tab:MANOVAFWERATS} in the appendix are performing too liberal by using the ANOVA-type test statistic with empirical FWERs of 5.25\% - 9.4\%.
Here, it seems that the Bonferroni corrected permutation test provides the best type-I error control, which can be explained by the conservativeness of the Bonferroni correction.

\FloatBarrier
\subsection{Comparison of RMSTs}
Finally, we apply the methodology of Section~\ref{sec:methods} to the multiple restricted mean survival time (RMST) based tests in \cite{munko2024rmst}.

\subsubsection{Notation}
In survival analysis, we are interested in (characteristics of) the distribution of the survival time $T$, which is a non-negative random variable.
However, the observations underwent right-censoring, modelled through an independent non-negative random variable $C$.
This means that we only observe the data pair $(X,\delta)$ instead of $T$, where $X = \min\{T,C\}$ denotes the censored event time and $\delta = \mathbbm{1}\{T \leq C\}$  denotes the censoring status. Here and throughout, $\mathbbm{1}$ denotes the indicator function.

For a factorial survival data setup, we assume that we have i.i.d.\ data pairs $(X_{ij},\delta_{ij}) = (\min\{T_{ij},C_{ij}\}, \mathbbm{1}\{T_{ij} \leq C_{ij}\}), j\in\{1,...,n_i\}$, for all $i\in\{1,...,k\}$, see e.g. \cite{ditzhaus2021inferring, munko2024rmst} for details.
The survival function is defined by $S_i(\cdot) := P(T_{ij} > \cdot )$ 
and the restricted mean survival times (RMSTs) by $\mu_i := \int_0^\tau S_i(t) \mathrm{d}t$ for some fixed $\tau > 0$
and for all $i\in\{1,...,k\}$.
Then, $\boldsymbol{\mu}=(\mu_1,...,\mu_k)'\in\mathbb R^k$ denotes the vector of RMSTs, $\widehat{\boldsymbol{\mu}}$ being its estimator as in \cite{munko2024rmst}, and
$\widehat{\lambda}^2_1,...,\widehat{\lambda}^2_k$ denote the variance estimators (Equation (3) in \cite{munko2024rmst}).
All other quantities are defined as in Section~\ref{sssec:NotANOVA}.

The proof of Theorems~1 and~2 in \cite{munko2024rmst} ensure the covariance matrix consistencies as well as the asymptotic normality of the estimator and the permutation version with $\boldsymbol{\Sigma}^\pi = {\sigma}^{\pi2}\mathbf{H} \mathrm{diag}(  \kappa_1^{-1},..., \kappa_k^{-1}) \mathbf{H}'$ for some ${\sigma}^{\pi2} > 0$ under mild assumptions. 

\subsubsection{Simulation settings}
We simulated the same 270 settings as described in Section~4.1 in \cite{munko2024rmst} together with the Dunnett-type, centering, and Tukey-type contrast matrix. 
Here, similarly to Section~\ref{ssec:ANOVA}, case~1 is only applicable for the Dunnett-type contrast matrix, whereas case~2 is only applicable for unbalanced sample sizes.

In order to compare our methods to competitors, we added the following methods:
\begin{itemize}
    \item \textbf{permutation\_bonf}: the global studentized permutation tests from Section~2.3 in \cite{munko2024rmst} adjusted with the Bonferroni correction,
    \item \textbf{groupwise}: the multiple groupwise bootstrap tests from Section~3 in \cite{munko2024rmst},
    \item \textbf{asymptotic}: asymptotic multiple Wald-type tests as described in Section~4.1 in \cite{munko2024rmst}.
\end{itemize}
We chose those competitors because the multiple groupwise bootstrap test and the global studentized permutation tests adjusted with the Bonferroni correction yielded the best results regarding FWER control in the simulation study of \cite{munko2024rmst} from all methods that provide multiple test decisions.
Moreover, the asymptotic multiple Wald-type tests serve as a benchmark method.

\subsubsection{Simulation results}
In Figures~\ref{fig:RMSTDunn}, \ref{fig:RMSTDunnpower}, and \ref{fig:RMSTDunnpower2}, the simulation results for the FWER, family-wise power, and global power, respectively, are depicted for the Dunnett-type contrast matrix.
The results for the centering and Tukey-type contrast matrix can be found in Figures~\ref{fig:RMSTGM}--\ref{fig:RMSTTukpower2} in the appendix.
\begin{figure}[tbh]
    \centering
    \includegraphics[width=\linewidth,trim= 1mm 2mm 7mm 2mm,clip]{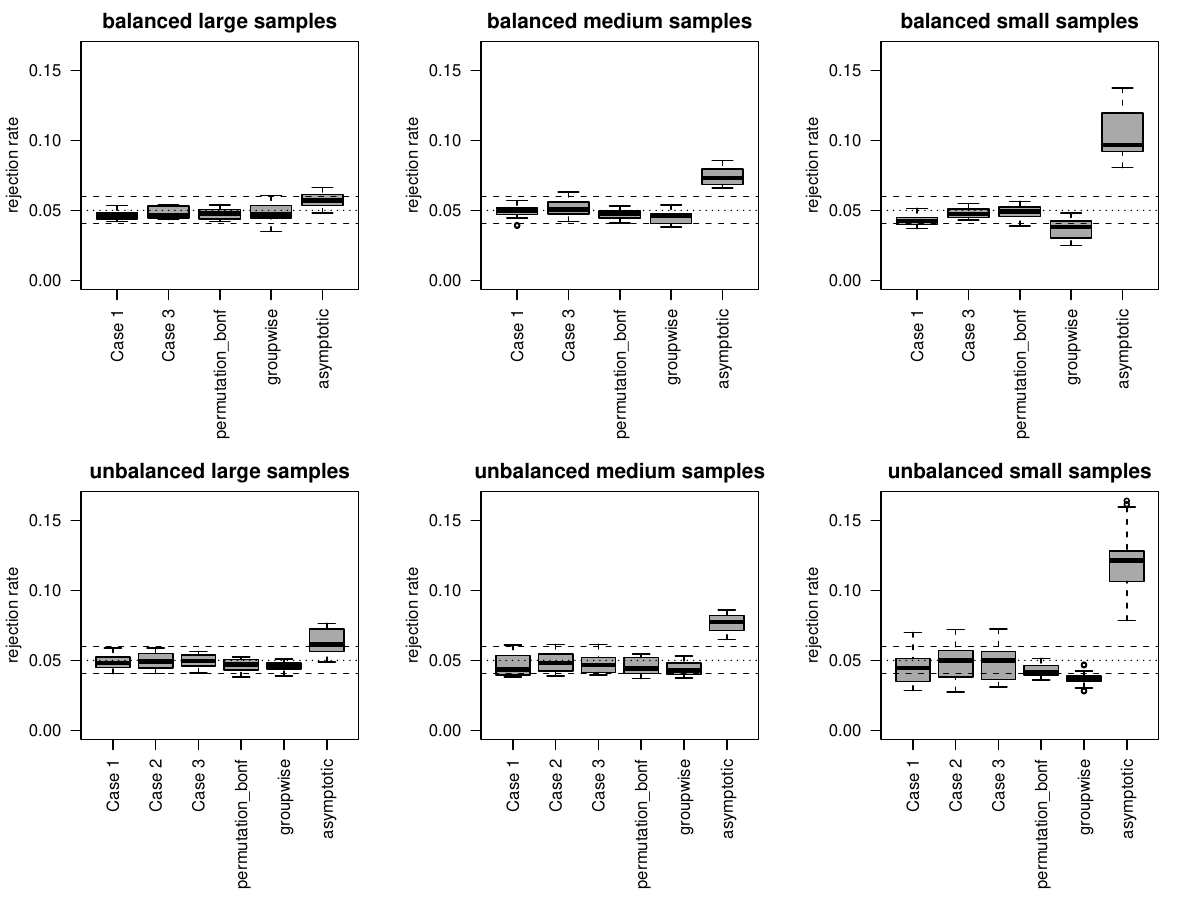}
    \caption{Empirical FWERs for the Dunnett-type contrast matrix. The dotted line represents the desired FWER of 5\% and the dashed lines represent the borders of the 95\% binomial confidence interval [0.0405, 0.06].}
    \label{fig:RMSTDunn}
\end{figure}
\begin{figure}[h]
    \centering
    \includegraphics[width=\linewidth,trim= 1mm 2mm 7mm 2mm,clip]{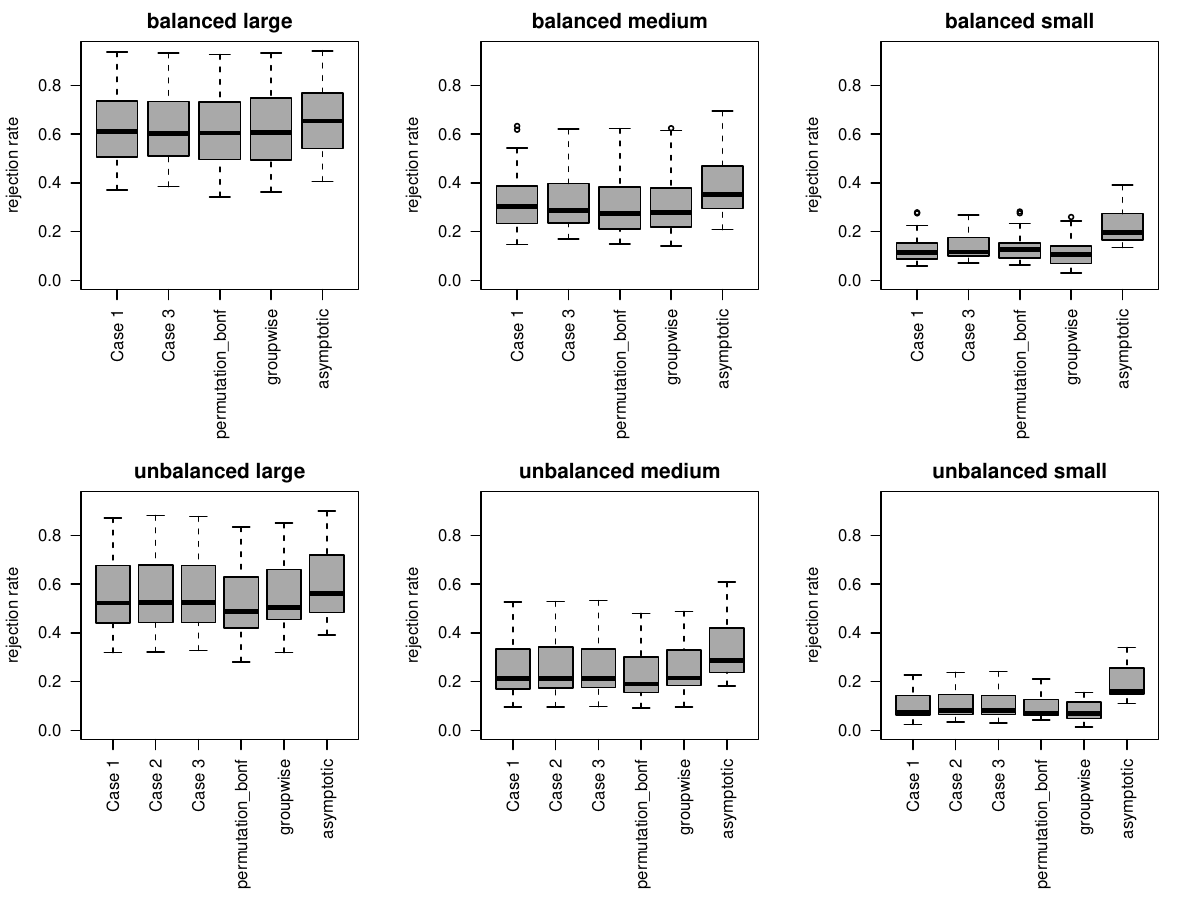}
    \caption{Empirical family-wise power for the Dunnett-type contrast matrix}
    \label{fig:RMSTDunnpower}
\end{figure}

\paragraph{FWER control}
The results of the simulation under the null hypothesis can be found in Figure~\ref{fig:RMSTDunn} for the Dunnett-type contrast matrix and in Figures~\ref{fig:RMSTGM} and~\ref{fig:RMSTTuk} in the appendix for the centering and Tukey-type contrast matrix.
We can observe that all cases 1--3, when applicable, lead to a quite accurate FWER control, similar to the multiple groupwise bootstrap tests and global permutation tests with Bonferroni correction.
The asymptotic-based multiple Wald-type tests without a resampling mechanism, however, lead to an inflated type-I error, especially for small and medium sample sizes, as already noted in \cite{munko2024rmst}.

\paragraph{Family-wise power}
Figures~\ref{fig:RMSTDunnpower}, \ref{fig:RMSTGMpower}, and \ref{fig:RMSTTukpower} show the corresponding empirical family-wise power.
The asymptotic-based multiple Wald-type tests yield the highest rejection rates; however, this procedure could not control the FWER well, especially for smaller sample sizes, under the null hypothesis. Hence, we do not recommend to use this approach for small to medium sample sizes.
The empirical family-wise power of the Bonferroni corrected permutation tests seem to be usually lower than the power of our methods, especially for unbalanced sample sizes (second row in Figures~\ref{fig:RMSTDunnpower} and \ref{fig:RMSTTukpower}), underlining the known conservativeness of the Bonferroni correction.
For small sample sizes, the multiple groupwise bootstrap tests has slightly less family-wise power than our methods (third column in Figures~\ref{fig:RMSTDunnpower} and \ref{fig:RMSTTukpower}).
All other methods perform mainly comparable across the different settings regarding family-wise power.
Again, as in Section~\ref{sssec:ResANOVA}, the centering matrix leads to almost no family-wise power due to rather small effect sizes under the chosen alternative.

\paragraph{Global power}
Our methods, as well as the permutation tests with Bonferroni correction, seem to have slightly more global power than the multiple groupwise bootstrap tests for small samples with the Dunnett-type contrast matrix (third column in Figure~\ref{fig:RMSTDunnpower2}).
In contrast, the multiple groupwise bootstrap tests and the permutation tests with Bonferroni correction achieve slightly more global power than the methods from case~2 and~3 for unbalanced sample sizes with the centering matrix (second row in Figure~\ref{fig:RMSTGMpower2}).
For the Tukey-type contrast matrix and balanced sample sizes (first row in Figure~\ref{fig:RMSTTukpower2}), the method from case~3 tends to be a bit more powerful than the multiple groupwise bootstrap tests and the permutation tests with Bonferroni correction.

\FloatBarrier

\section{Discussion}\label{sec:discussion}
Permutation tests are widely used in quantitative sciences due to their finite-sample exactness under exchangeability. As this assumption is often violated in practice, studentization has become a standard tool to obtain asymptotic validity and typically provides accurate approximations even for small sample sizes.
In multiple testing problems, however, available permutation approaches either remain computationally prohibitive, such as prepivoting procedures \cite{romano2005exact}, or rely on conservative corrections like Bonferroni adjustments that ignore the joint dependence structure of the test statistics. The present paper closes this gap by introducing a general framework that corrects the covariance structure of multiple permutation statistics, thereby recovering the correct joint limiting distribution and enabling asymptotically valid multiple permutation tests.
In several settings and for different parameters, extensive simulation studies indicate that the proposed framework shows favourable properties relative to existing approaches.

Our theoretical results establish asymptotic validity under three different sets of assumptions: The first case requires a full-rank permutation covariance matrix, the second relies primarily on distinct eigenvalues of the permutation covariance matrix, and the third assumes convergence of a covariance matrix estimator at a known rate. 
If two or all sets of assumptions can be validated, the question arises which of the three methods should be applied.
From a practical perspective, cases~1 and~2 are preferable whenever their assumptions are satisfied, as the third case depends on additional tuning parameters, such as the threshold $\varepsilon > 0$ and the rate sequence $(r_n)_n$, which may affect performance. Encouragingly, our simulation results indicate that all three approaches perform comparably well in terms of family-wise error rate control as well as family-wise and global power, whenever applicable.

A theoretical comparison of the three cases, particularly with respect to asymptotic efficiency and local power properties, remains an open problem. Future work could complement such analyses with more extensive simulation studies that systematically explore the constellations in which one case may outperform the others.

Furthermore, it would be desirable to develop data-driven procedures for selecting the threshold parameter $\varepsilon$ in case~3. Such procedures could reduce sensitivity to tuning choices and further enhance the practical applicability of the proposed approach.

Moreover, the approach can be expanded to other null hypotheses and parameters as outlined in Remark~\ref{Extension}.

Finally, it would be of considerable interest to extend the approach to non-Gaussian limit distributions in future works, such as those arising, for instance, from non-degenerate U-statistics or from increasing dimension as considered in \cite{sattler2021}. While such extensions would be substantially more demanding from a theoretical perspective, they would greatly broaden the applicability of the proposed methodology. A particularly relevant example would be limit distributions given by weighted sums of independent $\chi_1^2$-distributed random variables.

\section*{Acknowledgments}
Merle Munko gratefully acknowledges support from the Deutsche Forschungsgemeinschaft (Project No. 352692197 and 314838170).
Merle Munko would like to thank the late Marc Ditzhaus, the greatest fan of permutation tests she knew, for many early discussions on permutation tests and multiple testing.
The authors are grateful to Markus Pauly for insightful comments and valuable suggestions.
{OpenAI's  ChatGPT model was used to refine phrasing and improve readability. The authors reviewed and approved all resulting text.}

\bibliographystyle{abbrv}
\bibliography{literature}
\appendix
\section{Finite-Sample Result under Invariance}\label{sec:finite}

In this section, we study the finite-sample properties of our proposed methods.
It will turn out that, in some specific scenarios, the methods are finitely exact under exchangeability.
Therefore, we use the ideas of \cite{hemerik2018exact}.


Let $\mathbf X_n$ denote the data in a space $\chi_{1n}$ and $G$ be a finite set of transformations $g: \chi_{1n} \to \chi_{1n}$. Moreover, $G$ is a group with respect to the composition operation.
Assume that we can write $\mathbf W^\pi = w^\pi(h(\mathbf X_n),\mathbf{X}_n) = (w_\ell^\pi(h(\mathbf X_n),\mathbf{X}_n))_{\ell\in\{1,...,L\}}$, where $h$ is uniformly distributed on $G$ and $w^\pi = (w_\ell^\pi)_{\ell\in\{1,...,L\}}: \chi_{1n}^2 \to \mathbb R^L$ is a map,
and $\mathbf W = w^\pi(\mathbf X_n,\mathbf{X}_n)$. 
Intuitively speaking, the latter assumption means that the permutation statistics should depend in the same way on the permuted data as the original test statistics depend on the original data, such that the permutation test statistic equals the original test statistic if the data is not permuted.

Then, we obtain the following finite-sample result for the family-wise error rate:
\begin{theorem}
Let $\mathcal T \subset\{1,...,L\}$ and $\alpha\in [0,1)$.
    Furthermore, let $\mathbf W^\pi = w^\pi(h(\mathbf X_n),\mathbf{X}_n) = (w_\ell^\pi(h(\mathbf X_n),\mathbf{X}_n))_{\ell\in\{1,...,L\}}$, where $h$ is uniformly distributed on $G$ and $w^\pi = (w_\ell^\pi)_{\ell\in\{1,...,L\}}: \chi_{1n}^2 \to \mathbb R^L$ is a map,
 $\mathbf W = w^\pi(\mathbf X_n,\mathbf{X}_n)$, and the joint distribution of $\left(w_\ell^\pi(g(\mathbf X_n),\mathbf{X}_n)\right)_{\ell\in\mathcal T}, g\in G,$ is invariant under all transformations in $G$ of $\mathbf X_n$, i.e.,
 $$ \left( \left(w_\ell^\pi(a_1(\mathbf X_n),\mathbf{X}_n)\right)_{\ell\in\mathcal T},\dots,\left(w_\ell^\pi(a_{|G|}(\mathbf X_n),\mathbf{X}_n)\right)_{\ell\in\mathcal T} \right) \overset{d}{=} \left( \left(w_\ell^\pi(a_1(g(\mathbf X_n)),\mathbf{X}_n)\right)_{\ell\in\mathcal T},\dots,\left(w_\ell^\pi(a_{|G|}(g(\mathbf X_n)),\mathbf{X}_n)\right)_{\ell\in\mathcal T}  \right)$$ for all $g\in G = \{a_1,\dots, a_{|G|}\}$.
Then, we have $$P\left( \exists \ell\in \mathcal{T}:\: W_\ell > W_\ell^{\pi (k)}\right) \leq \alpha.$$
Here, $W_\ell^{\pi (1)} \leq ... \leq W_\ell^{\pi (|G|)}$ denote the sorted values of $w_\ell^{\pi }(g(\mathbf X_n),\mathbf{X}_n), g\in G,$ for all $\ell\in\mathcal{T}$ and $k$ chooses a suitable quantile such that
$|\{g\in G\mid \exists\ell\in\mathcal{T}:\:  w_\ell^{\pi }(g(\mathbf X_n),\mathbf{X}_n) >  W_\ell^{\pi (k)} \}|  \leq \alpha \cdot |G|.$
\end{theorem}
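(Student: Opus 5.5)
We follow the standard randomization argument of \cite{hemerik2018exact}, adapted to the multiple (max-type) rejection event. The idea is to replace the single observed statistic by a uniformly transformed copy of the data and then count transformations.

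\textbf{Step 1 (the rejection event as a function of a transformation).} For $g\in G$, write $\mathbf W^{g} := (w_\ell^\pi(g(\mathbf X_n),\mathbf X_n))_{\ell\in\mathcal T}$. Define the indicator
\[
\phi(g,\mathbf X_n) := \mathbbm{1}\{\exists \ell\in\mathcal T:\ w_\ell^\pi(g(\mathbf X_n),\mathbf X_n) > W_\ell^{\pi(k)}(\mathbf X_n)\}.
\]
Here $W_\ell^{\pi(k)}(\mathbf X_n)$ is the $k$th order statistic of the multiset $\{w_\ell^\pi(a(\mathbf X_n),\mathbf X_n): a\in G\}$. Since the identity belongs to $G$ and $\mathbf W = w^\pi(\mathbf X_n,\mathbf X_n)$, the event of interest has probability $\E\,\phi(\mathrm{id},\mathbf X_n)$. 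The choice of $k$ gives the deterministic bound $\sum_{g\in G}\phi(g,\mathbf X_n)\le \alpha|G|$.

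\textbf{Step 2 (invariance).} Fix $g\in G$. Because $G$ is a group, $a\mapsto a\circ g$ is a bijection of $G$. Hence the multiset $\{w_\ell^\pi(a(g(\mathbf X_n)),\mathbf X_n):a\in G\}$ is a reordering of $\{w_\ell^\pi(a(\mathbf X_n),\mathbf X_n):a\in G\}$, and its order statistics coincide with the $W_\ell^{\pi(k)}$. Therefore $\phi(\mathrm{id},\mathbf X_n)$ is a fixed measurable function $F$ of the vector $\big((w_\ell^\pi(a_j(\mathbf X_n),\mathbf X_n))_{\ell\in\mathcal T}\big)_{j}$: it compares the entry at the identity with the order statistics of all entries. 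Applying the same $F$ to the vector with $g(\mathbf X_n)$ substituted inside the first argument yields $\phi(g,\mathbf X_n)$, again by the reordering observation. The assumed distributional invariance then gives $\E\,\phi(\mathrm{id},\mathbf X_n)=\E\,\phi(g,\mathbf X_n)$ for every $g\in G$.

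\textbf{Step 3 (averaging).} Averaging over $g$ yields
\[
P(\exists\ell\in\mathcal T: W_\ell>W_\ell^{\pi(k)}) = \frac{1}{|G|}\sum_{g\in G}\E\,\phi(g,\mathbf X_n) = \E\Big[\frac{1}{|G|}\sum_{g}\phi(g,\mathbf X_n)\Big]\le\alpha.
\]
The random $h$ plays no role beyond defining the reference set.

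\textbf{Main obstacle.} The delicate point is Step 2. The second argument $\mathbf X_n$ of $w_\ell^\pi$ is \emph{not} transformed, because it carries the estimated covariances such as $\widehat{\boldsymbol\Sigma}$ and $\widehat{\mathbf U}$. One must check that the stated invariance, which acts only on the first argument, is exactly what is needed: the indicator $F$ must read both the statistic at the identity and the thresholds off the same transformed vector. This requires the order statistics to be permutation invariant in $j$, which is where the group property $Ga=G$ enters. Ties are harmless because of the strict inequality and the definition of $k$.
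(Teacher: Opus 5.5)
Your proof is correct and takes the same route as the paper, which simply invokes the proof of Theorem~1 in \cite{hemerik2018exact} with the vector order ``$\mathbf x>\mathbf y$ iff $x_\ell>y_\ell$ for some $\ell\in\mathcal T$''; your Steps~1--3 write that argument out and use the first-argument-only joint invariance exactly as the statement intends. One detail is worth making explicit: if $k$ is chosen data-dependently (which is what lets the test use the dependence between the statistics instead of a Bonferroni-type fixed $k$), it must be a function of the multiset $\{(w_\ell^\pi(a(\mathbf X_n),\mathbf X_n))_{\ell\in\mathcal T}: a\in G\}$, for example the smallest admissible $k$, so that it is absorbed into your $F$ and is likewise unchanged under $a\mapsto a\circ g$.
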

The proof follows analogously to the proof of Theorem~1 in \cite{hemerik2018exact} by writing $\mathbf{x} > \mathbf{y}$ for vectors $\mathbf{x}=(x_\ell)_{\ell\in\mathcal T}, \mathbf{y}=(y_\ell)_{\ell\in\mathcal T}\in\mathbb R^{\mathcal T}$ whenever there exists an $\ell\in\mathcal{T}$ such that $x_\ell > y_\ell$.

The theorem provides that the family-wise error rate is bounded by $\alpha$ even for finite sample sizes under $G$-invariance if the permutation statistics depend in the same way on the permuted data as the original test statistics depend on the original data.

Now, we aim to discuss the assumption $\mathbf W = w^\pi(\mathbf X_n,\mathbf{X}_n)$ in our applications further.
Therefore, we firstly introduce the following definition:
\begin{defi}
    For a probability space $\Omega$, a measurable space $\chi$ and two maps $\mathbf{Y}^\pi,\mathbf{Y}: \Omega \to \chi$, we say that $\mathbf{Y}^\pi$ is the permutation counterpart of $\mathbf{Y}$ if there is a map $Y : \chi_{1n} \to \chi  $ such that  $\mathbf{Y}^\pi = Y \circ h\circ \mathbf{X}_n$ for $h$ being uniformly distributed in $G$ and $\mathbf{Y} = Y \circ  \mathbf{X}_n$.
\end{defi}
Now, we reconsider Examples~\ref{WTSexample} and~\ref{ATSexample}:

\addtocounter{example}{-2}
\begin{example}[continued]
We already observed that the permutation statistic of the WTS is given by
    $\mathbf W^\pi = w^\pi(h(\mathbf X_n),\mathbf{X}_n) = n (\mathbf H \widehat{\boldsymbol\mu}^\pi)'(\widehat{\boldsymbol{\Sigma}}^\pi)^+(\mathbf H \widehat{\boldsymbol\mu}^\pi)$ whenever $\mathrm{rank}(\widehat{\boldsymbol{\Sigma}}) \geq \mathrm{rank}(\widehat{\boldsymbol{\Sigma}}^\pi)$.
    If 
    \begin{itemize}
        \item $\widehat{\boldsymbol{\Sigma}}^\pi$ is the permutation counterpart of $\widehat{\boldsymbol{\Sigma}}$, i.e.,
    we use the same covariance matrix estimator for the original test statistic and the permutation statistic,
    \item $\mathrm{rank}(\widehat{\boldsymbol{\Sigma}}) \geq \mathrm{rank}(\widehat{\boldsymbol{\Sigma}}^\pi)$,
    \item and $\boldsymbol{\theta}_0=\mathbf{0}$,
    \end{itemize}
    we have
    $\mathbf W = n (\mathbf H \widehat{\boldsymbol\mu})'(\widehat{\boldsymbol{\Sigma}})^+(\mathbf H \widehat{\boldsymbol\mu}) =w^\pi(\mathbf X_n,\mathbf{X}_n)$.
\end{example}

\begin{example}[continued]
    We already observed that the permutation statistic of the ATS is given by
    $$\mathbf W^\pi = w^\pi(h(\mathbf X_n),\mathbf{X}_n)= n (\mathbf H \widehat{\boldsymbol\mu}^\pi)'\widehat{\mathbf U}^\pi \widehat{\mathbf D} (\widehat{\mathbf D}^\pi)^+ \widehat{\mathbf U}^{\pi\prime}(\mathbf H \widehat{\boldsymbol\mu}^\pi)/\mathrm{tr}(\mathbf{H}  \widehat{\boldsymbol{\Gamma}}^\pi\mathbf{H}').$$
    If 
    \begin{itemize}
        \item $\widehat{\mathbf{U}}^\pi, \widehat{\mathbf{D}}^\pi$ are the permutation counterparts of $\widehat{\mathbf{U}}, \widehat{\mathbf{D}}$,
    \item either $\widehat{\boldsymbol{\Gamma}}^\pi = \widehat{\boldsymbol{\Gamma}}$ or $\widehat{\boldsymbol{\Gamma}}^\pi$ is the permutation counterpart of $\widehat{\boldsymbol{\Gamma}}$,
    \item $\mathrm{rank}(\widehat{\boldsymbol{\Sigma}}) = r$,
    \item and $\boldsymbol{\theta}_0=\mathbf{0}$,
    \end{itemize}
    we have
    $$\mathbf W = n (\mathbf H \widehat{\boldsymbol\mu})'(\mathbf H \widehat{\boldsymbol\mu})/\mathrm{tr}(\mathbf{H}  \widehat{\boldsymbol{\Gamma}}\mathbf{H}') = n (\mathbf H \widehat{\boldsymbol\mu})'\widehat{\mathbf{U}}\widehat{\mathbf{D}} \widehat{\mathbf{D}}^+ \widehat{\mathbf{U}}'(\mathbf H \widehat{\boldsymbol\mu})/\mathrm{tr}(\mathbf{H}  \widehat{\boldsymbol{\Gamma}}\mathbf{H}') =w^\pi(\mathbf X_n,\mathbf{X}_n).$$
\end{example}

For case~1, we actually get general assumptions such that we can write $\mathbf W = w^\pi(\mathbf X_n,\mathbf{X}_n)$:
We can write $\mathbf W = w^\pi(\mathbf X_n,\mathbf{X}_n)$ if $\widehat{\boldsymbol{\Sigma}}^\pi, \widehat{\boldsymbol{\theta}}^\pi$ are permutation counterparts of $ \widehat{\boldsymbol{\Sigma}}, \widehat{\boldsymbol{\theta}} - \boldsymbol{\theta}_0$, respectively, and $\mathrm{rank}(\widehat{\boldsymbol{\Sigma}}) = r$ with inner probability one.
Note that these assumptions are sufficient for $\mathbf W = w^\pi(\mathbf X_n,\mathbf{X}_n)$ but not necessary. For instance, the assumptions in Example~\ref{WTSexample} are weaker than those stated above by only requiring that $\mathrm{rank}(\widehat{\boldsymbol{\Sigma}}) \geq \mathrm{rank}(\widehat{\boldsymbol{\Sigma}}^\pi)$ instead of $\mathrm{rank}(\widehat{\boldsymbol{\Sigma}}) = r$.

For cases~2 and~3, the random matrix $\mathbf{R}$ makes things more complicated, as it only appears in the permutation statistic but not in the original test statistic. However, in some special cases, e.g., when the permutation statistics do not really depend on $\mathbf{R}$, the finite exactness can be guaranteed as, for instance, in Examples~\ref{WTSexample} and~\ref{ATSexample} above.

\section{Proofs}\label{sec:proofs}
\begin{proof}[Proof of Theorem~\ref{distinct_thm}]
    First, let us remind ourselves how weak convergence in outer probability is defined.
    The intuition behind the following definition is that $\mathbf M_n$ will represent additional randomness, e.g., induced by a random permutation and $\mathbf R$, whereas $\mathbf X_n$ will represent the original data.
    Let $\mathbf{X}_n:\Omega_1\to \chi_{1n},\mathbf M_n:\Omega_2\to \chi_{2n}$ be sequences of maps, where $(\Omega_1\times\Omega_2, \mathcal{A}_1\otimes\mathcal{A}_2, Q_1 \otimes Q_2)$ denotes a product probability space, $\chi_{1n}, (\chi_{2n}, \mathcal B_n)$ denote sequences of sets and measurable spaces, respectively, for $n\in\mathbb N$, and $\mathbf M_n$ is measurable.
Furthermore, assume that $\mathbf y_n:\chi_{1n}\times \chi_{2n} \to \mathbb R^r$ for all $n\in\mathbb{N}$ such that $\mathbf y_n(x,\cdot)$ is measurable for all $x\in \chi_{1n}$ and we can write $\widetilde{\boldsymbol{\theta}}^\pi = \mathbf y_n(\mathbf{X}_n,\mathbf M_n)$.
Similarly, we suppose that there are functions $\mathbf y_n^{(1)}:\chi_{1n}\times \chi_{2n} \to \mathbb R^r$ and $\mathbf y_n^{(2)}:\chi_{1n}\times \chi_{2n} \to \mathbb R^{r\times r}$ for all $n\in\mathbb{N}$ such that $\mathbf y_n^{(i)}(x,\cdot)$ is measurable for all $x\in \chi_{1n}, i\in\{1,2\}$ and we can write $\widehat{\boldsymbol{\theta}}^\pi = \mathbf y_n^{(1)}(\mathbf{X}_n,\mathbf M_n), \widehat{\mathbf U}^\pi = \mathbf y_n^{(2)}(\mathbf{X}_n,\mathbf M_n)$.
We say that {$\widetilde{\boldsymbol{\theta}}^\pi$ converges weakly (conditionally on $\mathbf{X}_n$) in outer probability to $\mathbf{Z} \sim \mathcal N_r(\mathbf{0}, \boldsymbol{\Sigma})$}, write $\widetilde{\boldsymbol{\theta}}^\pi \xrightarrow{d^*} \mathcal N_r(\mathbf{0}, \boldsymbol{\Sigma})$ (conditionally on $\mathbf{X}_n$) as $n\to\infty$, if
    \begin{align}\label{eq:precise}
    &\sup\limits_{h\in BL_1(\mathbb R^r)} \left| \E_{\mathbf{R}, \pi} \left[h\left(\sqrt{n}\,\widetilde{\boldsymbol{\theta}}^\pi \right)\right]  - \E \left[h(\mathbf Z)\right] \right| \xrightarrow{Q_1} 0\quad \text{and}
       \\& \label{eq:precise2}
        \E_{\mathbf{R}, \pi}\left[ h\left(\sqrt{n}\,\widetilde{\boldsymbol{\theta}}^\pi \right)^* \right] - \E_{\mathbf{R}, \pi} \left[h\left( \sqrt{n}\,\widetilde{\boldsymbol{\theta}}^\pi \right)_*\right]  \xrightarrow{Q_1} 0\quad\text{as $n\to\infty$ for all $h\in BL_1(\mathbb R^r)$.}
    \end{align}
    Here, $\E$ denotes the expectation with respect to $\Omega_1\times\Omega_2$, 
    $\E_{\mathbf{R}, \pi}$ denotes the conditional expectation with respect to $\Omega_2$,  $BL_1(\mathbb R^r)$ denotes the set of all real functions on $\mathbb R^r$ with a Lipschitz norm bounded by $1$, and the super- and subscript asterisks denote the minimal measurable majorants and maximal measurable minorants, respectively, with respect to $\Omega_1\times\Omega_2$ jointly, see \cite{vaartWellner2023} for details.

    We start by proving \eqref{eq:precise}.
    Therefore, we extend $\Omega_2$ by setting $\widetilde{\Omega}_2 := \Omega_2 \times \Omega_3$ equipped with the product $\sigma$-algebra and product probability measure 
    and
    define $\mathbf{Y}:\Omega_3 \to \mathbb R^r$ as a Borel measurable random variable with $\mathbf{Y} \sim \mathcal N_r(\mathbf 0, \boldsymbol{\Sigma}^\pi)$. In the following, we will denote the expectation with respect to $\widetilde{\Omega}_2$ by $\E_{\mathbf{R}, \pi, \mathbf{Y}}$. 
    Moreover, for $\widehat{\mathbf U}^\pi = (\widehat{\mathbf{u}}_1^\pi,...,\widehat{\mathbf{u}}_r^\pi), \mathbf{U}^\pi = (\mathbf{u}_1^\pi,...,\mathbf{u}_r^\pi)$, we define $\widehat{\mathbf{R}}^\pi := \mathrm{diag}\left(\mathrm{sgn}((\widehat{\mathbf{u}}_1^\pi)'{\mathbf{u}}_1^\pi),...,\mathrm{sgn}((\widehat{\mathbf{u}}_r^\pi)'{\mathbf{u}}_r^\pi)\right)$ with $\mathrm{sgn}: \mathbb R \to \{-1,1\}, \mathrm{sgn}(x):= -1$ if $x < 0$ and $\mathrm{sgn}(x) := 1$ if $x\geq 0$.
 Then, we divide \eqref{eq:precise} into the following terms:
    \begin{align}\notag
    &\sup\limits_{h\in BL_1(\mathbb R^r)} \left| \E_{\mathbf{R},\pi} \left[h(\sqrt{n}\,\widetilde{\boldsymbol{\theta}}^\pi )\right]  - \E \left[h(\mathbf Z)\right] \right| \\ &\leq  \sup\limits_{h\in BL_1(\mathbb R^r)} \left| \E_{\mathbf{R},\pi} \left[h(\sqrt{n}\,\widetilde{\boldsymbol{\theta}}^\pi)\right]  - 
    \E_{\mathbf{R},\pi} \left[h(\sqrt{n}\widehat{\mathbf{U}} {\mathbf D}^{1/2} (({\mathbf D}^\pi)^{1/2})^+  \mathbf{R} \widehat{\mathbf{R}}^\pi {\mathbf U}^{\pi\prime} \widehat{\boldsymbol{\theta}}^\pi)\right] \right| \label{eq:part1}
    \\&\quad + \sup\limits_{h\in BL_1(\mathbb R^r)} \left| 
    \E_{\mathbf{R},\pi} \left[h(\sqrt{n}\widehat{\mathbf{U}} {\mathbf D}^{1/2} (({\mathbf D}^\pi)^{1/2})^+  \mathbf{R} \widehat{\mathbf{R}}^\pi {\mathbf U}^{\pi\prime}  \widehat{\boldsymbol{\theta}}^\pi)\right] -
    \E_{\mathbf{R},\pi,\mathbf{Y}} \left[h(\widehat{\mathbf{U}} {\mathbf D}^{1/2} (({\mathbf D}^\pi)^{1/2})^+  \mathbf{R}  {\mathbf U}^{\pi\prime}  \mathbf{Y})\right]  \right| \label{eq:part2}
    \\&\quad +  \sup\limits_{h\in BL_1(\mathbb R^r)} \left| 
    \E_{\mathbf{R},\pi,\mathbf{Y}} \left[h(\widehat{\mathbf{U}} {\mathbf D}^{1/2} (({\mathbf D}^\pi)^{1/2})^+  \mathbf{R}  {\mathbf U}^{\pi\prime}  \mathbf{Y})\right] -
    \E \left[h( \mathbf{Z})\right]\right| .  \label{eq:part3} 
    \end{align}
    Now, we show that all terms converge to $0$ in outer $Q_1$-probability.    
    For \eqref{eq:part1}, let $\varepsilon > 0$ be arbitrary and write
    \begin{align*}
        &\sup\limits_{h\in BL_1(\mathbb R^r)} \left| \E_{\mathbf{R},\pi} \left[h(\sqrt{n}\,\widetilde{\boldsymbol{\theta}}^\pi)\right]  - 
    \E_{\mathbf{R},\pi} \left[h(\sqrt{n}\,\widehat{\mathbf{U}} {\mathbf D}^{1/2} (({\mathbf D}^\pi)^{1/2})^+  \mathbf{R} \widehat{\mathbf{R}}^\pi {\mathbf U}^{\pi\prime}  \widehat{\boldsymbol{\theta}}^\pi)\right] \right|\\
        &\leq \varepsilon + 2 {Q}_2\left( \sqrt{n}\| \widetilde{\boldsymbol{\theta}}^\pi - \widehat{\mathbf{U}} {\mathbf D}^{1/2} (({\mathbf D}^\pi)^{1/2})^+ \mathbf{R} \widehat{\mathbf{R}}^\pi{\mathbf U}^{\pi\prime}\widehat{\boldsymbol{\theta}}^\pi \|^* > \varepsilon  \right)
        \\&\leq \varepsilon + 2 Q_2\left( \| \widehat{\mathbf D}^{1/2} ((\widehat{\mathbf D}^\pi)^{1/2})^+ \mathbf{R} \widehat{\mathbf U}^{\pi\prime} - {\mathbf D}^{1/2} (({\mathbf D}^\pi)^{1/2})^+ \mathbf{R} \widehat{\mathbf{R}}^\pi {\mathbf U}^{\pi\prime} \|^* \|\sqrt{n}\, \widehat{\boldsymbol{\theta}}^\pi\|^* > \varepsilon  \right)
    \end{align*}
    since $\widehat{\mathbf{U}}$ is orthogonal, where $\| \mathbf x\|$ denotes the Euclidean norm of a vector $\mathbf x$ and $\| \mathbf A\|$ denotes the spectral norm of a matrix $\mathbf A$.
Then, we have for all $C>0$
\begin{align*}
    &Q_1\left(\left(\sup\limits_{h\in BL_1(\mathbb R^r)} \left| \E_{\mathbf{R},\pi} \left[h(\sqrt{n}\,\widetilde{\boldsymbol{\theta}}^\pi)\right]  - 
    \E_{\mathbf{R},\pi} \left[h(\sqrt{n}\widehat{\mathbf{U}} {\mathbf D}^{1/2} (({\mathbf D}^\pi)^{1/2})^+  \mathbf{R} \widehat{\mathbf{R}}^\pi {\mathbf U}^{\pi\prime}  \widehat{\boldsymbol{\theta}}^\pi)\right] \right|\right)^*  > 2\varepsilon \right)
    \\&\leq Q_1\left(2Q_2\left( \| \widehat{\mathbf D}^{1/2} ((\widehat{\mathbf D}^\pi)^{1/2})^+ \mathbf{R} \widehat{\mathbf U}^{\pi\prime} - {\mathbf D}^{1/2} (({\mathbf D}^\pi)^{1/2})^+ \mathbf{R} \widehat{\mathbf{R}}^\pi {\mathbf U}^{\pi\prime} \|^* \|\sqrt{n}\, \widehat{\boldsymbol{\theta}}^\pi\|^* > \varepsilon  \right)  > \varepsilon \right)
    \\&\leq 2 (Q_1\otimes Q_2)\left( \| \widehat{\mathbf D}^{1/2} ((\widehat{\mathbf D}^\pi)^{1/2})^+ \mathbf{R} \widehat{\mathbf U}^{\pi\prime} - {\mathbf D}^{1/2} (({\mathbf D}^\pi)^{1/2})^+ \mathbf{R} \widehat{\mathbf{R}}^\pi {\mathbf U}^{\pi\prime} \|^* \|\sqrt{n}\, \widehat{\boldsymbol{\theta}}^\pi\|^* > \varepsilon  \right)  / \varepsilon 
    \\&\leq \tfrac{2}{\varepsilon}\cdot\left({(Q_1\otimes Q_2)\left( \|\sqrt{n}\, \widehat{\boldsymbol{\theta}}^\pi\|^* > C \right) + (Q_1\otimes Q_2)\left( \| \widehat{\mathbf D}^{1/2} ((\widehat{\mathbf D}^\pi)^{1/2})^+ \mathbf{R} \widehat{\mathbf U}^{\pi\prime} - {\mathbf D}^{1/2} (({\mathbf D}^\pi)^{1/2})^+ \mathbf{R} \widehat{\mathbf{R}}^\pi {\mathbf U}^{\pi\prime} \|^*  > \tfrac{\varepsilon}{C} \right)}\right) .
\end{align*}
Since $\sqrt{n}\,\widehat{\boldsymbol{\theta}}^\pi$ conditionally converges weakly in probability by Assumption~\ref{the_ass}.2, it also converges weakly unconditionally by Lemma~1 in \cite{munko2024conditionaldeltamethodresamplingempirical}.
Hence, $\sqrt{n}\,\widehat{\boldsymbol{\theta}}^\pi$ is asymptotically tight, i.e., for $\delta > 0$ there exists a $C > 0$ such that $\limsup_{n\to\infty}(Q_1 \otimes Q_2)(\|\sqrt{n}\, \widehat{\boldsymbol{\theta}}^\pi\|^* > C)$ is smaller than $\delta$.
For the second probability, we first note that, by Assumption~\ref{distinctass}.2, $\| \widehat{\mathbf D}^{1/2} ((\widehat{\mathbf D}^\pi)^{1/2})^+ \mathbf{R} \widehat{\mathbf U}^{\pi\prime} - \widehat{\mathbf D}^{1/2} ((\widehat{\mathbf D}^\pi)^{1/2})^+ \widetilde{\mathbf{I}}\, \widetilde{\mathbf{I}} \mathbf{R} \widehat{\mathbf U}^{\pi\prime} \|$ converges to $0$ in outer probability, where $\widetilde{\mathbf{I}}$ is the diagonal matrix with entries $1$ for the first $\widetilde{r}$ diagonal elements and $0$ for the last diagonal elements.
Now, we use that, by Assumption~\ref{the_ass}.3 and \ref{distinctass}.1, 
$\widehat{\mathbf D}^{1/2} ((\widehat{\mathbf D}^\pi)^{1/2})^+ \widetilde{\mathbf I}\mathbf{R}  \xrightarrow{P} {\mathbf D}^{1/2} (({\mathbf D}^\pi)^{1/2})^+ \mathbf{R} $ as $n\to\infty$ follows from the continuous mapping theorem. Here, the first $\widetilde{r}$ diagonal elements of  $((\widehat{\mathbf D}^\pi)^{1/2})^+ $ converge in outer probability to the strictly positive first $\widetilde{r}$ diagonal elements of $(({\mathbf D}^\pi)^{1/2})^+$. The last diagonal elements are $0$, which results on the left side from multiplying by $\widetilde{\mathbf I}$ and on the right side from $\mathrm{rank}(\boldsymbol{\Sigma}^\pi) = \widetilde{r}$.
Hence, it suffices to show that $||\widetilde{\mathbf I}\widehat{\mathbf U}^{\pi\prime} - \widetilde{\mathbf I}\widehat{\mathbf{R}}^\pi {\mathbf U}^{\pi\prime}||^* = ||\widetilde{\mathbf I}\widehat{\mathbf{R}}^\pi \widehat{\mathbf U}^{\pi\prime} - \widetilde{\mathbf I}{\mathbf U}^{\pi\prime}||^* \xrightarrow{P} 0$ as $n\to\infty$.
Therefore, we consider the $i$th row of the matrix $\widehat{\mathbf{R}}^\pi \widehat{\mathbf U}^{\pi\prime} - {\mathbf U}^{\pi\prime}$ for all $i \in \{1,...,\widetilde{r}\}$.
We can apply Corollary~1 in \cite{yu2015} to obtain
\begin{align}
    \label{eq:ithvec}
    || \mathrm{sgn}(\widehat{\mathbf u}_i^{\pi\prime}\mathbf{u}_i^\pi)\widehat{\mathbf u}_i^{\pi\prime} -  {\mathbf u}_i^{\pi\prime} || \leq \frac{2^{3/2} || \widehat{\boldsymbol{\Sigma}}^\pi - {\boldsymbol{\Sigma}}^\pi ||}{\min\{\mathbf{D}_{(i-1)(i-1)}^\pi - \mathbf{D}_{ii}^\pi,\mathbf{D}_{ii}^\pi - \mathbf{D}_{(i+1)(i+1)}^\pi\}}
\end{align}
for all $i \in\{1,...,\widetilde{r}\}$, where the signum function ensures the condition $\mathrm{sgn}(\widehat{\mathbf u}_i^{\pi\prime}\mathbf{u}_i^\pi)\widehat{\mathbf u}_i^{\pi\prime} {\mathbf u}_i^{\pi} \geq 0$ of Corollary~1. The right side in \eqref{eq:ithvec} converges in outer probability to $0$ by Assumptions~\ref{the_ass}.3 and~\ref{distinctass}.3 for all $i \in\{1,...,\widetilde{r}\}$.
Thus, we can conclude that $\| \widehat{\mathbf D}^{1/2} ((\widehat{\mathbf D}^\pi)^{1/2})^+ \mathbf{R} \widehat{\mathbf U}^{\pi\prime} - {\mathbf D}^{1/2} (({\mathbf D}^\pi)^{1/2})^+ \mathbf{R} \widehat{\mathbf{R}}^\pi {\mathbf U}^{\pi\prime} \| $ converges to $0$ in outer probability and, further, that
$$\limsup_{n\to\infty} Q_1\left(\left(\sup\limits_{h\in BL_1(\mathbb R^r)} \left| \E_{\mathbf{R},\pi} \left[h(\sqrt{n}\,\widetilde{\boldsymbol{\theta}}^\pi)\right]  - 
    \E_{\mathbf{R},\pi} \left[h(\sqrt{n}\,\widehat{\mathbf{U}} {\mathbf D}^{1/2} (({\mathbf D}^\pi)^{1/2})^+  \mathbf{R} \widehat{\mathbf{R}}^\pi {\mathbf U}^{\pi\prime}  \widehat{\boldsymbol{\theta}}^\pi)\right] \right|\right)^*  > 2\varepsilon \right) $$ is bounded by $2{\delta}/{\varepsilon}.$
Since $\delta > 0$ was arbitrary, this proves that
\eqref{eq:part1} converges to $0$ in outer $Q_1$-probability.
Next, we consider \eqref{eq:part2}.
By Fubini's theorem, we get rid of $\widehat{\mathbf R}^\pi$ by
\begin{align*}
   & \E_{\mathbf R, \pi}\left[ h\left(
   \sqrt{n}\widehat{\mathbf{U}} {\mathbf D}^{1/2} (({\mathbf D}^\pi)^{1/2})^+  \mathbf{R} \widehat{\mathbf{R}}^\pi {\mathbf U}^{\pi\prime}  \widehat{\boldsymbol{\theta}}^\pi\right)\right] 
   \\&= \frac{1}{2^r} \sum_{s_1,...,s_r \in \{-1,1\}} \E_{\mathbf R, \pi}\left[ h\left(
   \sqrt{n}\widehat{\mathbf{U}} {\mathbf D}^{1/2} (({\mathbf D}^\pi)^{1/2})^+  \mathrm{diag}(s_1,...,s_r) \widehat{\mathbf{R}}^\pi {\mathbf U}^{\pi\prime}  \widehat{\boldsymbol{\theta}}^\pi\right)\right]
   \\&=   \E_{\mathbf R, \pi}\left[ \frac{1}{2^r} \sum_{s_1,...,s_r \in \{-1,1\}}h\left(
   \sqrt{n}\widehat{\mathbf{U}} {\mathbf D}^{1/2} (({\mathbf D}^\pi)^{1/2})^+  \mathrm{diag}(s_1,...,s_r) \widehat{\mathbf{R}}^\pi {\mathbf U}^{\pi\prime}  \widehat{\boldsymbol{\theta}}^\pi\right)\right]
   \\&=  \E_{\mathbf R, \pi}\left[ \frac{1}{2^r} \sum_{s_1,...,s_r \in \{-1,1\}}h\left(
   \sqrt{n}\widehat{\mathbf{U}} {\mathbf D}^{1/2} (({\mathbf D}^\pi)^{1/2})^+  \mathrm{diag}(s_1,...,s_r)  {\mathbf U}^{\pi\prime}  \widehat{\boldsymbol{\theta}}^\pi\right)\right]
\\&=\frac{1}{2^r} \sum_{s_1,...,s_r \in \{-1,1\}} \E_{\mathbf R, \pi}\left[ h\left(
   \sqrt{n}\widehat{\mathbf{U}} {\mathbf D}^{1/2} (({\mathbf D}^\pi)^{1/2})^+  \mathrm{diag}(s_1,...,s_r)  {\mathbf U}^{\pi\prime}  \widehat{\boldsymbol{\theta}}^\pi\right)\right]
   \\&=\E_{\mathbf R, \pi}\left[ h\left(
   \sqrt{n}\widehat{\mathbf{U}} {\mathbf D}^{1/2} (({\mathbf D}^\pi)^{1/2})^+  \mathbf{R} {\mathbf U}^{\pi\prime}  \widehat{\boldsymbol{\theta}}^\pi\right)\right] .
\end{align*}
Then, we can use that the function $\mathbf x \mapsto h(\widehat{\mathbf{U}} {\mathbf D}^{1/2} (({\mathbf D}^\pi)^{1/2})^+  \mathbf x)$ only depends on $\Omega_1$ and takes values in $BL:=BL_{\max\{1,\|\mathbf D^{1/2}((\mathbf D^\pi)^{1/2})^+\|\}}(\mathbb R^r)$ to obtain
\begin{align*}
   & \sup\limits_{h\in BL_1(\mathbb R^r)} \left| 
    \E_{\mathbf{R},\pi} \left[h(\sqrt{n}\widehat{\mathbf{U}} {\mathbf D}^{1/2} (({\mathbf D}^\pi)^{1/2})^+  \mathbf{R} {\mathbf U}^{\pi\prime}  \widehat{\boldsymbol{\theta}}^\pi)\right] -
    \E_{\mathbf{R},\pi,\mathbf{Y}} \left[h(\widehat{\mathbf{U}} {\mathbf D}^{1/2} (({\mathbf D}^\pi)^{1/2})^+  \mathbf{R}  {\mathbf U}^{\pi\prime}  \mathbf{Y})\right]  \right|
    \\&\leq \sup\limits_{h\in BL} \left| 
    \E_{\mathbf{R},\pi} \left[h(\sqrt{n}\mathbf{R}  {\mathbf U}^{\pi\prime}\widehat{\boldsymbol{\theta}}^\pi)\right] -
    \E_{\mathbf{R},\pi,\mathbf{Y}} \left[h(\mathbf{R}  {\mathbf U}^{\pi\prime}\mathbf{Y})\right]  \right|
    \\&=   \sup\limits_{h\in BL} \left| \frac{1}{2^r} \sum_{s_1,...,s_r \in \{-1,1\}}
    \E_{\mathbf{R},\pi} \left[h(\sqrt{n}\mathrm{diag}(s_1,...,s_r)  {\mathbf U}^{\pi\prime}\widehat{\boldsymbol{\theta}}^\pi)\right] -\frac{1}{2^r} \sum_{s_1,...,s_r \in \{-1,1\}}
    \E_{\mathbf{R},\pi,\mathbf{Y}} \left[h(\mathrm{diag}(s_1,...,s_r)  {\mathbf U}^{\pi\prime}\mathbf{Y})\right]  \right|
    \\&\leq \sup\limits_{h\in BL} \left| 
    \E_{\mathbf{R},\pi} \left[h(\sqrt{n}\,\widehat{\boldsymbol{\theta}}^\pi)\right] -
    \E_{\mathbf{R},\pi,\mathbf{Y}} \left[h(\mathbf{Y})\right]  \right|.
\end{align*}
By Assumption~\ref{the_ass}.2, the latter converges in outer $Q_1$-probability to $0$, which provides that \eqref{eq:part2} converges in outer $Q_1$-probability to $0$.
For \eqref{eq:part3}, we write
\begin{align*}
    &\sup\limits_{h\in BL_1(\mathbb R^r)} \left| 
    \E_{\mathbf{R},\pi,\mathbf{Y}} \left[h(\widehat{\mathbf{U}} {\mathbf D}^{1/2} (({\mathbf D}^\pi)^{1/2})^+  \mathbf{R}  {\mathbf U}^{\pi\prime}  \mathbf{Y})\right] -
    \E \left[h( \mathbf{Z})\right]\right|
    \\&= \sup\limits_{h\in BL_1(\mathbb R^r)} \left| 
     \sum_{s_1,...,s_r \in \{-1,1\}}
     \frac{ \E_{\mathbf{R},\pi,\mathbf{Y}} \left[h(\widehat{\mathbf{U}} {\mathbf D}^{1/2} (({\mathbf D}^\pi)^{1/2})^+  \mathrm{diag}(s_1,...,s_r)  {\mathbf U}^{\pi\prime}  \mathbf{Y})\right]}{2^r} -
    \E \left[h( \mathbf{Z})\right]\right|
     \\&\leq \sum_{s_1,...,s_r \in \{-1,1\}}\frac{1}{2^r}
    \sup\limits_{h\in BL_1(\mathbb R^r)} \left| 
    \E_{\mathbf{R},\pi,\mathbf{Y}} \left[h(\widehat{\mathbf{U}} {\mathbf D}^{1/2} (({\mathbf D}^\pi)^{1/2})^+  \mathrm{diag}(s_1,...,s_r)  {\mathbf U}^{\pi\prime}  \mathbf{Y})\right]-
    \E \left[h( \mathbf{Z})\right]\right|
        \\&= 
    \sup\limits_{h\in BL_1(\mathbb R^r)} \left| 
    \int_{\mathbb R^r} h(\mathbf{y})\;\mathrm{d}\mathcal{N}_r(\mathbf{0},\widehat{\mathbf{U}} \mathbf{D}\widehat{\mathbf{U}}')(\mathbf{y}) -
    \int h( \mathbf{z})\;\mathrm{d}\mathcal{N}_r(\mathbf{0}, \boldsymbol{\Sigma})(\mathbf{z})\right|.
\end{align*}
Note that the latter expression converges in outer $Q_1$-probability to $0$ whenever $\widehat{\mathbf{U}} \mathbf{D}\widehat{\mathbf{U}}'$ converges in outer $Q_1$-probability to $\boldsymbol{\Sigma}$.
We have
\begin{align*}
    \|\widehat{\mathbf{U}} \mathbf{D}\widehat{\mathbf{U}}' - \boldsymbol{\Sigma}\| &\leq \|\widehat{\mathbf{U}} \mathbf{D}\widehat{\mathbf{U}}' - \widehat{\mathbf{U}} \widehat{\mathbf{D}}\widehat{\mathbf{U}}'\| +  \|\widehat{\boldsymbol{\Sigma}} - \boldsymbol{\Sigma}\|
    \\&= \|\mathbf{D} - \widehat{\mathbf{D}}\| +  \|\widehat{\boldsymbol{\Sigma}} - \boldsymbol{\Sigma}\|.
\end{align*}
Here, both summands converge in outer $Q_1$-probability to $0$ by Assumption~\ref{the_ass}.3 and, thus, \eqref{eq:part3} is proved.
Hence, putting \eqref{eq:part1}--\eqref{eq:part3} together, we obtain \eqref{eq:precise}.
Proving \eqref{eq:precise2} is much simpler:
First, $\sqrt{n}\,\widehat{\boldsymbol{\theta}}^\pi$ is asymptotically measurable (unconditionally) by Lemma~1 in \cite{munko2024conditionaldeltamethodresamplingempirical}.
Furthermore, $\mathbf{R}$ is measurable and, thus, asymptotically measurable.
Next, $\widehat{\mathbf{U}}, \widehat{\mathbf{U}}^\pi$ are asymptotically measurable by Assumption~\ref{distinctass}.4.
Finally, $\widehat{\mathbf{D}}^{1/2} ((\widehat{\mathbf{D}}^\pi)^{1/2})^+$ converges in outer probability (unconditionally) by the continuous mapping theorem and Assumptions~\ref{the_ass}.3 and \ref{distinctass}.2 and, thus, is asymptotically measurable by Lemma~1.3.8 in \cite{vaartWellner2023}.
Now, we can conclude that $\sqrt{n}\,\widetilde{\boldsymbol{\theta}}^\pi$ is asymptotically measurable (unconditionally) by Lemma~1.4.4 and Problem~1.3.7 in \cite{vaartWellner2023}.
Hence, it follows
\begin{align*}
    Q_1\left(\E_{\mathbf{R}, \pi}\left[ h\left(\sqrt{n}\,\widetilde{\boldsymbol{\theta}}^\pi \right)^* \right] - \E_{\mathbf{R}, \pi} \left[h\left( \sqrt{n}\,\widetilde{\boldsymbol{\theta}}^\pi \right)_*\right]  > \varepsilon\right)
    &\leq \E\left[\E_{\mathbf{R}, \pi}\left[ h\left(\sqrt{n}\,\widetilde{\boldsymbol{\theta}}^\pi \right)^* \right] - \E_{\mathbf{R}, \pi} \left[h\left( \sqrt{n}\,\widetilde{\boldsymbol{\theta}}^\pi \right)_*\right]\right]/\varepsilon
    \\&= \left( \E\left[ h\left(\sqrt{n}\,\widetilde{\boldsymbol{\theta}}^\pi \right)^* \right] - \E \left[h\left( \sqrt{n}\,\widetilde{\boldsymbol{\theta}}^\pi \right)_*\right] \right)/\varepsilon \to 0
\end{align*} for all $\varepsilon>0, h \in BL_1(\mathbb R^r)$, which finishes the proof.
\end{proof}
 \begin{proof}[Proof of Theorem~\ref{conv_thm}]
 Let $\mathbf R := \left(\bigoplus_{j=0}^{J^*} \mathbf R_{i^*_{j+1}-i^*_j;j}\right)$ with $i^*_1, ..., i^*_{J^*}\in\{2,...,r\}, J^*\in\{0,...,r-1\}$ such that $i^*_0 := 1 < i^*_1 < ... < i^*_{J^*} < r+1 =: i^*_{J^*+1}$ and $\mathbf{D}^\pi_{11} = ...=\mathbf{D}^\pi_{(i^*_1-1)(i^*_1-1)} > \mathbf{D}^\pi_{i^*_1i^*_1} = ...=  \mathbf{D}^\pi_{(i^*_2-1)(i^*_2-1)} > ... > \mathbf{D}^\pi_{i^*_{J^*}i^*_{J^*}} = ...=  \mathbf{D}^\pi_{rr}$.
 First, we show that $ \mathbf{R}_{\varepsilon} \xrightarrow{P} \widetilde{\mathbf I}  \mathbf R$.
 For $\widetilde{r} = r$, it remains to show that the events $I \neq \{i^*_1, ..., i^*_{J^*}\}$ and $r_n \widehat{\mathbf D}^\pi_{rr} \leq \varepsilon$ have outer probability tending to $0$, respectively.
 For $\widetilde{r} < r$, we show that the events $I \neq \{i^*_1, ..., i^*_{J^*}\}$ and $r_n \widehat{\mathbf D}^\pi_{rr} > \varepsilon$ have outer probability tending to $0$, respectively. 
 For $P = (Q_1\otimes Q_2)^*$ being the outer probability w.r.t. $\Omega_1\times\Omega_2$ under the notation as in the proof of Theorem~\ref{distinct_thm}, it holds by Weyl's inequality
 \begin{align*}
     &P\left(I \neq \{i^*_1, ..., i^*_{J^*}\}\right) \\
     &\leq \sum_{i\in\{i_1^*,...,i_{J^*}^*\}}P\left(i \not\in I\right) + \sum_{i\in\{2,...,r\}\setminus\{i_1^*,...,i_{J^*}^*\}}P\left(i \in I\right)
     \\&\leq \sum_{i\in\{i_1^*,...,i_{J^*}^*\}}P\left(r_n (\widehat{\mathbf{D}}^\pi_{(i-1)(i-1)}-\widehat{\mathbf{D}}^\pi_{ii}) \leq \varepsilon\right) + \sum_{i\in\{2,...,r\}\setminus\{i_1^*,...,i_{J^*}^*\}}P\left(r_n (\widehat{\mathbf{D}}^\pi_{(i-1)(i-1)}-\widehat{\mathbf{D}}^\pi_{ii}) > \varepsilon\right)
     \\&\leq \sum_{i\in\{i_1^*,...,i_{J^*}^*\}}\left(\mathbbm{1}\{r_n ({\mathbf{D}}^\pi_{(i-1)(i-1)}-{\mathbf{D}}^\pi_{ii}) \leq 2\varepsilon\} + \mathbbm{1}\{r_n ({\mathbf{D}}^\pi_{(i-1)(i-1)}-{\mathbf{D}}^\pi_{ii}) > 2\varepsilon\} \cdot
     \right.\\&\left.\qquad\qquad\qquad\quad \left(P\left(r_n |\widehat{\mathbf{D}}^\pi_{(i-1)(i-1)}-{\mathbf{D}}^\pi_{(i-1)(i-1)}| > \varepsilon/2\right)+P\left(r_n |{\mathbf{D}}^\pi_{ii}-\widehat{\mathbf{D}}^\pi_{ii}| > \varepsilon/2\right)\right)
     \right)\\&\quad+ \sum_{i\in\{2,...,r\}\setminus\{i_1^*,...,i_{J^*}^*\}}\left(P\left(r_n |\widehat{\mathbf{D}}^\pi_{(i-1)(i-1)}-{\mathbf{D}}^\pi_{(i-1)(i-1)}| > \varepsilon/2\right) + P\left(r_n |{\mathbf{D}}^\pi_{ii}-\widehat{\mathbf{D}}^\pi_{ii}| > \varepsilon/2\right)\right)
     \\&\leq \sum_{i\in\{i_1^*,...,i_{J^*}^*\}}\mathbbm{1}\{r_n ({\mathbf{D}}^\pi_{(i-1)(i-1)}-{\mathbf{D}}^\pi_{ii}) \leq 2\varepsilon\} + 2(r-1) P\left(r_n \|\widehat{\boldsymbol{\Sigma}}^\pi - {\boldsymbol{\Sigma}}^\pi \| > \varepsilon/2\right) \to 0
 \end{align*} as $n\to\infty$.
 For the remaining events, we have 
 $$P(r_n \widehat{\mathbf D}^\pi_{rr} \leq \varepsilon) \leq P(r_n |\widehat{\mathbf D}^\pi_{rr} - {\mathbf D}^\pi_{rr}| > \varepsilon) + \mathbbm{1}\{r_n {\mathbf D}^\pi_{rr} \leq 2\varepsilon\} \to 0$$ if ${\mathbf D}^\pi_{rr} > 0$ (i.e., $\widetilde{r} = r$)
 and
 $$P(r_n \widehat{\mathbf D}^\pi_{rr} > \varepsilon) \leq P(r_n |\widehat{\mathbf D}^\pi_{rr} - {\mathbf D}^\pi_{rr}| > \varepsilon) \to 0$$ as $n\to\infty$ by Assumption~\ref{conv_ass}.3 if ${\mathbf D}^\pi_{rr} = 0$ (i.e., $\widetilde{r} < r$).
Thus, $ \mathbf{R}_{\varepsilon} \xrightarrow{P}\widetilde{\mathbf I}\mathbf R$ follows.
     Now, we follow the lines of the proof of Theorem~\ref{distinct_thm}. 
    We start to prove \eqref{eq:precise} by defining $\mathbf{Y} \sim \mathcal N(\mathbf 0, \boldsymbol{\Sigma}^\pi)$. 
    Moreover, for $\widehat{\mathbf U}^\pi = (\widehat{\mathbf{u}}_1^\pi,...,\widehat{\mathbf{u}}_r^\pi), \mathbf{U}^\pi = (\mathbf{u}_1^\pi,...,\mathbf{u}_r^\pi)$ and $\widehat{\mathbf U}_j^{\pi} := (\widehat{\mathbf u}_{i^*_j}^{\pi},...,\widehat{\mathbf u}_{i^*_{j+1}-1}^{\pi}), {\mathbf U}_j^{\pi} := ({\mathbf u}_{i^*_j}^{\pi},...,{\mathbf u}_{i^*_{j+1}-1}^{\pi})$, 
    we define $\widehat{\mathbf{R}}^\pi := \bigoplus_{j=1}^{J^*} \widehat{\mathbf{O}}_{j}' $, where $\widehat{\mathbf{O}}_{j}$ is the orthogonal matrix from Theorem~2 in \cite{yu2015} fulfilling
    $$||\widehat{\mathbf U}_j^{\pi}\widehat{\mathbf{O}}_{j} -  {\mathbf U}_j^{\pi} || \leq \frac{2^{3/2}r^{1/2} || \widehat{\boldsymbol{\Sigma}}^\pi - {\boldsymbol{\Sigma}}^\pi ||}{\min\{\mathbf{D}_{(i^*_j-1)(i^*_j-1)}^\pi - \mathbf{D}_{i^*_ji^*_j}^\pi,\mathbf{D}_{(i^*_{j+1}-1)(i^*_{j+1}-1)}^\pi - \mathbf{D}_{i^*_{j+1}i^*_{j+1}}^\pi\}}$$ for all $j\in\{1,...,J^*\}$.
    Then, we divide \eqref{eq:precise} again into the terms \eqref{eq:part1}--\eqref{eq:part3} and prove that all terms converge to $0$ in outer probability.
        For \eqref{eq:part1}, we have, by $P(\hat{i} \neq r - \widetilde{r}) = P(i_J \neq i_{J^*}^*) \to 0$, that 
    $\| \widehat{\mathbf D}^{1/2} ((\widehat{\mathbf D}^\pi)^{1/2})^+ \mathbf{R}_\varepsilon \widehat{\mathbf U}^{\pi\prime} - \widehat{\mathbf D}^{1/2} ((\widehat{\mathbf D}^\pi)^{1/2})^+ \widetilde{\mathbf I} \mathbf{R}_\varepsilon \widehat{\mathbf U}^{\pi\prime} \|$ converges to $0$ in outer probability. 
    By $ \mathbf{R}_{\varepsilon} \xrightarrow{P} \widetilde{\mathbf I}  \mathbf R$, we get
    $\| \widehat{\mathbf D}^{1/2} ((\widehat{\mathbf D}^\pi)^{1/2})^+ \widetilde{\mathbf I} \mathbf{R}_\varepsilon \widehat{\mathbf U}^{\pi\prime} - \widehat{\mathbf D}^{1/2} ((\widehat{\mathbf D}^\pi)^{1/2})^+ \widetilde{\mathbf I}\mathbf R   \widehat{\mathbf U}^{\pi\prime} \| \xrightarrow{P} 0$.
Moreover, we obtain 
$\widehat{\mathbf D}^{1/2} ((\widehat{\mathbf D}^\pi)^{1/2})^+ \widetilde{\mathbf I} \xrightarrow{P} {\mathbf D}^{1/2} (({\mathbf D}^\pi)^{1/2})^+  \widetilde{\mathbf I} ={\mathbf D}^{1/2} (({\mathbf D}^\pi)^{1/2})^+$ as $n\to\infty$. 
Hence, it suffices to show that $||\widetilde{\mathbf I}\widehat{\mathbf U}^{\pi\prime} - \widetilde{\mathbf I}\widehat{\mathbf{R}}^\pi {\mathbf U}^{\pi\prime}||^* = ||\widetilde{\mathbf I}\widehat{\mathbf{R}}^\pi \widehat{\mathbf U}^{\pi\prime} - \widetilde{\mathbf I}{\mathbf U}^{\pi\prime}||^* \xrightarrow{P} 0$ as $n\to\infty$.
Therefore, we consider the $j$th row-wise block of the matrix $\widehat{\mathbf{R}}^\pi \widehat{\mathbf U}^{\pi\prime} - {\mathbf U}^{\pi\prime}$ for all $j\in\{1,...,J^*\}$ with $i^*_{j+1} \leq \widetilde{r}+1$.
For all such $j$, we can apply Theorem~2 in \cite{yu2015} to obtain
\begin{align}
    \label{eq:ithvec2}
    ||\widehat{\mathbf{O}}_{j}'\widehat{\mathbf U}_j^{\pi\prime} -  {\mathbf U}_j^{\pi\prime} ||= ||\widehat{\mathbf U}_j^{\pi}\widehat{\mathbf{O}}_{j} -  {\mathbf U}_j^{\pi} || \leq \frac{2^{3/2}r^{1/2} || \widehat{\boldsymbol{\Sigma}}^\pi - {\boldsymbol{\Sigma}}^\pi ||}{\min\{\mathbf{D}_{(i^*_j-1)(i^*_j-1)}^\pi - \mathbf{D}_{i^*_ji^*_j}^\pi,\mathbf{D}_{(i^*_{j+1}-1)(i^*_{j+1}-1)}^\pi - \mathbf{D}_{i^*_{j+1}i^*_{j+1}}^\pi\}}
\end{align} by definition of $\widehat{\mathbf{O}}_{j}$.
The right side in \eqref{eq:ithvec2} converges in outer probability to $0$ by Assumptions~\ref{the_ass}.3 and the definition of $i^*_1,...,i^*_{J^*+1}$.
Thus, we can conclude that $\| \widehat{\mathbf D}^{1/2} ((\widehat{\mathbf D}^\pi)^{1/2})^+ \mathbf{R}_\varepsilon \widehat{\mathbf U}^{\pi\prime} - {\mathbf D}^{1/2} (({\mathbf D}^\pi)^{1/2})^+ \mathbf{R} \widehat{\mathbf{R}}^\pi {\mathbf U}^{\pi\prime} \| $ converges to $0$ in outer probability and, further, that
\eqref{eq:part1} converges to $0$ in outer $Q_1$-probability by doing the same steps as in the proof of Theorem~\ref{distinct_thm}.
Next, we consider \eqref{eq:part2}.
By Fubini's theorem, we get rid of $\widehat{\mathbf R}^\pi$ by
\begin{align*}
   & \E_{\mathbf R, \pi}\left[ h\left(
   \sqrt{n}\widehat{\mathbf{U}} {\mathbf D}^{1/2} (({\mathbf D}^\pi)^{1/2})^+  \mathbf{R} \widehat{\mathbf{R}}^\pi {\mathbf U}^{\pi\prime}  \widehat{\boldsymbol{\theta}}^\pi\right)\right] 
   \\&=   \E_{\mathbf R, \pi}\left[ \int_{O(i^*_1-i^*_0)\times ... \times O(i^*_{J^*+1}-i^*_{J^*})} h\left(
   \sqrt{n}\widehat{\mathbf{U}} {\mathbf D}^{1/2} (({\mathbf D}^\pi)^{1/2})^+  \mathbf S \widehat{\mathbf{R}}^\pi {\mathbf U}^{\pi\prime}  \widehat{\boldsymbol{\theta}}^\pi\right)\;\mathrm{d}\mu(\mathbf S)\right]
   \\&=  \E_{\mathbf R, \pi}\left[\int_{O(i^*_1-i^*_0)\times ... \times O(i^*_{J^*+1}-i^*_{J^*})} h\left(
   \sqrt{n}\widehat{\mathbf{U}} {\mathbf D}^{1/2} (({\mathbf D}^\pi)^{1/2})^+  \mathbf S  {\mathbf U}^{\pi\prime}  \widehat{\boldsymbol{\theta}}^\pi\right)\;\mathrm{d}\mu(\mathbf S)\right]
   \\&=\E_{\mathbf R, \pi}\left[ h\left(
   \sqrt{n}\widehat{\mathbf{U}} {\mathbf D}^{1/2} (({\mathbf D}^\pi)^{1/2})^+  \mathbf{R} {\mathbf U}^{\pi\prime}  \widehat{\boldsymbol{\theta}}^\pi\right)\right] ,
\end{align*}
where $\mu$ denotes the product measure of the Haar measures on $O(i^*_j-i^*_{j-1}), j\in\{0,...,J^*\}$.
Then, we can use that $\mathbf x \mapsto h(\widehat{\mathbf{U}} {\mathbf D}^{1/2} (({\mathbf D}^\pi)^{1/2})^+  \mathbf x)$ only depends on $\Omega_1$ and takes values in $BL:=BL_{\max\{1,\|\mathbf D^{1/2}((\mathbf D^\pi)^{1/2})^+\|\}}(\mathbb R^r)$ to obtain
\begin{align*}
   & \sup\limits_{h\in BL_1(\mathbb R^r)} \left| 
    \E_{\mathbf{R},\pi} \left[h(\sqrt{n}\widehat{\mathbf{U}} {\mathbf D}^{1/2} (({\mathbf D}^\pi)^{1/2})^+  \mathbf{R} {\mathbf U}^{\pi\prime}  \widehat{\boldsymbol{\theta}}^\pi)\right] -
    \E_{\mathbf{R},\pi,\mathbf{Y}} \left[h(\widehat{\mathbf{U}} {\mathbf D}^{1/2} (({\mathbf D}^\pi)^{1/2})^+  \mathbf{R}  {\mathbf U}^{\pi\prime}  \mathbf{Y})\right]  \right|
    \\&\leq \sup\limits_{h\in BL} \left| 
    \E_{\mathbf{R},\pi} \left[h(\sqrt{n}\mathbf{R}  {\mathbf U}^{\pi\prime}\widehat{\boldsymbol{\theta}}^\pi)\right] -
    \E_{\mathbf{R},\pi,\mathbf{Y}} \left[h(\mathbf{R}  {\mathbf U}^{\pi\prime}\mathbf{Y})\right]  \right|
    \\&=   \sup\limits_{h\in BL} \left| \int_{O(i^*_1-i^*_0)\times ... \times O(i^*_{J^*+1}-i^*_{J^*})}
    \E_{\mathbf{R},\pi} \left[h(\sqrt{n}\mathbf S  {\mathbf U}^{\pi\prime}\widehat{\boldsymbol{\theta}}^\pi)\right]\;\mathrm{d}\mu(\mathbf S) -\int_{O(i^*_1-i^*_0)\times ... \times O(i^*_{J^*+1}-i^*_{J^*})}
    \E_{\mathbf{R},\pi,\mathbf{Y}} \left[h(\mathbf S  {\mathbf U}^{\pi\prime}\mathbf{Y})\right]\;\mathrm{d}\mu(\mathbf S)  \right|
    \\&\leq \sup\limits_{h\in BL} \left| 
    \E_{\mathbf{R},\pi} \left[h(\sqrt{n}\widehat{\boldsymbol{\theta}}^\pi)\right] -
    \E_{\mathbf{R},\pi,\mathbf{Y}} \left[h(\mathbf{Y})\right]  \right|.
\end{align*}
By Assumption~\ref{the_ass}.2, the latter converges in outer $Q_1$-probability to $0$, which provides that \eqref{eq:part2} converges in outer $Q_1$-probability to $0$.
For \eqref{eq:part3}, we write
\begin{align*}
    &\sup\limits_{h\in BL_1(\mathbb R^r)} \left| 
    \E_{\mathbf{R},\pi,\mathbf{Y}} \left[h(\widehat{\mathbf{U}} {\mathbf D}^{1/2} (({\mathbf D}^\pi)^{1/2})^+  \mathbf{R}  {\mathbf U}^{\pi\prime}  \mathbf{Y})\right] -
    \E \left[h( \mathbf{Z})\right]\right|
    \\&= \sup\limits_{h\in BL_1(\mathbb R^r)} \left| 
    \int_{O(i^*_1-i^*_0)\times ... \times O(i^*_{J^*+1}-i^*_{J^*})}
    \E_{\mathbf{R},\pi,\mathbf{Y}} \left[h(\widehat{\mathbf{U}} {\mathbf D}^{1/2} (({\mathbf D}^\pi)^{1/2})^+  \mathbf{S}  {\mathbf U}^{\pi\prime}  \mathbf{Y})\right]\;\mathrm{d}\mu(\mathbf S) -
      \E \left[h( \mathbf{Z})\right]\right|
    \\&\leq \int_{O(i^*_1-i^*_0)\times ... \times O(i^*_{J^*+1}-i^*_{J^*})}
    \sup\limits_{h\in BL_1(\mathbb R^r)} \left| 
    \E_{\mathbf{R},\pi,\mathbf{Y}} \left[h(\widehat{\mathbf{U}} {\mathbf D}^{1/2} (({\mathbf D}^\pi)^{1/2})^+  \mathbf{S}  {\mathbf U}^{\pi\prime}  \mathbf{Y})\right] -
     \E \left[h( \mathbf{Z})\right]\right|\;\mathrm{d}\mu(\mathbf S)
    \\&= 
    \sup\limits_{h\in BL_1(\mathbb R^r)} \left| 
    \int_{\mathbb R^r} h(\mathbf{y})\;\mathrm{d}\mathcal{N}_r(\mathbf{0},\widehat{\mathbf{U}} \mathbf{D}\widehat{\mathbf{U}}')(\mathbf{y}) -
    \int h( \mathbf{z})\;\mathrm{d}\mathcal{N}_r(\mathbf{0}, \boldsymbol{\Sigma})(\mathbf{z})\right|.
\end{align*}
The latter expression converges in outer $Q_1$-probability to $0$ as in the proof of Theorem~\ref{distinct_thm}.
Hence, putting \eqref{eq:part1}--\eqref{eq:part3} together, we obtain \eqref{eq:precise}.
The second equation \eqref{eq:precise2} follows in the same way as in the proof of Theorem~\ref{distinct_thm}.
\end{proof}
\newpage
\section{Further Simulation Results}\label{sec:simu}
\subsection{Univariate ANOVA}
\begin{figure}[h]
    \centering
    \includegraphics[width=0.8\linewidth,trim= 1mm 1mm 7mm 2mm,clip]{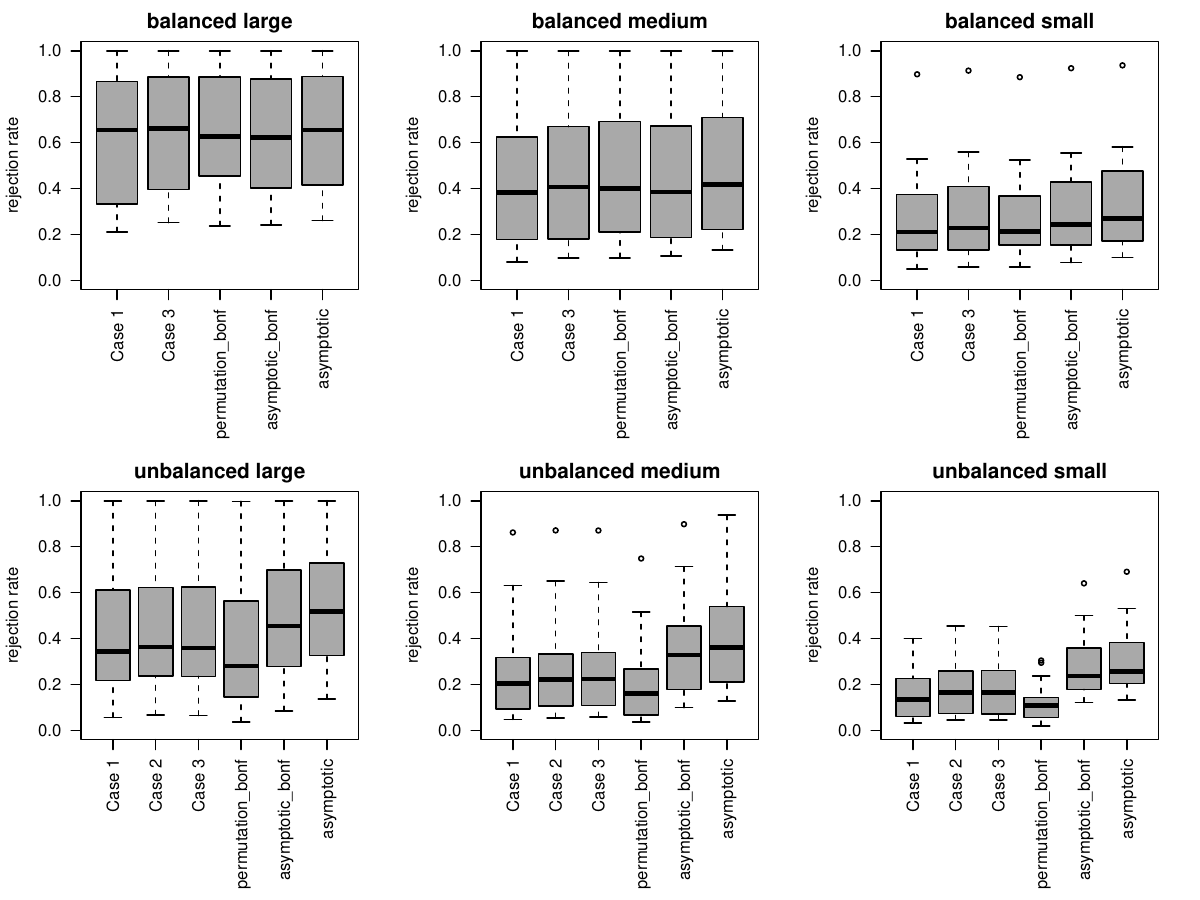}
    \caption{Empirical global power for the Dunnett-type contrast matrix}
    \label{fig:ANOVADunnpower2}
\end{figure}
\begin{figure}[h]
    \centering
    \includegraphics[width=0.8\linewidth,trim= 1mm 1mm 7mm 2mm,clip]{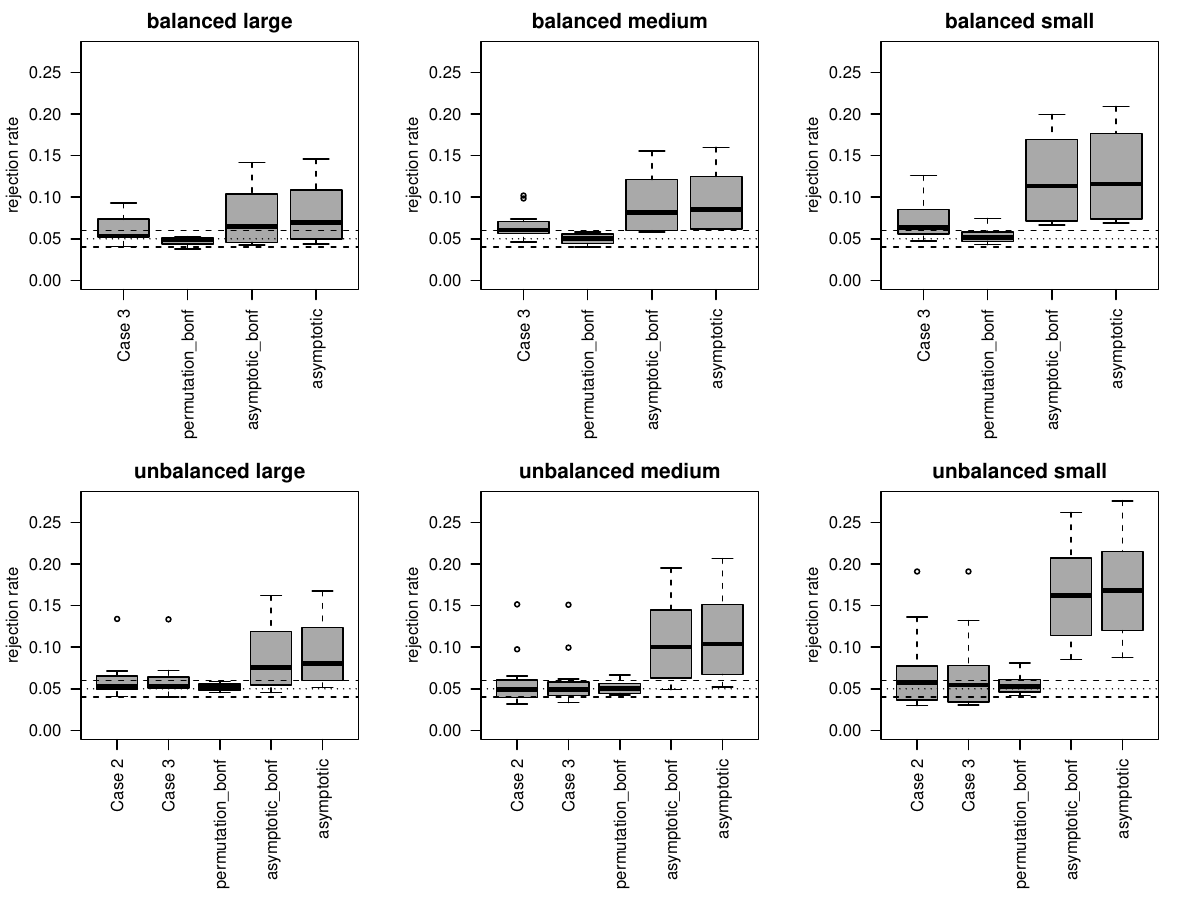}
    \caption{Empirical FWERs for the centering matrix. The dotted line represents the desired FWER of 5\% and the dashed lines represent the borders of the 95\% binomial confidence interval [0.0405, 0.06].}
    \label{fig:ANOVAGM}
\end{figure}
\begin{figure}[h]
    \centering
    \includegraphics[width=0.8\linewidth,trim= 1mm 1mm 7mm 2mm,clip]{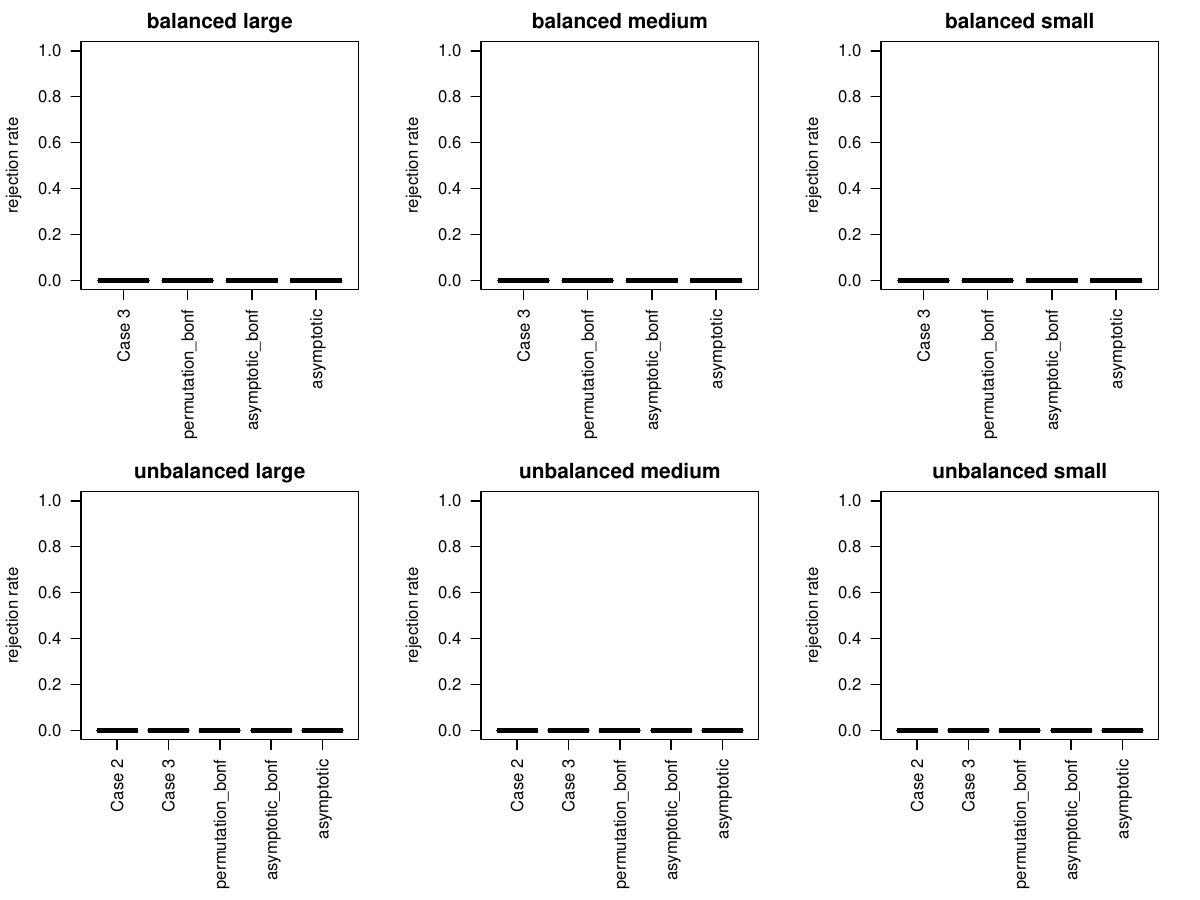}
    \caption{Empirical family-wise power for the centering matrix}
    \label{fig:ANOVAGMpower}
\end{figure}
\begin{figure}[h]
    \centering
    \includegraphics[width=0.8\linewidth,trim= 1mm 1mm 7mm 2mm,clip]{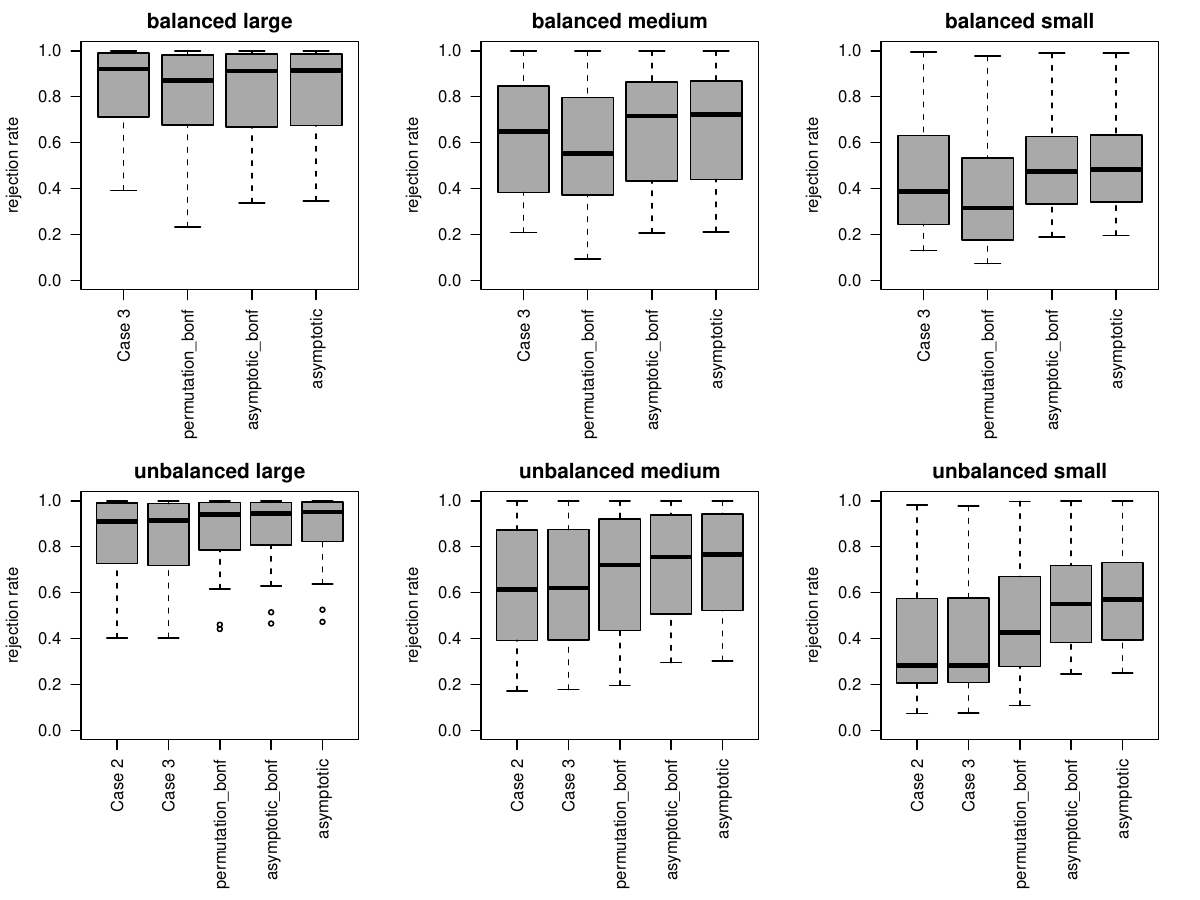}
    \caption{Empirical global power for the centering matrix}
    \label{fig:ANOVAGMpower2}
\end{figure}

\begin{figure}[h]
    \centering
    \includegraphics[width=0.8\linewidth,trim= 1mm 1mm 7mm 2mm,clip]{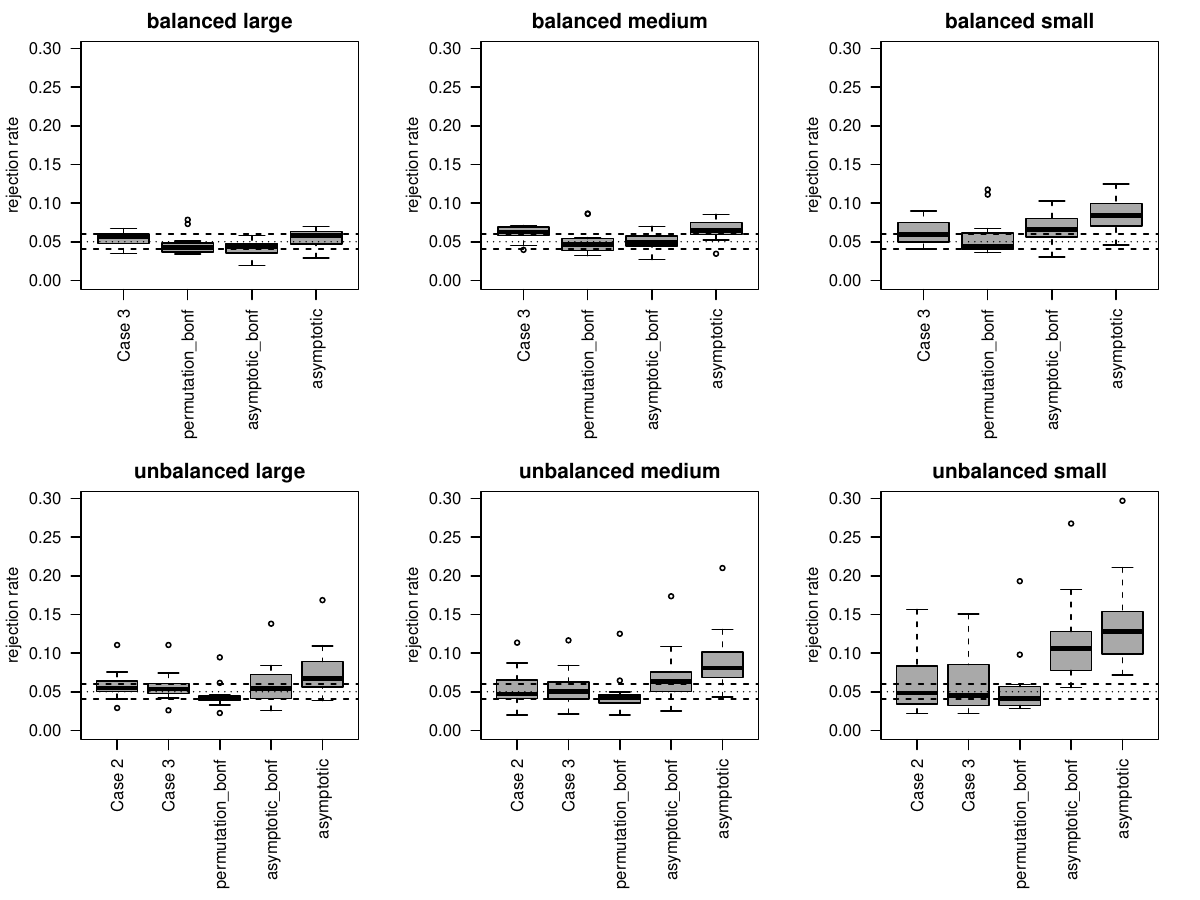}
    \caption{Empirical FWERs for the Tukey-type contrast matrix. The dotted line represents the desired FWER of 5\% and the dashed lines represent the borders of the 95\% binomial confidence interval [0.0405, 0.06].}
    \label{fig:ANOVATuk}
\end{figure}

\begin{figure}[h]
    \centering
    \includegraphics[width=0.8\linewidth,trim= 1mm 1mm 7mm 2mm,clip]{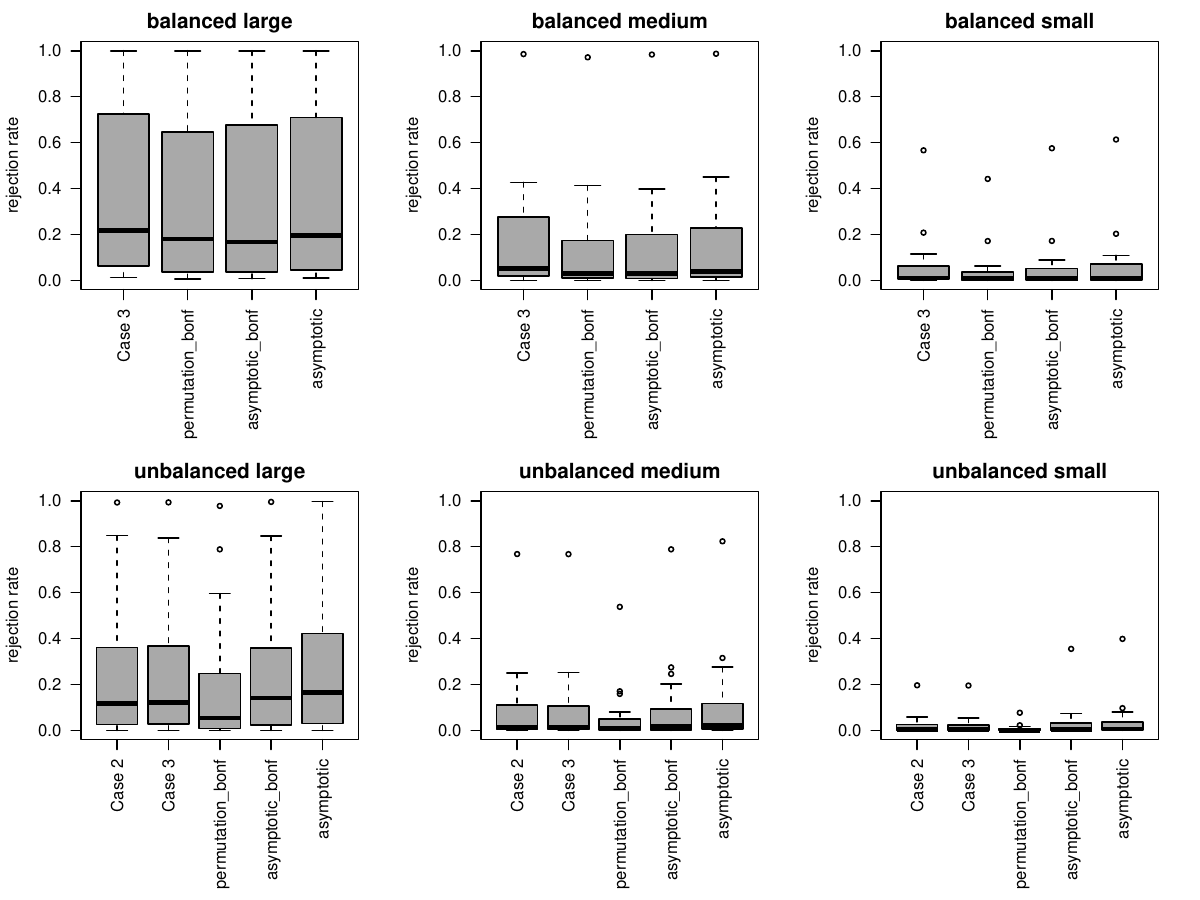}
    \caption{Empirical family-wise power for the Tukey-type contrast matrix}
    \label{fig:ANOVATukpower}
\end{figure}

\begin{figure}[h]
    \centering
    \includegraphics[width=0.8\linewidth,trim= 1mm 1mm 7mm 2mm,clip]{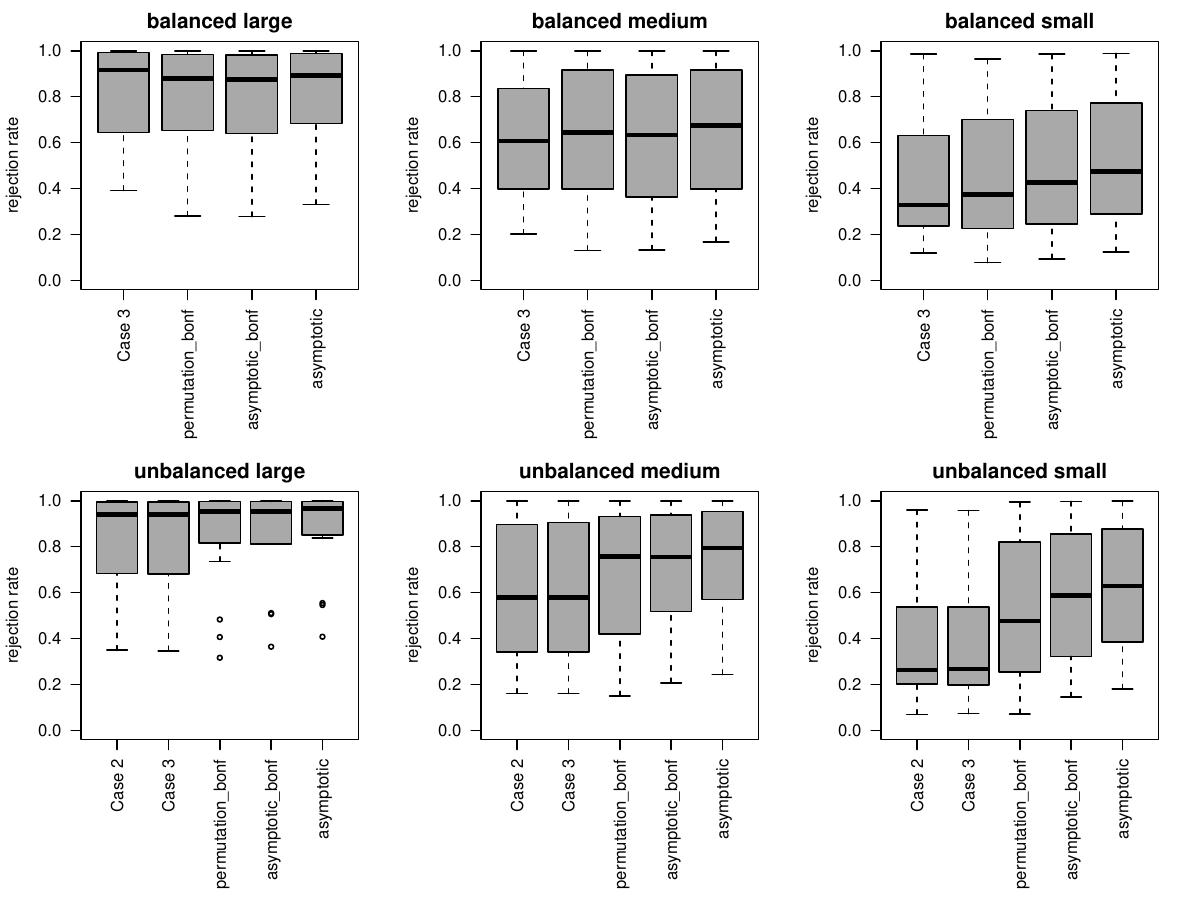}
    \caption{Empirical global power for the Tukey-type contrast matrix}
    \label{fig:ANOVATukpower2}
\end{figure}
\FloatBarrier
\subsection{Multivariate ANOVA}\FloatBarrier
\begin{table}[ht]
\centering
\begin{tabular}{|ll||rr|r|r|}
  \hline
\multicolumn{2}{|c||}{setting} & \multicolumn{2}{c|}{multiple permutation} & \multicolumn{1}{c|}{permutation} & \multicolumn{1}{c|}{asymptotic}\\
 sample size & distribution& Case 2 & Case 3 &  Bonferroni &  multiple \\ 
  \hline
  large & $Exp(1)$ & 7.25 & 6.65 & 
  \textbf{5.25} & 6.45 \\ 
medium & $Exp(1)$ & 7.10 & 7.50 & 
\textbf{5.40} & 6.65 \\ 
small & $Exp(1)$ & 6.85 & 7.35 & 
\textbf{5.95} & 8.50 \\ 
large &$\mathcal N(0,1)$& 7.55 & 7.15 & 
6.60 & 6.60 \\ 
medium &$\mathcal N(0,1)$& 8.10 & 7.95 & 
6.10 & 7.65 \\ 
small &$\mathcal N(0,1)$& 7.20 & 7.25 & 
6.10 & 9.40 \\ 
large & $t_9$ & \textbf{5.70} & \textbf{5.95} & 
\textbf{5.25} & \textbf{5.50} \\ 
medium & $t_9$ & 7.55 & 7.50 & 
\textbf{5.35} & 7.00 \\ 
small & $t_9$ & 8.15 & 7.80 & 
6.05 & 8.85 \\ 
   \hline
\end{tabular}
\caption{Empirical FWERs in \% for the ANOVA-type test statistics. The values in the 95\% binomial confidence interval [4.05, 6] are printed in bold type.}
\label{tab:MANOVAFWERATS}
\end{table}

\begin{table}[ht]
\centering
\begin{tabular}{|rll||rr|r|r|}
  \hline
\multicolumn{3}{|c||}{setting}  & \multicolumn{2}{c|}{multiple permutation} & \multicolumn{1}{c|}{permutation} & \multicolumn{1}{c|}{asymptotic}\\
 $\delta$ &sample size & distribution& Case 2 & Case 3 & Bonferroni &  multiple \\ 
  \hline
   -1.5 & large & $Exp(1)$ & 68.35 & \textbf{69.20} & 67.85 & 64.25 \\ 
  -1.5 &  medium & $Exp(1)$ & \textbf{35.15} & 35.00 & 31.55 &  31.75 \\ 
  -1.5 & small & $Exp(1)$ & 12.50 & 12.60 & 11.25 &  \textbf{13.70} \\ 
  1.5 & large & $Exp(1)$ & \textbf{66.30} & 66.25 & 62.90 &  59.95 \\ 
  1.5 & medium & $Exp(1)$  & 26.65 & \textbf{27.20} & 17.85 &  19.90 \\ 
  1.5 & small & $Exp(1)$  & 6.90 & \textbf{7.10} & 2.70 &  4.95 \\ 
  1.5 & large & $\mathcal N(0,1)$ & \textbf{67.60} & 67.00 & 65.75 &  63.00 \\ 
   1.5 & medium & $\mathcal N(0,1)$ & 26.90 & \textbf{27.80} & 22.00 & 22.90 \\ 
   1.5 & small & $\mathcal N(0,1)$ & 8.30 & 8.70 & 5.95 & \textbf{8.85} \\ 
   1.5 & large & $t_9$ & 65.15 & \textbf{65.55} & 64.10 & 61.95 \\ 
   1.5 & medium & $t_9$ & 24.55 & \textbf{24.95} & 19.85 & 21.05 \\
   1.5 & small & $t_9$ & 9.10 & \textbf{9.25} & 6.65 & 9.00 \\ 
   \hline
\end{tabular}
\caption{Empirical family-wise power in \% for the ANOVA-type test statistics. The largest values per setting are printed in bold type.}
\label{tab:MANOVAallpowerATS}
\end{table}

\begin{table}[ht]
\centering
\begin{tabular}{|rll||rr|r|r|}
  \hline
\multicolumn{3}{|c||}{setting} & \multicolumn{2}{c|}{multiple permutation} & \multicolumn{1}{c|}{permutation} & \multicolumn{1}{c|}{asymptotic}\\
 $\delta$ & sample size & distribution& Case 2 & Case 3 & Bonferroni & multiple \\ 
  \hline
-1.5 & large & $Exp(1)$  & 93.20 & \textbf{93.35} & 
91.90 & 90.10 \\ 
  -1.5 & medium & $Exp(1)$ & 70.55 & \textbf{70.75} & 
64.45 & 65.05 \\ 
  -1.5 & small & $Exp(1)$ & 45.65 & \textbf{45.85} & 
39.10 & 44.50 \\ 
  1.5 & large & $Exp(1)$ & \textbf{98.45} & 98.30 &
98.15 & 97.70 \\ 
  1.5 & medium & $Exp(1)$ & 78.20 & \textbf{78.55} & 
74.65 & 73.85 \\ 
  1.5 & small & $Exp(1)$ & \textbf{43.90} & 43.70 & 
39.15 & 43.25 \\ 
  1.5 & large &$\mathcal N(0,1)$& 95.50 & \textbf{95.70} & 
95.15 & 94.90 \\ 
  1.5 & medium &$\mathcal N(0,1)$& 72.00 & \textbf{72.30} & 
66.55 & 68.85 \\ 
  1.5 & small &$\mathcal N(0,1)$& \textbf{43.90} & 43.65 & 
36.40 & 43.60 \\ 
  1.5 & large & $t_9$ & 94.95 & \textbf{95.05} & 
94.75 & 93.80 \\ 
  1.5 & medium & $t_9$ & 70.90 & \textbf{71.45} & 
65.55 & 66.35 \\ 
  1.5 & small & $t_9$ & 41.50 & 41.05 & 
35.75 & \textbf{41.55} \\ 
   \hline
\end{tabular}
\caption{Empirical global power in \% for the ANOVA-type test statistics. The largest values per setting are printed in bold type.}
\label{tab:MANOVApowerATS}
\end{table}
\clearpage
\subsection{Comparison of RMSTs}
\begin{figure}[h]
    \centering
    \includegraphics[width=0.8\linewidth,trim= 1mm 1mm 7mm 2mm,clip]{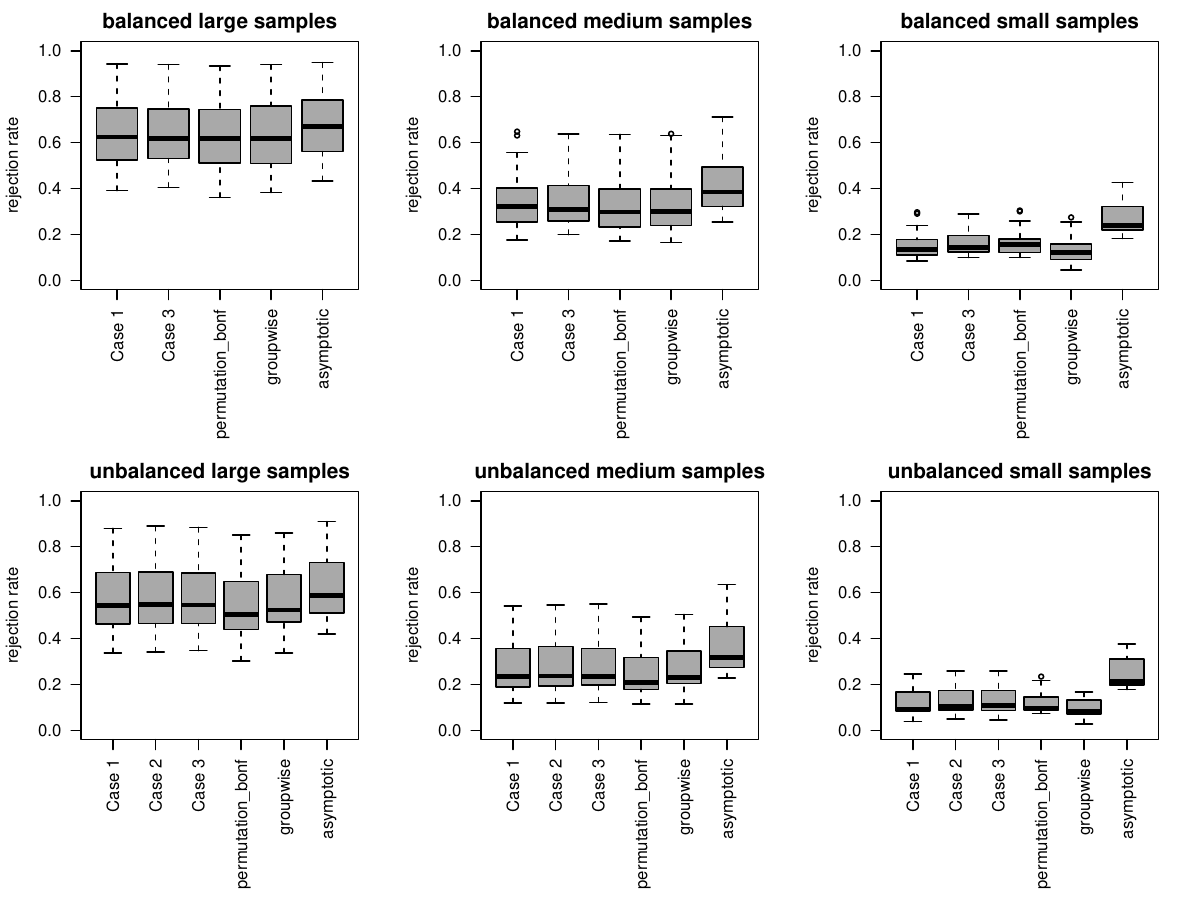}
    \caption{Empirical global power for the Dunnett-type contrast matrix}
    \label{fig:RMSTDunnpower2}
\end{figure}

\begin{figure}[h]
    \centering
    \includegraphics[width=0.8\linewidth,trim= 1mm 1mm 7mm 2mm,clip]{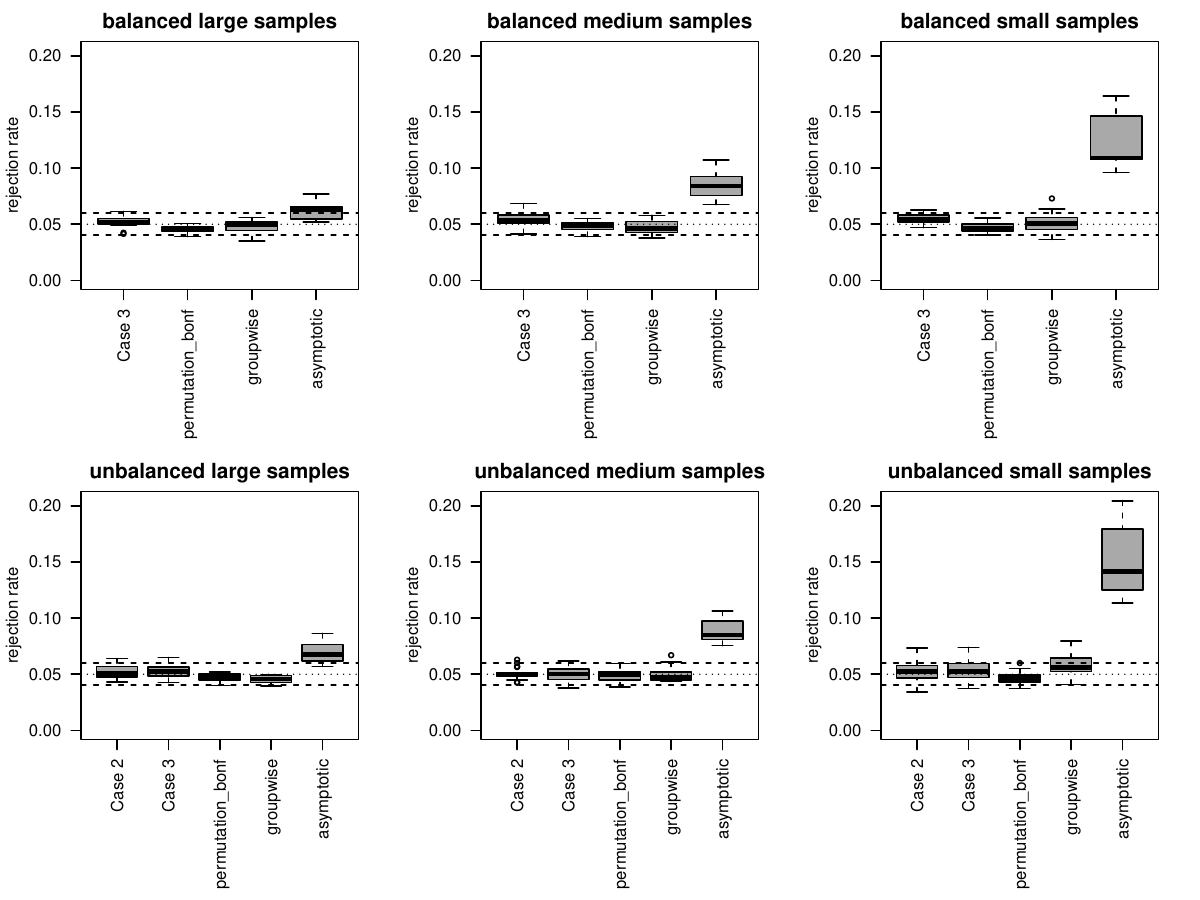}
    \caption{Empirical FWERs for the centering matrix. The dotted line represents the desired FWER of 5\% and the dashed lines represent the borders of the 95\% binomial confidence interval [0.0405, 0.06].}
    \label{fig:RMSTGM}
\end{figure}

\begin{figure}[h]
    \centering
    \includegraphics[width=0.8\linewidth,trim= 1mm 1mm 7mm 2mm,clip]{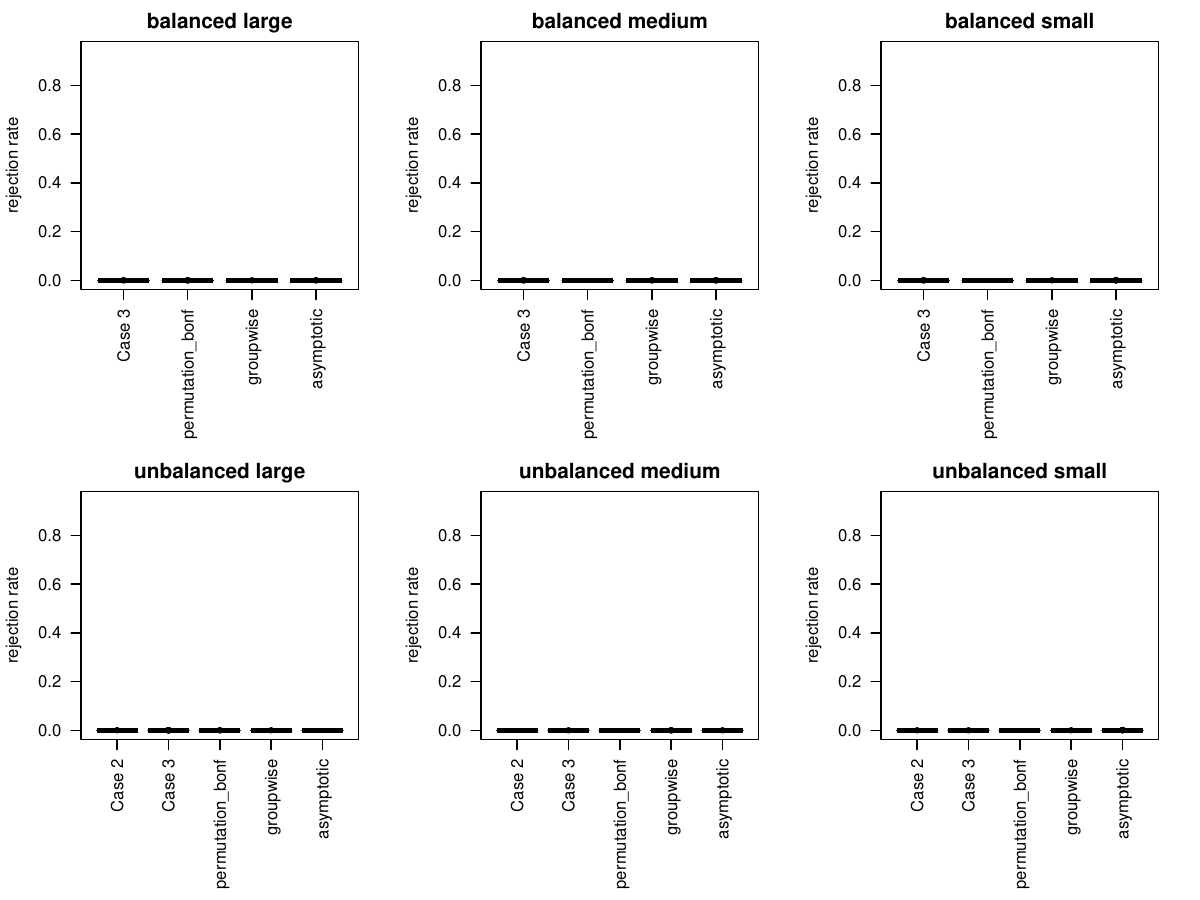}
    \caption{Empirical family-wise power for the centering matrix}
    \label{fig:RMSTGMpower}
\end{figure}

\begin{figure}[h]
    \centering
    \includegraphics[width=0.8\linewidth,trim= 1mm 1mm 7mm 2mm,clip]{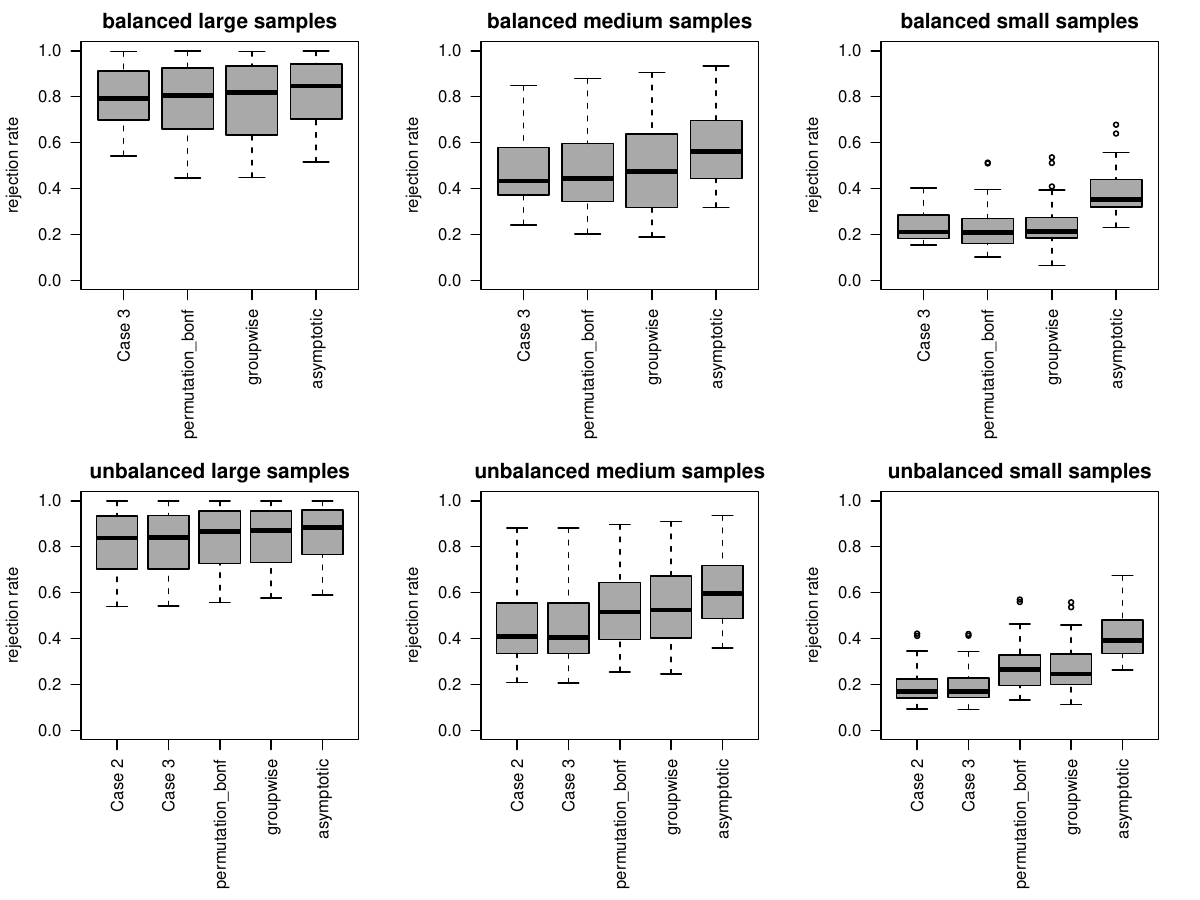}
    \caption{Empirical global power for the centering matrix}
    \label{fig:RMSTGMpower2}
\end{figure}
\begin{figure}[h]
    \centering
    \includegraphics[width=0.8\linewidth,trim= 1mm 1mm 7mm 2mm,clip]{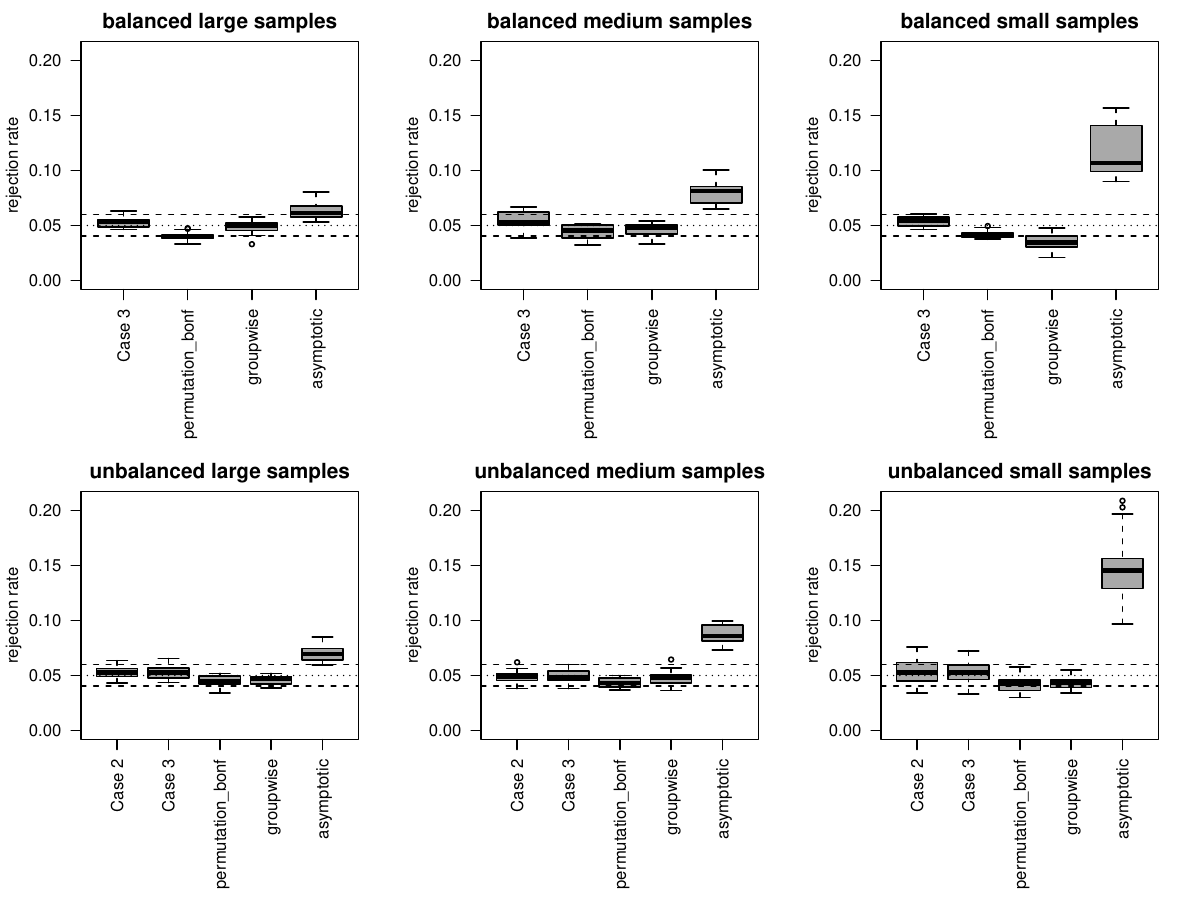}
    \caption{Empirical FWERs for the Tukey-type contrast matrix. The dotted line represents the desired FWER of 5\% and the dashed lines represent the borders of the 95\% binomial confidence interval [0.0405, 0.06].}
    \label{fig:RMSTTuk}
\end{figure}
\begin{figure}[h]
    \centering
    \includegraphics[width=0.8\linewidth,trim= 1mm 1mm 7mm 2mm,clip]{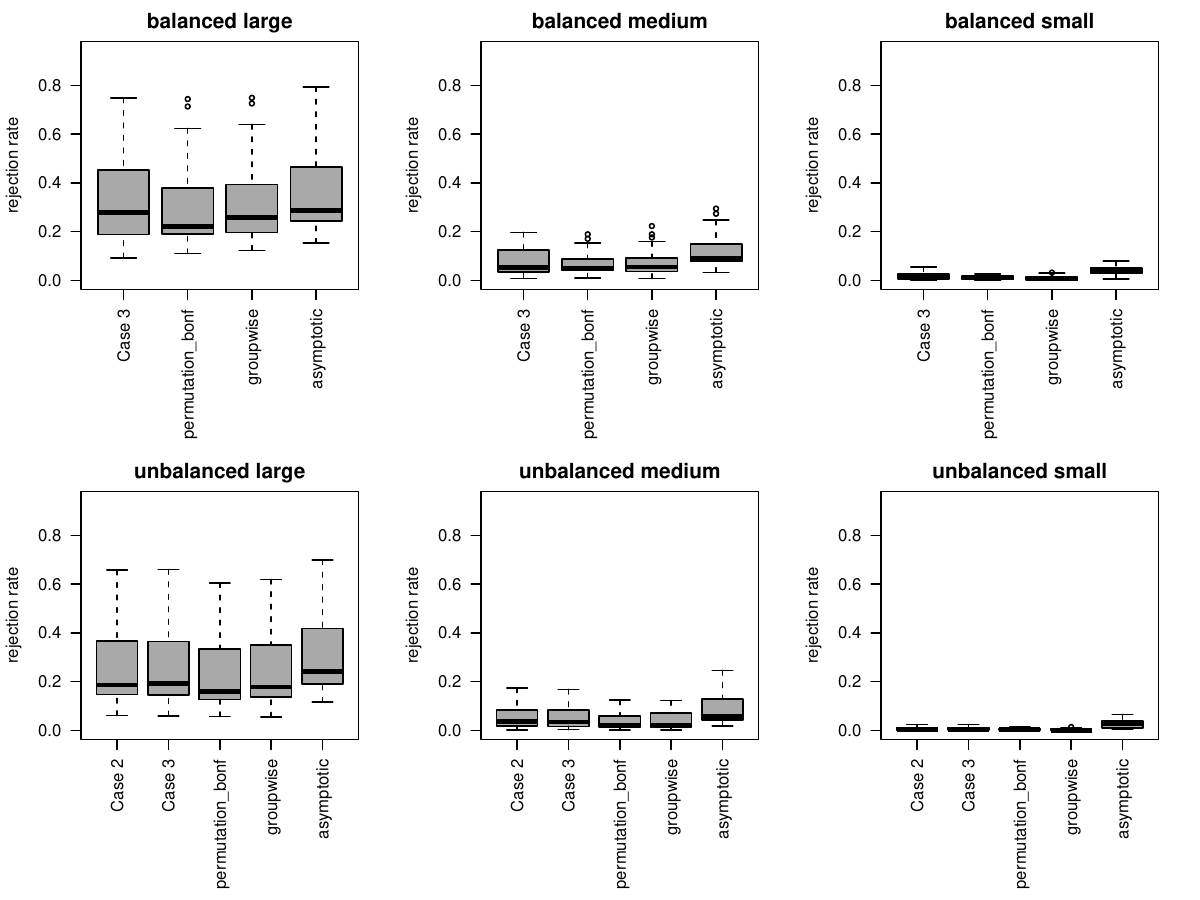}
    \caption{Empirical family-wise power for the Tukey-type contrast matrix}
    \label{fig:RMSTTukpower}
\end{figure}
\begin{figure}[h]
    \centering
    \includegraphics[width=0.8\linewidth,trim= 1mm 1mm 7mm 2mm,clip]{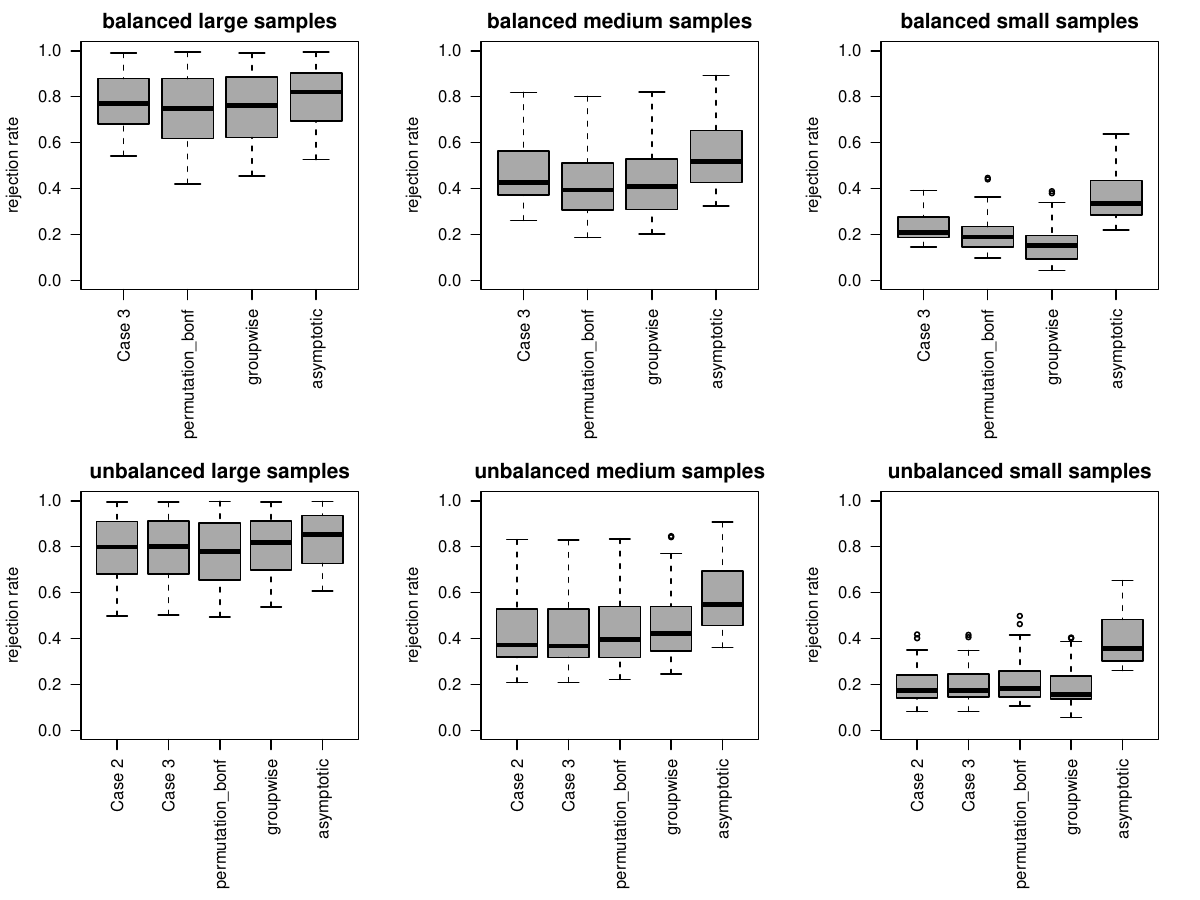}
    \caption{Empirical global power for the Tukey-type contrast matrix}
    \label{fig:RMSTTukpower2}
\end{figure}

\end{document}